\documentclass[aps,pra,twocolumn,superscriptaddress,floatfix,nofootinbib,showpacs,longbibliography]{revtex4-1}
\pdfoutput=1
\usepackage[utf8]{inputenc}  
\usepackage[T1]{fontenc}     %Output what you want e.g., é, ł, a, ü
\usepackage[british]{babel}  %Do hyphenation according to british english
%\usepackage[sc,osf]{mathpazo}\linespread{1.05}  %Palatino font
%\usepackage[scaled=0.86]{berasans} 
% URL font that go well wtih palatino
%\usepackage[scaled=1.03]{inconsolata} %Monospace font
\usepackage[colorlinks=true, citecolor=blue, urlcolor=blue]{hyperref}  %Hyperlinks (pink, green, blue)
\usepackage{graphicx} % Package to insert exteral figures
\usepackage[babel]{microtype}  %Improves text justification
\usepackage{amsmath,amssymb,amsthm,bm,amsfonts,mathrsfs,bbm} %Usefull math packages

\usepackage{xspace}  %Useful to add space in macros
\usepackage{pgf,tikz}
\usepackage{xcolor}
\usepackage{multirow}
\usepackage{array}
\usepackage{bigstrut}
\usepackage{braket}
\usepackage{color}
\usepackage{natbib}
\usepackage{multirow}
\usepackage{mathtools}
\usepackage[normalem]{ulem}
\usepackage{float}
\usepackage[caption = false]{subfig}
\usepackage{xcolor,colortbl}
\usepackage{color}
\usepackage{tikz}
% load TikZ positioning and arrow libraries
\usetikzlibrary{positioning,arrows.meta,calc}
% for the star symbol
\usepackage{amssymb}
\newcommand{\Tr}{\operatorname{Tr}}

\newcommand{\be}{\begin{equation}}
\newcommand{\ee}{\end{equation}}
\newcommand{\ba}{\begin{eqnarray}}
\newcommand{\ea}{\end{eqnarray}}
\newcommand{\ketbra}[2]{|#1\rangle \langle #2|}

\newtheorem{theorem}{Theorem}
\newtheorem{corollary}{Corollary}
\newtheorem{definition}{Definition}
\newtheorem{proposition}{Proposition}

\newtheorem{lemma}{Lemma}
%************************************
\newenvironment{theorem'}
 {\expandafter\def\expandafter\thetheorem\expandafter{\thetheorem'}\theorem}
 {\endtheorem}
 %*******************************
\usepackage{mathtools}

%\newcolumntype{C}[1]{>{\centering\arraybackslash}m{#1}}
\begin{document}

\title{%Enhanced usage of quantum batteries under the assistance of thermal environment
Lost and found charge in quantum batteries} 
\author{Debanjan Dey Sarkar}
\affiliation{Physics and Applied Mathematics Unit, Indian Statistical Institute, Kolkata, India}

\author{Mallika Mondal}
\affiliation{Physics and Applied Mathematics Unit, Indian Statistical Institute, Kolkata, India}

\author{Preeti Parashar}
\affiliation{Physics and Applied Mathematics Unit, Indian Statistical Institute, Kolkata, India}

\author{Tamal Guha}
%\email{g.tamal91@gmail.com}
%\affiliation{QICI Quantum Information and Computation Initiative, School of Computing and Data Science, The University of Hong Kong, Pokfulam Road, Hong Kong}
\affiliation{Mathematical Institute, Slovak Academy of Sciences,  Štefánikova 49, 814 73 Bratislava, Slovakia}
\affiliation{Cryptology and Security Research Unit, Indian Statistical Institute, 203 B.T. Road, Kolkata 700108, India}

\begin{abstract}
Quantum batteries are prone to losing their stored charge, when interacting with a thermal environment. However, getting a limited assistance from the thermal environment, is it possible to recover the charge back, in a reusable form?  Here, we answer this question, by involving an assistance of a suitable measurement performed on the environment. This framework resembles the structure of quantum instruments in the thermodynamic scenario. Our proposed framework involves two different kind of assistance from thermal environment - one by accessing only the thermal particle, actively participating in the interaction,  providing a \textit{weak} retrieval of charge, while the other involves local assistance from an additional quantum system purifying the thermal environment, resulting in the \textit{strong} retrieval of the lost charges. By setting the upper-bound on the amount of charges retrieved in each of these two situations, we report that their difference characterizes the amount of entanglement generated between the quantum battery and the reference system due to the thermal interaction. Finally, we exemplify the extreme instances of the difference between the weak and the strongly retrieved charges for qubit batteries.
\end{abstract}
\maketitle
\textit{Introduction.}-- The theoretical framework for extending the notion of thermodynamics in the microscopic world \cite{lenard1978thermodynamical,popescu2006entanglement,esposito2009nonequilibrium,brandao2013resource,horodecki2013fundamental,goold2016role} finds its applicative breakthrough in context of miniaturizing thermal devices \cite{linden2010small,horowitz2014quantum,klatzow2019experimental,mitchison2019quantum, barra2022quantum}. Quantum battery (QB), admittedly, the key ingredient to engineer those devices, can be considered as a tiny energy storage device \cite{alicki2013entanglement,binder2015quantum,campaioli2024colloquium}. The presence of  non-classical elements in QB further fueled ultra fast charging \cite{campaioli2017enhancing, ferraro2018high, barra2019dissipative, crescente2020ultrafast, ueki2022quantum, pokhrel2025large, shastri2025dephasing}, thereby reducing the energy dissipation \cite{adolina2019extractable, shi2022entanglement, lu2025topological}. However, the fruitful implementation of QBs in real-world thermodynamic devices requires to overcome the barriers, faced in various contexts \cite{friis2018precision, julia2020bounds, peng2021lower, tirone2024quantum, bai2024change,liu2506open,malavazi2025two}; One of the primary among which is the perturbations caused due to the thermal environment. In practice, when a QB interacts with a thermal environment, often referred to as the \textit{bath}, all the \(\alpha\)-free energies, including the conventional one (with \(\alpha\to\infty\)), stored in it dies out in time \cite{brandao2013resource}. Notably, the amount of free energy stored in a quantum battery, analogous to classical thermodynamics, often encountered as the amount of extractable work, on an average \cite{skrzypczyk2014work, morris2019assisted, guha2020thermodynamic} and hence can be quantified as the charge stored in it. Therefore, the interaction with the thermal environment renders an unavoidable charge leakage. The amount of charge contained in a QB, however, can be characterized in various ways depending upon the practical scenario of implementation. For instance, if the battery is guaranteed to be kept isolated from the environment, that is, evolves only under the energy preserving unitary, then the ergotropy \cite{Allahverdyan, castellano2024extended, halder2024operational} sufficiently quantifies the amount of charge stored in it. 

Nevertheless, by identifying the conventional free energy as a suitable charge quantifier, here we investigate the maximal amount of charge that could be retrieved in the discharged battery with a limited access to the thermal environment. The framework operationally mimics an instance of transporting a charged battery at a distant location, with possible interruption due to the thermal interaction, similar to those in \cite{guha2020thermodynamic, simonov2022work, simonov2025activation}. Within this setting, we investigate the maximal amount of charge that could be retrieved back in the quantum battery, by accessing the thermal environment, taking part in the interaction, in a limited way.

At this point, we recall that the assistance of environment in correcting quantum channels was first encountered in the work of Gregoratti and Werner \cite{gregoratti2003quantum} and subsequently found its relevance in the context of classical, as well as quantum information transmission \cite{hayden2005correcting, karumanchi2016classical, karumanchi2016quantum,pirandola2021environment, oskouei2021capacities, harraz2022quantum, harraz2023high, chowdhury2025minimal}. The limited access to environment further bears implicit implications with regard to correlated quantum channels \cite{macchiavello2002entanglement, bowen2004quantum,ball2004exploiting,banaszek2004experimental}, their interfering configuration \cite{chiribella2019quantum,guha2023quantum,lai2024quick, saha2025interference,rubino2021experimental} and also in error correction \cite{gregoratti2004quantum}. While the correction, that is perfect inversion of quantum channels, sufficiently renders tracing back the leaked charge to the QB, it is not necessary at all. More precisely, since multiple quantum systems may possess equal free energy with respect to a given thermal bath, it is indeed possible to retrieve the dissipated energy back from the environment, however without getting the exact quantum state back. 

In the context of thermodynamic charge reversal, we first model the thermal operations in terms of their isometric extensions. In the unitary picture, this involves a bipartite pure entangled state with local thermal marginals as the environment. One of these two subsystems actively participates in the thermal interaction while the other can be thought of as a purifying reference. The assistance from the thermal environment, therefore, can be characterized in two different ways: the \textit{weak} one, which involves only the interacting thermal particle and the \textit{strong} one involving both the particles. While assistance from environment, in general, may involve various causal constraints on accessibility \cite{winter2005environment,chowdhury2025minimal}, here our model involves only local measurements on the bath and the reference systems. Moreover, this minimal access to the thermal environment causes no change on the role of the thermal bath marginally. Under such a scenario we show that the strong assistance helps to find the lost charge in the QB optimally. On the other hand, with a weak assistance the charge reversal is always sub-optimal. In fact, the gap between the charge reversals under the strong and weak assistance quantifies the amount of entanglement generated by the thermal operation between the input quantum state and the reference state. 

\textit{Isometric extensions for thermal operations}--The resource theory of thermodynamics in presence of a thermal bath undertakes a singleton set of free states with  the temperature identical to that of the bath \cite{brandao2013resource}. 

To put it more formally, consider a quantum system governed by the system Hamiltonian \(H_s=\sum_i \epsilon_i \ketbra{\epsilon_i}{\epsilon_i}\), together with a bath of given Hamiltonian \(H_b=\sum_kE_k\ketbra{E_k}{E_k}\). Any thermal state corresponding to the bath Hamiltonian can be written as 
\begin{equation}\label{e1}
    \tau_{\beta}^b=\frac{e^{-\beta H_b}}{\Tr(e^{-\beta H_b})}=\frac{\sum_ke^{-\beta E_k}\ketbra{E_k}{E_k}}{\sum_ke^{-\beta E_k}}
\end{equation}
where, \(\beta=\frac 1{k_B T}\) is the inverse temperature; \(k_B\) being the Boltzmann constant and \(T\) being the temperature of the thermal bath. Accordingly, the free state of the system is identified as 
\[\tau_{\beta}^s=\frac{e^{-\beta H_s}}{\Tr(e^{-\beta H_s})}=\frac{\sum_ie^{-\beta \epsilon_i}\ketbra{\epsilon_i}{\epsilon_i}}{\sum_ie^{-\beta \epsilon_i}}.\]
In the rest of the paper, we will often refer to such states as the thermal state of inverse temperature \(\beta\).
Accordingly, any \(\beta\)-thermal operation $(\Lambda_{\beta} )$ is defined as :
\begin{equation}\label{e2}
    \Lambda_{\beta}(\rho_s)=\Tr_{b}[U_{sb}(\rho_s\otimes\tau_{\beta}^b)U_{sb}^{\dagger}],
\end{equation}
where, \(U_{sb}\) is the system-bath joint unitary, preserving the total energy, that is, 
\[[U_{sb},H_s\otimes I_b+I_s\otimes H_b]=0,\]
with \(I_x\) as the identity operator on the Hilbert space \(\mathcal{H}_x\). The input and output Hilbert spaces (\(\mathcal{H}_s\)) are taken to be same here; However, in general, they can be different, with suitable choice of the traced out Hilbert space  \cite{gour2024resources}.

Under such a thermal operation, all the \(\alpha\)-free energies, including the conventional one, with respect to the thermal bath of inverse temperature \(\beta\) are non-increasing in nature. That is,
\[F_{\beta}^{\alpha}(\rho_s)\geq F_{\beta}^{\alpha}(\Lambda_{\beta}(\rho_s)),\quad\forall\rho_s\in\mathcal{D}(\mathcal{H}_s),\] \\ where,
$\mathcal{D}(\mathcal{H}_s)$ is the set of all density operators on $\mathcal{H}_s $. The \(\alpha\)-free energy \(F_{\beta}^{\alpha}(\rho_x)=\Tr(H_x\rho_x)-\frac 1{\beta}S_{\alpha}(\rho_x)\), where \(S_{\alpha}(\rho_x)=\frac1{1-\alpha}\ln(\Tr(\rho_x^{\alpha}))\) is the \(\alpha\)-Renyi entropy of a quantum state \(\rho_x\), multiplied by \(\ln 2\). From now on,  we will replace the natural logarithm \(\ln\) by \(\log\) in the rest of the manuscript.

While the  analysis given below  holds for  most general thermal operations, for the sake of simplicity, we will drop the \textit{primed}-symbol over the output Hilbert spaces, assuming \(\mathcal{H}_{s'}\simeq\mathcal{H}_{s}\) and \(\mathcal{H}_{b'}\simeq\mathcal{H}_{b}\). Let us now consider the following bipartite purification of the thermal state of the bath, given in Eq. (\ref{e1}),
\begin{align}\label{e3}
    \ket{\phi^+_{\beta}}_{bR}:=\frac 1{\sqrt{Z}}\sum_k \exp(-\frac{\beta E_k}{2})\ket{E_k}_b\otimes\ket{E_k}_R
\end{align}
where, \(Z=\sum_k\exp(-\beta E_k)\) is the partition function.  Note that, while the basis choice for the reference (\(R\)) can be made arbitrary, considering the same Hamiltonian here we use \(\{\ket{E_k}_R\}_k\) to preserve local thermality. Moreover, among the two quantum systems \(b\) and \(R\), only the first one---being  a part of the thermal bath---actively participates in the thermal interaction. Accordingly, for any thermal operation \(\Lambda_{\beta}\), one can construct a class of suitable isometric extension (\(\mathcal{V}_{\Lambda_{\beta}}:\mathcal{H}_s\mapsto\mathcal{H}_{sbR}\)) such that, for an arbitrary quantum battery \(\ket{\psi}_s=\sum_
ka_k\ket{\epsilon_k}_s,\) with \(a_
k\in\mathbb{C}\text{ and }\sum_
k|a_k|^2=1\),
\begin{align}\label{e4}
    \mathcal{V}_{\Lambda_{\beta}}\ket{\psi}_s=(U_{sb}\otimes\mathbb{I}_R)(\ket{\psi}_s\otimes\ket{\phi^+_{\beta}}_{bR})=\sum_ka_k\ket{\phi_k^{\beta}}_{sbR}.
\end{align}
While the isometry and hence the global energy-preserving unitary \(U_{sb}\) in general, depends upon the complete structure of the system and the bath Hamiltonians, our main results do not involve them explicitly. Note that, for general quantum channels, there is complete freedom to choose the ancillary system and the corresponding global unitary. However, the action of thermal operations by definition (as in Eq. (\ref{e2})) restricts them to be a specific thermal state and an energy preserving unitary respectively. From this we conclude  that the form of Eq. (\ref{e4}) is unique up-to the local unitary operation on the reference system \(R\).

\textit{Retrieving the lost charge from thermal environment}-- We will now move to our main results, concerning the retrieval of the free energy dissipated to the thermal environment. 
Consider a d-level quantum battery \(\rho_s=\sum_{i=0}^{d-1}p_i\ketbra{\psi_i}{\psi_i}_s\) in its spectral form, where
\[\ket{\psi_i}=\sum_{k=0}^{d-1}a_k^i\ket{\epsilon_k},\text{ such that }\forall k,~a_k\in\mathbb{C}\text{ and }\sum_k|a_k|^2=1.\] 
When this quantum battery interacts with a thermal bath of inverse temperature \(\beta\), it gets correlated with both the bath and the reference quantum system. Using Eq. (\ref{e4}) we can obtain the following form of the system-bath-reference joint state
\begin{align}\label{e5}
    \sigma_{sbR}&= \sum_{i=0}^{d-1}p_i\ketbra{\phi_{\psi_i}^{\beta}}{\phi_{\psi_i}^{\beta}}_{sbR}\\\nonumber\text{with, }\ket{\phi_{\psi_i}^{\beta}}_{sbR}&=\sum_ka^i_k\ket{\phi_{k}^{\beta}}_{sbR},\text{ as in Eq. (\ref{e4})}.
\end{align}

The resource theory of thermodynamics then immediately implies
\[F_{\beta}^{\alpha}(\Tr_{bR}[\sigma_{sbR}])\leq F_{\beta}^{\alpha}(\rho_s),\]
rendering a loss in the stored charge in the quantum battery.
Our objective is now to retrieve this lost free-energetic charge in the quantum battery, by getting assistance from the outcome of the measurement performed on the bath and the reference qubits respectively (see Fig. \ref{f1}). At this point, it is important to mention that depending upon the framework of assistance and accordingly the work extraction protocols, one can define the retrieved charges in various ways. For instance, if the discharged battery is allowed to use in single shot fluctuation-free work extraction, then \(F_{\beta}^{\alpha=0}(\rho)\) is the relevant quantity of interest. However, in the present case we consider the amount of extractable work on an average from many copies of the quantum state, which is characterized by the traditional free energy \(F_{\beta}^{\alpha\to 1}(\rho)\) \cite{skrzypczyk2014work}. From now on, we will represent our specific quantity of interest \(F_{\beta}^{\alpha\to1}\), simply as \(F_{\beta}\).
\begin{figure}
    \centering
    \includegraphics[width=1.0\linewidth]{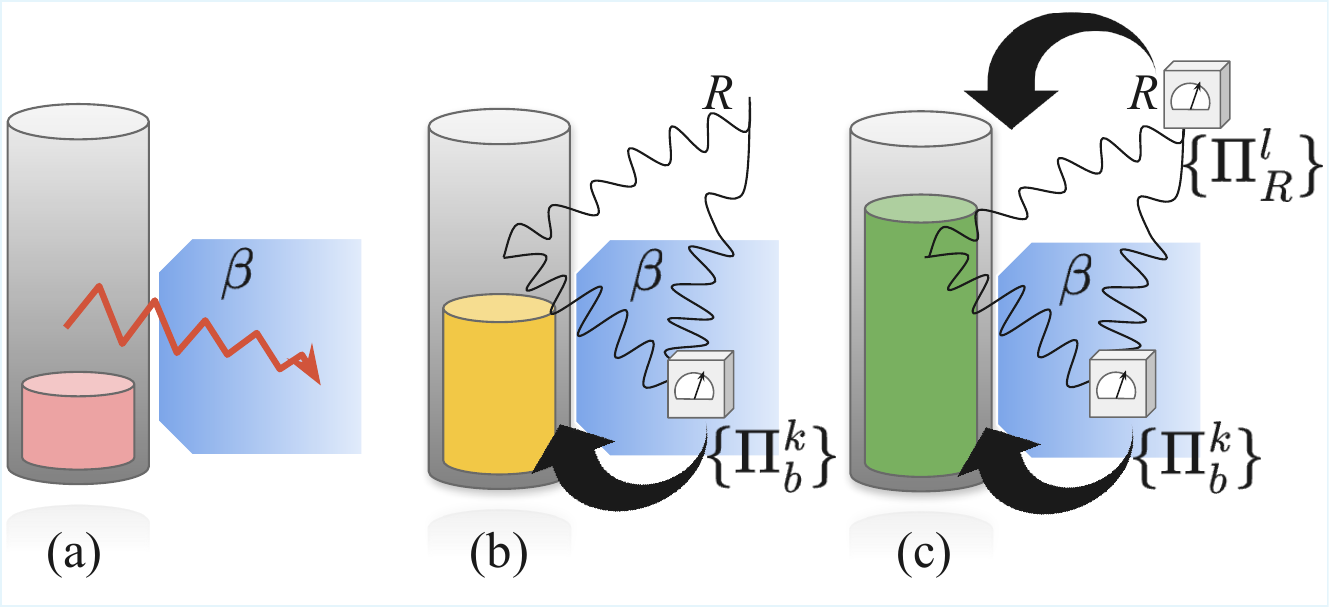}
    \caption{(Color online) \textit{A schematic diagram of charge retrieval in a quantum battery, assisted by the thermal environment}. (a) The free energetic charge stored in a quantum battery inevitably degraded due to the interaction with a thermal environment of inverse temperature \(\beta\). (b) Retrieval of the lost charge in the QB under weak assistance. Here the agent using the QB classically aided by the information extracted from the bath qudit after performing the POVM \(\{\Pi^k_b\}_k\) on it. (C) Strong retrieval of the lost charge in the QB. A suitable local POVM \(\{\Pi^k_b\otimes \Pi^l_R\}\) is performed on the bath and the reference qudit individually and informed to the battery agent. Accordingly the free-energetic charge enhances in the updated QB.}
    \label{f1}
\end{figure}

Let us now consider a positive operator valued  measurement (POVM) \(\{\Pi_b^k\}_k\) is performed on the bath quantum state. Then the resulting post-measurement battery states  denoted as \(\{\sigma_{s|k}\}_{k}\) are obtained with  probabilities $p_k$ i.e.
\begin{align}\label{e6}
\nonumber 
\sigma_{s|k}&=\frac 1{p_k} \Tr_b\left[(\mathbb{I}_s\otimes\Pi_b^k)\Tr_{R}(\sigma_{sbR})\right] ,\text{ and }\\
p_k&=\Tr\left[(\mathbb{I}_s\otimes\Pi_b^k)\Tr_{R}(\sigma_{sbR})\right],
\end{align}
We then quantify the amount of charge retrieved in the battery as 
  \[W(\rho_s,\Lambda_{\beta},\{\Pi^k_b\})=\sum_k p_k F_{\beta}(\sigma_{s|k}).\]
Moreover, we can formally quantify the maximal amount of charge retrieval under the assistance from the interacting thermal bath qudit. 
  \begin{definition}\label{d1}
      The optimal charge retrieval under weak assistance is defined as
      \[W_{weak}(\rho_s,\Lambda_{\beta})=\max_{\{\Pi^k_b\}}W(\rho_s,\Lambda_{\beta},\{\Pi^k_b\})\]
  \end{definition}
In a similar fashion, one can consider the amount of charge retrieval when both the reference and bath qudit can be locally accessed , i.e.,
\begin{align}\label{e7}
    W(\rho_s,\Lambda_{\beta},\{\Pi_b^k\otimes\Pi_R^l\})&=\sum_{k,l} p_{k,l} F_{\beta}(\sigma_{s|k,l})\\\nonumber\text{where, }p_{k,l}&=\Tr[(\mathbb{I}_s\otimes\Pi_{b}^k\otimes\Pi_{R}^l)\sigma_{sbR}],\\\nonumber\text{and }\sigma_{s|k,l}&=\frac 1{p_{k,l}}\Tr_{bR}[(\mathbb{I}_s\otimes\Pi_{b}^k\otimes\Pi_{R}^l)\sigma_{sbR}]
\end{align}
Accordingly, we can define
\begin{definition}\label{d2}
   The optimal charge retrieval under strong assistance is defined as
    \[W_{strong}(\rho_s,\Lambda_{\beta})=\max_{\{\Pi^k_b\otimes\Pi^l_R\}}W(\rho_s,\Lambda_{\beta},\{\Pi^k_b\otimes\Pi^l_R\})\].
\end{definition}
After setting the complete framework for the environment assisted charge retrieval in a quantum battery, we can now give the following proposition, which will be crucial for the subsequent results.
\begin{proposition}\label{p1}
    For any thermal operation \(\Lambda_{\beta}\) and an arbitrary input quantum state \(\rho_s\), we have the following relation:
    \[F_{\beta}(\sigma_s)\leq W_{weak}(\rho_s,\Lambda_{\beta})\leq W_{strong}(\rho_s,\Lambda_{\beta})\leq \Tr(\sigma_sH_s)\]
    where, \(\sigma_s=:\Lambda_{\beta}(\rho_s)\) and \(H_s\) is the governing system Hamiltonian, as mentioned earlier.
\end{proposition}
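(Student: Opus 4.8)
The plan is to establish the four-term chain by treating each of the three inequalities in turn; crucially, none of them requires the explicit form of the isometric extension $\mathcal{V}_{\Lambda_\beta}$ or of the energy-preserving unitary $U_{sb}$, only structural properties of $F_\beta$ together with the definitions of the conditional battery states in Eqs.~(\ref{e6})--(\ref{e7}).

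For the leftmost inequality I would note that the single-outcome bath measurement $\{\Pi_b^0=\mathbb{I}_b\}$ is admissible in Definition~\ref{d1}: for it, $p_0=1$ and $\sigma_{s|0}=\Tr_{bR}(\sigma_{sbR})=\sigma_s$, so $W(\rho_s,\Lambda_\beta,\{\mathbb{I}_b\})=F_\beta(\sigma_s)$ and the maximum defining $W_{weak}$ can only be at least as large. It is in fact worth recording the sharper statement: since $\sum_k\Pi_b^k=\mathbb{I}_b$, the outcome-averaged conditional state equals the unconditioned output, $\sum_k p_k\,\sigma_{s|k}=\sigma_s$, and $F_\beta$ is convex---being an affine reparametrisation of $\rho\mapsto D(\rho\,\|\,\tau_\beta^s)$, which is jointly convex---so $W(\rho_s,\Lambda_\beta,\{\Pi_b^k\})\ge F_\beta(\sigma_s)$ for \emph{every} bath POVM.

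For the middle inequality I would embed an arbitrary bath-only POVM $\{\Pi_b^k\}$ into the product family of Definition~\ref{d2} by appending the trivial outcome on the reference, $\{\Pi_b^k\otimes\mathbb{I}_R\}$. A short computation from Eq.~(\ref{e7}) gives $p_{k,0}=p_k$ and $\sigma_{s|k,0}=\sigma_{s|k}$, whence $W(\rho_s,\Lambda_\beta,\{\Pi_b^k\})=W(\rho_s,\Lambda_\beta,\{\Pi_b^k\otimes\mathbb{I}_R\})$; since the strong optimisation ranges over a strictly larger family of measurements, $W_{strong}\ge W_{weak}$.

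The rightmost inequality carries the actual content. I would combine two facts: first, non-negativity of the von Neumann entropy gives $F_\beta(\sigma_{s|k,l})=\Tr(H_s\sigma_{s|k,l})-\frac1\beta S(\sigma_{s|k,l})\le\Tr(H_s\sigma_{s|k,l})$ for each outcome; second, since $\sum_k\Pi_b^k=\mathbb{I}_b$ and $\sum_l\Pi_R^l=\mathbb{I}_R$, the averaged conditional battery state is again the output, $\sum_{k,l}p_{k,l}\,\sigma_{s|k,l}=\Tr_{bR}(\sigma_{sbR})=\sigma_s$. Together these give $\sum_{k,l}p_{k,l}F_\beta(\sigma_{s|k,l})\le\Tr\big(H_s\sum_{k,l}p_{k,l}\sigma_{s|k,l}\big)=\Tr(H_s\sigma_s)$, and maximising over product POVMs preserves the bound, so $W_{strong}\le\Tr(\sigma_sH_s)$. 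The step needing the most care---and hence the main difficulty---is verifying the averaging identity $\sum_{k,l}p_{k,l}\sigma_{s|k,l}=\sigma_s$ precisely against the definitions; its physical meaning is that a measurement confined to $b$ and $R$ cannot alter the expected energy of $s$, so no choice of POVM can beat $\Tr(H_s\sigma_s)$. Once this is in place the chain closes, and I do not anticipate any further technical obstacle.
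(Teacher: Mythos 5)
Your proposal is correct and follows essentially the same route as the paper's own proof: the averaging identity $\sum_k p_k\sigma_{s|k}=\sigma_s$ combined with concavity of the von Neumann entropy (equivalently, convexity of $F_\beta$) for the first inequality, embedding of any bath POVM into the product family via a trivial reference measurement for the second, and positivity of the entropy together with the same averaging identity for the third. The only cosmetic differences are your extra remark that the single-outcome measurement $\{\mathbb{I}_b\}$ already witnesses the first inequality, and your phrasing of concavity of $S$ as convexity of the free energy via relative entropy; neither changes the substance.
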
\noindent
While the proof is intuitive, for completeness, we discuss it in Appendix \ref{pp1}.

It may be noted that the true amount of retrieved free-energetic charge, both for the weak and strong assistance, should be accounted in terms of the free energy difference between the average post-measurement states of the QB and the free energy of the thermal bath of inverse temperature \(\beta\), that is, \(F_{\beta}(\tau_{\beta})\). However, being a constant quantity independent of the initial QB states and the degree of assistance,  we rescale \(F_{\beta}(\tau_{\beta})=0\), through out our analysis for the sake of simplicity.

\textit{Main Results.}-- Before moving on to our main findings, we will first briefly recall a specific entanglement measure for the bipartite quantum systems. 

Consider a bipartite quantum state \(\rho_{AB}\) and we would like to compute the minimal number of maximally entangled states required to prepare this \(\rho_{AB}\) asymptotically. This quantity is generally characterized in terms of \textit{Entanglement of Formation} \(E_f(\rho_{AB})\) \cite{wootters1998entanglement}, the regularized version  in fact is the greatest among all possible bipartite entanglement measures, under some reasonable axioms \cite{horodecki2000limits}. In particular, for any arbitrary bipartite quantum state \(\rho_{AB}\)
\begin{align}
    E_f(\rho_{AB}):=\min_{p_i,\psi_i}\sum_ip_iE_f(\ketbra{\psi_i}{\psi_i}_{AB}),
\end{align}
where, \(\rho_{AB}=\sum_i p_i\lvert{\psi_i}\rangle \langle{\psi_i}\lvert_{AB}\) and the minimization is taken over all such decompositions of \(\rho_{AB}\). Additionally, for any pure bipartite state \(\ket{\psi}_{AB}\), all the entanglement measures, including the entanglement of formation \(E_f(\ketbra{\psi}{\psi})\), coincides with the von-Neumann entropy of the reduced marginal \(S(\rho_A)=S(\rho_B)\), where \(\rho_{A(B)}=\Tr_{B(A)}(\ketbra{\psi}{\psi}_{AB})\).

Coming back to the charge retrieval of a quantum battery under the weak assistance from the environment, we can obtain the following upper-bound  in terms of the amount of entanglement generated between the battery and the reference. 
\begin{theorem}\label{t1}
    For any quantum battery \(\rho_s\) and any nontrivial thermal operation \(\Lambda_{\beta}\),
    \[W_{weak}(\rho_s,\Lambda_{\beta})\leq \Tr(\sigma_sH_s)-\frac 1{\beta}E_f(\sigma_{sR}),\]
    where, \(\sigma_{sR}=\Tr_b(\sigma_{sbR})\). Moreover, the above inequality saturates whenever \(\rho_s\) is pure. 
\end{theorem}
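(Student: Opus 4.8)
The plan is to decouple the energy contribution from the entropy contribution in the figure of merit, and then lower-bound the averaged conditional entropy by the entanglement of formation. First I would observe that for any bath POVM \(\{\Pi_b^k\}\) the battery marginal is preserved on average, \(\sum_k p_k\,\sigma_{s|k}=\sigma_s\), since \(\sum_k\Pi_b^k=\mathbb{I}_b\) and hence \(\sum_k p_k\sigma_{s|k}=\Tr_{bR}[(\mathbb{I}_s\otimes\mathbb{I}_b\otimes\mathbb{I}_R)\sigma_{sbR}]=\sigma_s\). Consequently
\[
W(\rho_s,\Lambda_{\beta},\{\Pi_b^k\})=\sum_k p_k\!\left[\Tr(H_s\sigma_{s|k})-\tfrac1{\beta}S(\sigma_{s|k})\right]=\Tr(\sigma_sH_s)-\tfrac1{\beta}\sum_k p_k S(\sigma_{s|k}),
\]
so the inequality reduces to showing \(\sum_k p_k S(\sigma_{s|k})\ge E_f(\sigma_{sR})\) for every POVM, after which one maximises over \(\{\Pi_b^k\}\).

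Next I would pass to the post-measurement states of the joint battery--reference system, \(\sigma_{sR|k}:=p_k^{-1}\Tr_b[(\mathbb{I}_s\otimes\Pi_b^k\otimes\mathbb{I}_R)\sigma_{sbR}]\), which satisfy \(\sigma_{s|k}=\Tr_R\sigma_{sR|k}\) and \(\sum_k p_k\sigma_{sR|k}=\sigma_{sR}\). The crux uses two standard facts about the entanglement of formation: (i) \(E_f(\rho_{AB})\le S(\rho_A)\) for any bipartite state, which follows by applying concavity of the von Neumann entropy to each pure-state decomposition of \(\rho_{AB}\); and (ii) convexity of \(E_f\). Property (i) with \(A=s,\ B=R\) gives \(S(\sigma_{s|k})\ge E_f(\sigma_{sR|k})\) for each outcome \(k\), and property (ii) then yields
\[
\sum_k p_k S(\sigma_{s|k})\ \ge\ \sum_k p_k E_f(\sigma_{sR|k})\ \ge\ E_f\!\Big(\sum_k p_k\sigma_{sR|k}\Big)=E_f(\sigma_{sR}).
\]
Substituting this back and taking the maximum over \(\{\Pi_b^k\}\) establishes \(W_{weak}(\rho_s,\Lambda_{\beta})\le \Tr(\sigma_sH_s)-\tfrac1{\beta}E_f(\sigma_{sR})\).

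For the saturation claim when \(\rho_s\) is pure, the state \(\sigma_{sbR}\) in Eq.~(\ref{e5}) is itself pure, so \(b\) is a purifying system for \(\sigma_{sR}\). I would then fix an optimal pure-state decomposition \(\sigma_{sR}=\sum_m q_m\ketbra{\eta_m}{\eta_m}_{sR}\) attaining \(E_f(\sigma_{sR})\), and invoke the Hughston--Jozsa--Wootters theorem to obtain a POVM on \(b\) steering \(sR\) into exactly this ensemble. For that POVM each conditional battery state \(\sigma_{s|m}=\Tr_R\ketbra{\eta_m}{\eta_m}_{sR}\) is the reduction of a pure state, so \(S(\sigma_{s|m})=E_f(\ketbra{\eta_m}{\eta_m}_{sR})\) and \(\sum_m q_m S(\sigma_{s|m})=E_f(\sigma_{sR})\); combined with the upper bound this forces equality. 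I expect the only delicate points to be the verification of property (i) above and the check that the HJW realisation is implemented by a measurement on \(b\) alone --- which it is, precisely because in the pure-input case \(b\) carries the purification of \(\sigma_{sR}\).
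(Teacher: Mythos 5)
Your proof is correct, and for the inequality it takes a genuinely different route from the paper's. The paper first establishes an exact identity: introducing a further purifying system $R'$ for the (generally mixed) tripartite state $\sigma_{sbR}$, it shows via a steering argument (Lemmas \ref{l1} and \ref{l2}, i.e.\ restriction to rank-one POVMs plus the Hughston--Jozsa--Wootters correspondence between measurements on $b$ and pure decompositions of the complementary marginal) that $W_{weak}(\rho_s,\Lambda_\beta)=E(\sigma_s)-\tfrac1\beta E_f(\sigma_{sRR'})$, and only then relaxes to the stated bound using monotonicity of $E_f$ under discarding $R'$. You instead bypass $R'$ entirely and bound the averaged conditional entropy directly for an arbitrary POVM, using only $E_f(\rho_{AB})\le S(\rho_A)$ applied to each conditional state $\sigma_{sR|k}$ together with convexity of the convex roof; this is shorter and more elementary, and it does not require reducing to rank-one POVMs. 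What you give up is the exact formula for $W_{weak}$ in terms of $E_f(\sigma_{sRR'})$, which the paper reuses later (in Theorem \ref{t3} and the qubit case studies), so your argument proves the theorem but not the stronger intermediate identity. For the saturation claim both arguments coincide in substance: when $\rho_s$ is pure, $b$ purifies $\sigma_{sR}$, and the HJW theorem (in its POVM form, as cited by the paper) lets you steer $sR$ into the optimal pure decomposition achieving $E_f(\sigma_{sR})$; your two flagged ``delicate points'' --- the bound $E_f(\rho_{AB})\le S(\rho_A)$ (immediate from concavity of entropy applied to any pure decomposition) and the realisability of the optimal ensemble by a measurement on $b$ alone --- both check out.
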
\noindent
The proof is deferred to Appendix \ref{pt1}. One could also consider an alternative instance of weak assistance, where instead of the interacting thermal particle, only the reference quantum system is allowed access. In Appendix \ref{pt1}, we have shown that a similar bound can also be derived for the retrieved charge \(W_{ref}(\rho_s,\Lambda_{\beta})\) in such a scenario. Note that the characterization of the reference-aided charge retrieval also corresponds to an operational instance, rather than a mere mathematical artifact. For instance, consider a situation where Alice and Bob shares a pure entangled \(\ket{\phi^+_{\beta}}_{AB}\) state with local thermal marginals as \(\sigma_A=\sigma_B=\tau_{\beta}\). Additionally, Bob possesses a QB with state \(\rho_s\), which inevitably interacts with \(\tau_{\beta}\) resulting in a charge leakage. Now, at a later time Alice could assist Bob to retrieve the lost charge back by performing suitable measurement in her local constituent and communicating the outcome.

Nevertheless, we now consider  the charge retrieval of the QB, under strong assistance. Note that Proposition \ref{p1} identifies an upper-bound for it. However, is this bound indeed achievable? In the following we will answer this question conclusively, the proof of which is detailed in Appendix \ref{pt2}.
\begin{theorem}\label{t2}
For any pure quantum battery \(\ket{\psi}_s\) and any nontrivial thermal operation \(\Lambda_{\beta}\),
\[W_{strong}(\ketbra{\psi}{\psi}_s,\Lambda_{\beta})= \Tr(\sigma_sH_s)=E_s,\]
 where, \(\sigma_s=\Lambda_{\beta}(\ketbra{\psi}{\psi}_s)\).
\end{theorem}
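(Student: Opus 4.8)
\textbf{Proof proposal for Theorem \ref{t2}.}

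The plan is to establish the claimed equality by pairing the upper bound from Proposition \ref{p1} (i.e.\ $W_{strong}(\ketbra{\psi}{\psi}_s,\Lambda_\beta)\le \Tr(\sigma_s H_s)$) with a matching lower bound realized by an explicit measurement strategy on the bath and reference qudits. Since $\ket{\psi}_s$ is pure, the joint state after the isometric extension, $\ket{\phi^\beta_\psi}_{sbR}=\mathcal{V}_{\Lambda_\beta}\ket{\psi}_s$ from Eq.~(\ref{e4}), is a pure tripartite state. The key idea is that performing any rank-one local projective measurement $\{\Pi^k_b\otimes\Pi^l_R\}_{k,l}$ in a \emph{product} basis on $b$ and $R$ collapses the system $s$ into a \emph{pure} state $\ket{\chi_{k,l}}_s$ for each outcome $(k,l)$. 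For a pure state the von Neumann entropy vanishes, so $F_\beta(\ketbra{\chi_{k,l}}{\chi_{k,l}}_s)=\Tr(H_s\ketbra{\chi_{k,l}}{\chi_{k,l}}_s)$, and hence the retrieved charge for this strategy is simply $\sum_{k,l}p_{k,l}\Tr(H_s\ketbra{\chi_{k,l}}{\chi_{k,l}}_s)$.

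The remaining task is to choose the product measurement bases on $b$ and $R$ so that the \emph{average} energy of the post-measurement battery states equals $\Tr(\sigma_s H_s)$, where $\sigma_s=\Tr_{bR}[\ketbra{\phi^\beta_\psi}{\phi^\beta_\psi}_{sbR}]$. The crucial observation is that the energy-preservation constraint $[U_{sb},H_s\otimes I_b+I_s\otimes H_b]=0$ together with $[U_{sb},H_s\otimes I_b+I_s\otimes H_b]$ acting block-diagonally means the output $\sigma_s$ is block-diagonal with respect to the energy eigenbasis $\{\ket{\epsilon_i}\}$ of $H_s$ — more precisely, one should verify that the total energy conservation forces the reduced state to commute with $H_s$ only after the appropriate argument, OR one uses the weaker fact that regardless of coherence, $\Tr(H_s\sigma_s)=\sum_{k,l}p_{k,l}\langle\chi_{k,l}|H_s|\chi_{k,l}\rangle$ holds for \emph{any} choice of complete product measurement, simply because $\{\Pi^k_b\otimes\Pi^l_R\}$ resolves the identity on $bR$ and partial trace is linear: $\sum_{k,l}p_{k,l}\ketbra{\chi_{k,l}}{\chi_{k,l}}_s=\Tr_{bR}[\ketbra{\phi^\beta_\psi}{\phi^\beta_\psi}_{sbR}]=\sigma_s$. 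Therefore $\sum_{k,l}p_{k,l}\Tr(H_s\ketbra{\chi_{k,l}}{\chi_{k,l}})=\Tr(H_s\sigma_s)$ \emph{automatically}, for any rank-one product measurement on $b$ and $R$, and the lower bound $W_{strong}\ge\Tr(\sigma_s H_s)$ follows at once. Combined with the Proposition \ref{p1} upper bound, this yields the equality.

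The main obstacle — and the step deserving the most care — is justifying that a rank-one \emph{product} measurement $\{\Pi^k_b\otimes\Pi^l_R\}$ on $b\otimes R$ genuinely leaves $s$ in a pure state. This requires that the global state on $sbR$ be pure, which holds since $\ket{\psi}_s$ is pure and $\mathcal{V}_{\Lambda_\beta}$ is an isometry; then for a pure tripartite state, projecting two of the three subsystems onto rank-one projectors indeed leaves the third in a (generically unnormalized) pure state, by the Schmidt decomposition argument applied twice. One must also confirm that such rank-one product POVM elements are admissible in Definition \ref{d2} — they are, being a special case of $\{\Pi^k_b\otimes\Pi^l_R\}$. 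A secondary point to address is the word ``nontrivial'': for a nontrivial $\Lambda_\beta$ the inequality $F_\beta(\sigma_s)<\Tr(\sigma_s H_s)$ is strict (equivalently $S(\sigma_s)>0$), so the theorem genuinely asserts that strong assistance recovers strictly more than the naive free energy $F_\beta(\sigma_s)$ and in fact recovers the full internal energy $E_s$; this should be flagged to make the result's content clear. No delicate optimization is actually needed — the content is that \emph{any} fine-grained local measurement on the purified environment already achieves the optimum, which is the conceptual punchline.
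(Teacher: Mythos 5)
Your proposal is correct and follows essentially the same route as the paper: both take the upper bound $W_{strong}\le\Tr(\sigma_s H_s)$ from Proposition \ref{p1} and saturate it with a rank-one product measurement on $b$ and $R$, which collapses the pure tripartite state $\ket{\phi^\beta_\psi}_{sbR}$ into pure (zero-entropy) battery states whose average energy equals $\Tr(H_s\sigma_s)$ by linearity of the partial trace. The only cosmetic difference is that the paper detours through the bound $W_{strong}\le E(\sigma_s)-\frac{1}{\beta}E_f(\sigma_{sR'})$ with $E_f(\sigma_{sR'})=0$ for pure inputs, a step your direct appeal to Proposition \ref{p1} renders unnecessary (and your first, abandoned attempt at a block-diagonality argument is indeed not needed).
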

Theorem \ref{t1} and \ref{t2} depict a sharp difference between the amounts of strong and weakly restored charge in the quantum battery, which arguably costs the amount of the inaccessible entanglement generated between the reference system and QB. More precisely, during the thermal operation,the joint state of the system, bath and reference  becomes a pure entangled state, when the initial state of QB is pure. Under  weak assistance only a part (the bath qudit) of the pure entangled state can be accessed,thereby  restricting the optimal retrieval of charge.But in the strong assistance scenario,the access to the complete pure state lifts this bound and reaches to its the optimal value i.e. the total energy of the system.Similarly, for the initially mixed state QB,the joint system-bath-reference state is a mixed entangled state and can be purified with an additional reference system (\(R'\)). This renders the sub-optimality of both \(W_{weak}(\rho_s,\Lambda_{\beta})\) and \(W_{strong}(\rho_s,\Lambda_{\beta})\) for mixed QB. At this end, we can conclude 
\begin{corollary}\label{c1}
    For any arbitrary quantum battery \(\rho_s\) and any arbitrary thermal operation \(\Lambda_{\beta}\), the difference between the charge retrieval under strong and weak assistance 
    \[\Delta(\rho_s,\Lambda_{\beta}):=W_{strong}(\rho_s,\Lambda_{\beta})-W_{weak}(\rho_s,\Lambda_{\beta})\geq \frac1{\beta}E_f(\sigma_{sR})\]
    and the inequality saturates for \(\rho_s\) is to be a pure quantum battery.
\end{corollary}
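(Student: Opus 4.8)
The plan is to first remove the common energetic part from both retrieval functionals. Since the post-measurement ensembles average back to $\sigma_s=\Lambda_\beta(\rho_s)$ in both protocols (the POVM elements on $b$, respectively on $b$ and $R$, resolve the identity), Definitions~\ref{d1} and \ref{d2} become $W_{weak}(\rho_s,\Lambda_\beta)=\Tr(\sigma_sH_s)-\frac{1}{\beta}\,\mathcal{E}_{w}$ and $W_{strong}(\rho_s,\Lambda_\beta)=\Tr(\sigma_sH_s)-\frac{1}{\beta}\,\mathcal{E}_{s}$, where $\mathcal{E}_{w}$ (resp.\ $\mathcal{E}_{s}$) is the minimal average von Neumann entropy of the battery achievable by a bath measurement (resp.\ a local bath-and-reference measurement). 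Hence $\Delta(\rho_s,\Lambda_\beta)=\frac{1}{\beta}\,(\mathcal{E}_{w}-\mathcal{E}_{s})$ and the claim is equivalent to $\mathcal{E}_{w}-\mathcal{E}_{s}\ge E_f(\sigma_{sR})$. The saturation clause is then immediate: for $\rho_s=\ketbra{\psi}{\psi}_s$ pure, Theorem~\ref{t2} gives $\mathcal{E}_{s}=0$ while the equality part of Theorem~\ref{t1} gives $\mathcal{E}_{w}=E_f(\sigma_{sR})$, so $\Delta=\frac{1}{\beta}E_f(\sigma_{sR})$.

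For a general mixed $\rho_s$, the inequality $W_{weak}\le\Tr(\sigma_sH_s)-\frac{1}{\beta}E_f(\sigma_{sR})$ is already Theorem~\ref{t1}, so what is left is a matching lower bound $W_{strong}\ge W_{weak}+\frac{1}{\beta}E_f(\sigma_{sR})$, which I would get constructively. Purify $\rho_s$ by an ancilla $R'$ and push $\ket{\Psi}_{sR'}$ through the isometry of Eq.~(\ref{e4}), obtaining a pure global state $\ket{\Phi}_{sbRR'}$ with $\sigma_{sR}=\Tr_{bR'}\ketbra{\Phi}{\Phi}$. Take a $W_{weak}$-optimal bath POVM $\{\Pi_b^k\}$, refined to rank one; it induces a decomposition $\sigma_{sR}=\sum_k p_k\,\sigma_{sR|k}$ in which every branch $\sigma_{sR|k}$ is purified by $R'$ via the post-measurement vector $\ket{\Phi_k}_{sRR'}$, and $\sum_k p_k S(\sigma_{s|k})=\mathcal{E}_{w}$. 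Now refine by a measurement of $R$ that, branch by branch, maximally reads out the classical correlations carried about the battery, i.e.\ in branch $k$ lowers the battery entropy by $C^{\leftarrow}(\sigma_{sR|k}):=\max_{\{\Pi_R^l\}}\bigl[S(\sigma_{s|k})-\sum_l q_{l|k}S(\sigma_{s|k,l})\bigr]$. This improves the recovered charge by $\frac{1}{\beta}\sum_k p_k\,C^{\leftarrow}(\sigma_{sR|k})$, so the corollary reduces to
\[ \sum_k p_k\,C^{\leftarrow}(\sigma_{sR|k})\ \ge\ E_f(\sigma_{sR}), \]
which I would establish from the Koashi--Winter relation applied to the pure branches $\ket{\Phi_k}_{sRR'}$ (trading the residual $s{:}R'$ entanglement against the $R$-side classical correlation) together with the variational formula (\ref{e8}) for $E_f$ on the resulting pure-state decomposition of $\sigma_{sR}$.

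The hard part is precisely this last inequality for a mixed battery, which is why the statement is not a mere corollary of Theorems~\ref{t1} and \ref{t2}. The naive route via Proposition~\ref{p1} breaks down, since it bounds $W_{strong}$ only from above by $\Tr(\sigma_sH_s)$ whereas for mixed $\rho_s$ the strong retrieval is strictly sub-optimal; and a convex-decomposition argument over the spectrum of $\rho_s$ is of no help either, because $W_{weak}$, $W_{strong}$ and $\rho_s\mapsto E_f(\sigma_{sR})$ are all convex in $\rho_s$, so mixing moves every estimate the wrong way. Within the constructive lower bound I expect two delicate points: (i) that a single reference POVM --- forced by the product form $\Pi_b^k\otimes\Pi_R^l$ to be chosen independently of the bath outcome --- still captures the full $C^{\leftarrow}$-gain branch by branch (or, failing that, an argument that the model of Definition~\ref{d2} tolerates classically steering the reference measurement by the bath outcome); and (ii) the entanglement-versus-classical-correlation inequality above, where the passive purifier $R'$ and the Koashi--Winter duality carry the argument.
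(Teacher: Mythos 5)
Your reduction of the claim to $\mathcal{E}_{w}-\mathcal{E}_{s}\ge E_f(\sigma_{sR})$ and your handling of the saturation clause for pure $\rho_s$ (Theorem~\ref{t2} gives $\mathcal{E}_{s}=0$, the equality case of Theorem~\ref{t1} gives $\mathcal{E}_{w}=E_f(\sigma_{sR})$) coincide exactly with how the paper arrives at the corollary; that part is correct. The genuine problem is the mixed-state inequality, which you rightly single out as the hard part but do not prove, and the route you sketch cannot close as stated. Via Koashi--Winter on the pure branches $\ket{\Phi_k}_{sRR'}$ one has $C^{\leftarrow}(\sigma_{sR|k})=S(\sigma_{s|k})-E_f(\sigma_{sR'|k})$, and since the $W_{weak}$-optimal rank-one bath POVM realises $\sum_k p_k S(\sigma_{s|k})=E_f(\sigma_{sRR'})$ in the $s|RR'$ cut (Lemma~\ref{l2}), your target inequality $\sum_k p_k\,C^{\leftarrow}(\sigma_{sR|k})\ge E_f(\sigma_{sR})$ is precisely $E_f(\sigma_{sRR'})\ge E_f(\sigma_{sR})+\sum_k p_k E_f(\sigma_{sR'|k})\ge E_f(\sigma_{sR})+E_f(\sigma_{sR'})$, i.e.\ monogamy of the entanglement of formation across the cuts $s{:}R$ and $s{:}R'$. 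That relation is false for general tripartite states (the three-qubit $W$ state violates it), so Koashi--Winter plus the variational formula~(\ref{e8}) will not carry the argument on their own; you would have to show that states $\sigma_{sbRR'}$ produced by energy-preserving unitaries acting on $\rho_s\otimes\ketbra{\phi^+_{\beta}}{\phi^+_{\beta}}_{bR}$ can never realise a violation, and nothing in the proposal addresses this. The non-adaptivity constraint you flag in point~(i) only compounds the difficulty, since it pushes $\mathcal{E}_{s}$ upward, away from the bound you need.

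In fairness, the paper offers no proof of the mixed-state inequality either: Corollary~\ref{c1} is presented as an immediate consequence of Theorems~\ref{t1} and~\ref{t2}, which settles only the pure case, and the paper's own relations (Eqs.~(\ref{a7}) and~(\ref{a9.25})) show that the general claim \emph{requires} the monogamy-type inequality above, because $\mathcal{E}_{s}\ge E_f(\sigma_{sR'})$ forces $E_f(\sigma_{sRR'})\ge E_f(\sigma_{sR})+E_f(\sigma_{sR'})$ as a necessary condition. So your diagnosis that this is not a mere corollary is sound and in fact sharper than the paper's own discussion; but as a proof the proposal remains open at exactly the step on which the paper is also silent, and the completion strategy you propose is likely a dead end without additional structural input about thermal operations.
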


The Corollary \ref{c1} sets a general lower-bound for the difference between the amount of  charge retrieval under strong and weak assistance from the environment. One may be further tempted to ask about the instances for which the gap will be maximized and  minimized. A trivial example for the latter is the energy preserving local unitary on the QB and bath qudit, that is, \(U_{sb}=U_{s}\otimes U_b\) where, \([U_s,H_s]=[U_b,H_b]=0\). Such a unitary will allow no correlations to be generated between \(\rho_s\) and \(\tau_{\beta}^b\) and hence, \(E_f(\sigma_{sR})=0\). Additionally, any measurement performed on the bath or the reference qudit will have no effect for retrieving the charge in the QB, that is, \(W_{strong}(\rho_s, \Lambda_{\beta})=W_{weak}(\rho_s, \Lambda_{\beta})=F(\sigma_s)\), saturating the first three inequalities of Proposition \ref{p1}. 

However, characterization of all the instances for the maximum and the minimum gap between strong and weak charge retrieval is in general very complicated. The difficulties involved are in two- folds: The proper form of the energy preserving unitary is parameterized by the structure of the individual Hamiltonian \(H_s\) and \(H_b\). On the other hand, depending upon the bipartite energy preserving unitary, various forms of multipartite entangled states can be generated between QB, bath and the reference qudit, for which \(E_f\) of the marginals are impossible to compute in general. Nevertheless, in the following we will characterize few of the general instances for the same, with non-degenerate \(H_s=H_b\) but involving no structure of input QB. 

\begin{theorem}\label{t3}
   For any qudit battery \(\rho_s\),\\
    (i) strong assistance reveals activation of charge retrieval if \(\Lambda_{\beta}\) is a thermal map of inverse temperature \(\beta\), i.e, \(\Lambda_{\beta}(\rho_s)=\tau_{\beta},~\forall \rho_s\).\\
    (ii) \(\Delta(\rho_s,\Lambda_{\beta})=0\) if the interacting thermal environment is at absolute zero temperature.  
\end{theorem}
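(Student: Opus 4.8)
The plan is to treat the two items separately, in each case exploiting how the isometric‑extension picture of Eqs.~(\ref{e3})--(\ref{e5}) degenerates under the stated hypotheses.

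\emph{Item (i).} I would first realise the thermal map concretely. Since $H_s=H_b$ is non‑degenerate, the swap $U_{sb}=\mathrm{SWAP}$ commutes with $H_s\otimes I_b+I_s\otimes H_b$ and implements $\Lambda_\beta(\rho_s)=\tau_\beta$ for every $\rho_s$. Substituting this $U_{sb}$ into Eq.~(\ref{e4}) together with the explicit purification (\ref{e3}) and rearranging the tensor factors gives $\mathcal V_{\Lambda_\beta}\ket{\psi}_s=\ket{\phi^+_\beta}_{sR}\otimes\ket{\psi}_b$, so that from Eq.~(\ref{e5}) one obtains, for an arbitrary spectral decomposition $\rho_s=\sum_i p_i\proj{\psi_i}$, the product form $\sigma_{sbR}=\proj{\phi^+_\beta}_{sR}\otimes\rho_b$ with $\rho_b$ a copy of $\rho_s$; in particular $\sigma_{sR}=\proj{\phi^+_\beta}_{sR}$ is pure and independent of $\rho_s$, and $\sigma_s=\tau^s_\beta$. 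Then the two values read off immediately: on one side $\sigma_{sb}=\Tr_R\sigma_{sbR}=\tau^s_\beta\otimes\rho_b$ is a product, so every bath POVM in Eq.~(\ref{e6}) leaves $\sigma_{s|k}=\tau^s_\beta$, giving $W_{weak}(\rho_s,\Lambda_\beta)=F_\beta(\tau^s_\beta)=F_\beta(\sigma_s)$ (weak assistance is inert, saturating the lower bound of Proposition~\ref{p1}); on the other side, measuring $R$ in the energy eigenbasis $\{\proj{E_l}_R\}$ collapses $\sigma_{sR}$ onto the pure eigenstates $\proj{E_l}_s$ with weights $e^{-\beta E_l}/Z$, so the retrieved charge is $\sum_l (e^{-\beta E_l}/Z)\,E_l=\Tr(\sigma_s H_s)$, which matches the upper ceiling of Proposition~\ref{p1}; hence $W_{strong}(\rho_s,\Lambda_\beta)=\Tr(\sigma_s H_s)$. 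The activation claim then follows from $W_{strong}-W_{weak}=\Tr(\sigma_s H_s)-F_\beta(\sigma_s)=\tfrac1\beta S(\tau^s_\beta)>0$.

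\emph{Item (ii).} The key observation is that at $T=0$ the bath marginal $\tau^b_\beta$ tends to the (non‑degenerate) ground‑state projector $\proj{E_0}_b$, so the purification (\ref{e3}) collapses to the \emph{product} vector $\ket{\phi^+_\beta}_{bR}=\ket{E_0}_b\otimes\ket{E_0}_R$ and the reference decouples entirely from the dynamics: by Eq.~(\ref{e4}), $\mathcal V_{\Lambda_\beta}\ket{\psi}_s=\bigl(U_{sb}(\ket{\psi}_s\otimes\ket{E_0}_b)\bigr)\otimes\ket{E_0}_R$, hence $\sigma_{sbR}=\sigma_{sb}\otimes\proj{E_0}_R$ for every input $\rho_s$, where $\sigma_{sb}=U_{sb}(\rho_s\otimes\proj{E_0}_b)U_{sb}^\dagger$. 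I would then check that for any product POVM $\{\Pi_b^k\otimes\Pi_R^l\}$ the $R$‑part contributes only a classical factor $q_l=\bra{E_0}\Pi_R^l\ket{E_0}$ with $\sum_l q_l=1$, so that comparing Eqs.~(\ref{e6}) and (\ref{e7}) one gets $p_{k,l}=p_k q_l$ and $\sigma_{s|k,l}=\sigma_{s|k}$, whence $W(\rho_s,\Lambda_\beta,\{\Pi_b^k\otimes\Pi_R^l\})=W(\rho_s,\Lambda_\beta,\{\Pi_b^k\})$. Maximising over product POVMs therefore gives $W_{strong}(\rho_s,\Lambda_\beta)=W_{weak}(\rho_s,\Lambda_\beta)$, i.e. $\Delta(\rho_s,\Lambda_\beta)=0$.

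The computations are all one‑liners, so the ``obstacle'' is one of care rather than difficulty. In (i), because $\rho_s$ is allowed to be mixed, Theorem~\ref{t2} cannot be invoked directly, so $W_{strong}=\Tr(\sigma_s H_s)$ must be pinned down by the explicit reference‑energy measurement together with the upper bound of Proposition~\ref{p1}; and one must make explicit that ``activation'' here means the strict separation $W_{weak}=F_\beta(\sigma_s)<W_{strong}$, which rests on the elementary identity $\Tr(\sigma_s H_s)-F_\beta(\sigma_s)=\tfrac1\beta S(\sigma_s)>0$ for the mixed state $\sigma_s=\tau^s_\beta$. The rest is bookkeeping with the product structures of $\sigma_{sbR}$, which appear precisely because the thermal purification (\ref{e3}) is a swap image in (i) and a product vector in (ii).
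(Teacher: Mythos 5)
Your part (ii) is essentially the paper's own argument: at $\beta\to\infty$ the (non-degenerate) purification collapses to the product vector $\ket{E_0}_b\otimes\ket{E_0}_R$, the reference decouples, and the two retrieval quantities coincide. Your explicit factorization $p_{k,l}=p_kq_l$, $\sigma_{s|k,l}=\sigma_{s|k}$ is, if anything, cleaner than the paper's appeal to data processing, and your reading of ``activation'' in (i) as $W_{weak}=F_\beta(\tau_\beta^s)$ versus $W_{strong}=\Tr(\sigma_sH_s)$, separated by $\tfrac1\beta S(\tau_\beta^s)>0$, matches the paper's conclusion.

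The gap is in part (i). The quantities $W_{weak}$ and $W_{strong}$ are not functions of the channel $\Lambda_\beta$ alone: they are computed from the tripartite state $\sigma_{sbR}$, which depends on the particular energy-preserving unitary $U_{sb}$ chosen in the Stinespring realization of Eq.~(\ref{e2}). The hypothesis of the theorem, however, is purely channel-level ($\Lambda_\beta(\rho_s)=\tau_\beta$ for all $\rho_s$). You exhibit one realization --- the SWAP --- and compute with it; this proves the conclusion for that dilation but says nothing about a different energy-preserving $U_{sb}$ that also induces the thermal map. Note that even for non-degenerate $H_s=H_b$ the total Hamiltonian $H_s\otimes I_b+I_s\otimes H_b$ retains degenerate eigenspaces (spanned by all $\ket{E_i}_s\ket{E_j}_b$ with fixed $E_i+E_j$), so a priori there is room for non-SWAP energy-conserving unitaries, and one must rule out that some of them realize the thermal map while yielding different weak/strong retrieval values. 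Closing this is precisely the content of the paper's Lemma~\ref{l4}, built on the canonical form of Lemma~\ref{l3} for multipartite pure states with iso-spectral marginals: it shows that the SWAP is the \emph{unique} energy-conserving dilation of a thermal map up to local energy-preserving unitaries, after which your computation applies verbatim. Without that uniqueness step your argument establishes the claim only for a single, conveniently chosen realization rather than for every thermal map.
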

\noindent
The proof is detailed in Appendix \ref{pt3}. Finally, we complete our analysis with an illustrative example of qubit battery in Appendix \ref{q2b} and \ref{exmpl}.

\textit{Discussions.}--- We have explored the possibilities of recycling an arbitrary dimensional quantum battery by retrieving the charge it lost due to spontaneous thermal interactions. The scenario in it's weakest variant involved a minimal classical assistance from the bath particle. However this is not the optimal amount, in general. On the other hand, when an additional quantum system that purifies the thermal bath, takes active participation in the retrieving process, the amount increases further and achieves optimality whenever the initial battery is a pure quantum state.  While our analysis involved charge as the free energy stored in a quantum state, the assistance of an additional system, to extract ergotropy from the target one has also been encountered in various contexts. For instance, the classical assistance from the ancillary system leads to the notion of daemonic ergotropy \cite{Francica2017}, which eventually identifies the presence of one-way discord in bipartite pure quantum states. %On the other hand, in an implicit form, the assistance of the measurement outcome, performed on the additional correlated system results in successful transportation of ergotropy via controlled configurations of thermal channels \cite{simonov2022work, simonov2025activation}.
Finally we have shown that the difference between the strong and the weakly assisted retrieved charge quantifies the amount of distant entanglement generated by the thermal operation between the quantum battery and the reference system. In this regard, a recent result is worth mentioning \cite{biswas2025quantum}, where incompatible measurements on the ancillary system causes additional free energy extraction from the thermodynamic systems. Moreover, there the steerability of quantum correlations has been identified as a sole reason behind such an advantageous free energy extraction. In our premises, alternatively, the difference between the retrieval of the lost charge, under various degrees of assistance, identifies with a fundamental entity, viz., entanglement of formation for multipartite quantum systems.

Our results can also be interpreted as an initiation towards the framework of quantum thermodynamics, involving the quantum instruments \cite{heinosaari2011mathematical, leppajarvi2021postprocessing, ghai2025instrument, sau2026demultiplexing}. In particular, the measurement outcomes on the thermal environment could be interpreted as the classical indices; while the post-measurement states of the QB as the quantum outputs corresponding to the instrument. Accordingly, our findings reveal that not all the free operations, i.e., the thermal operations, are free whenever their instrument realizations are concerned. This further paves the way to characterize the true set of free instruments, for which all the \(\alpha\)-free energies of a quantum state remain non-increasing even when conditioned over the classical outcomes.

 %Remarkably, our results also suggest that the inaccessibility of the reference system and hence the entanglement, lock the potential to retrieve optimal charge in a quantum battery, when the assistance is limited to the bath system only.  

\textit{Acknowledgments.}--- DDS, MM and TG would like to acknowledge fruitful discussions with Snehasish Roy Chowdhury. TG is supported by the Slovak Research and Development Agency through Grant no. APVV-22-0570, the Scientific Grant Agency of the Ministry of Education, Slovak Republic through Grant no.
VEGA 2/0128/24 and the Štefan Schwarz Support Fund 2025/OV1/046 by the Slovak Academy of Sciences.
\bibliography{References}

%merlin.mbs apsrev4-1.bst 2010-07-25 4.21a (PWD, AO, DPC) hacked
%Control: key (0)
%Control: author (0) dotless jnrlst
%Control: editor formatted (1) identically to author
%Control: production of article title (0) allowed
%Control: page (1) range
%Control: year (0) verbatim
%Control: production of eprint (0) enabled
\begin{thebibliography}{72}%
\makeatletter
\providecommand \@ifxundefined [1]{%
 \@ifx{#1\undefined}
}%
\providecommand \@ifnum [1]{%
 \ifnum #1\expandafter \@firstoftwo
 \else \expandafter \@secondoftwo
 \fi
}%
\providecommand \@ifx [1]{%
 \ifx #1\expandafter \@firstoftwo
 \else \expandafter \@secondoftwo
 \fi
}%
\providecommand \natexlab [1]{#1}%
\providecommand \enquote  [1]{``#1''}%
\providecommand \bibnamefont  [1]{#1}%
\providecommand \bibfnamefont [1]{#1}%
\providecommand \citenamefont [1]{#1}%
\providecommand \href@noop [0]{\@secondoftwo}%
\providecommand \href [0]{\begingroup \@sanitize@url \@href}%
\providecommand \@href[1]{\@@startlink{#1}\@@href}%
\providecommand \@@href[1]{\endgroup#1\@@endlink}%
\providecommand \@sanitize@url [0]{\catcode `\\12\catcode `\$12\catcode `\&12\catcode `\#12\catcode `\^12\catcode `\_12\catcode `\%12\relax}%
\providecommand \@@startlink[1]{}%
\providecommand \@@endlink[0]{}%
\providecommand \url  [0]{\begingroup\@sanitize@url \@url }%
\providecommand \@url [1]{\endgroup\@href {#1}{\urlprefix }}%
\providecommand \urlprefix  [0]{URL }%
\providecommand \Eprint [0]{\href }%
\providecommand \doibase [0]{http://dx.doi.org/}%
\providecommand \selectlanguage [0]{\@gobble}%
\providecommand \bibinfo  [0]{\@secondoftwo}%
\providecommand \bibfield  [0]{\@secondoftwo}%
\providecommand \translation [1]{[#1]}%
\providecommand \BibitemOpen [0]{}%
\providecommand \bibitemStop [0]{}%
\providecommand \bibitemNoStop [0]{.\EOS\space}%
\providecommand \EOS [0]{\spacefactor3000\relax}%
\providecommand \BibitemShut  [1]{\csname bibitem#1\endcsname}%
\let\auto@bib@innerbib\@empty
%</preamble>
\bibitem [{\citenamefont {Lenard}(1978)}]{lenard1978thermodynamical}%
  \BibitemOpen
  \bibfield  {author} {\bibinfo {author} {\bibfnamefont {Andrew}\ \bibnamefont {Lenard}},\ }\bibfield  {title} {\enquote {\bibinfo {title} {Thermodynamical proof of the gibbs formula for elementary quantum systems},}\ }\href {https://doi.org/10.1007/BF01011769} {\bibfield  {journal} {\bibinfo  {journal} {Journal of Statistical Physics}\ }\textbf {\bibinfo {volume} {19}},\ \bibinfo {pages} {575--586} (\bibinfo {year} {1978})}\BibitemShut {NoStop}%
\bibitem [{\citenamefont {Popescu}\ \emph {et~al.}(2006)\citenamefont {Popescu}, \citenamefont {Short},\ and\ \citenamefont {Winter}}]{popescu2006entanglement}%
  \BibitemOpen
  \bibfield  {author} {\bibinfo {author} {\bibfnamefont {Sandu}\ \bibnamefont {Popescu}}, \bibinfo {author} {\bibfnamefont {Anthony~J}\ \bibnamefont {Short}}, \ and\ \bibinfo {author} {\bibfnamefont {Andreas}\ \bibnamefont {Winter}},\ }\bibfield  {title} {\enquote {\bibinfo {title} {Entanglement and the foundations of statistical mechanics},}\ }\href {https://doi.org/10.1038/nphys444} {\bibfield  {journal} {\bibinfo  {journal} {Nature Physics}\ }\textbf {\bibinfo {volume} {2}},\ \bibinfo {pages} {754--758} (\bibinfo {year} {2006})}\BibitemShut {NoStop}%
\bibitem [{\citenamefont {Esposito}\ \emph {et~al.}(2009)\citenamefont {Esposito}, \citenamefont {Harbola},\ and\ \citenamefont {Mukamel}}]{esposito2009nonequilibrium}%
  \BibitemOpen
  \bibfield  {author} {\bibinfo {author} {\bibfnamefont {Massimiliano}\ \bibnamefont {Esposito}}, \bibinfo {author} {\bibfnamefont {Upendra}\ \bibnamefont {Harbola}}, \ and\ \bibinfo {author} {\bibfnamefont {Shaul}\ \bibnamefont {Mukamel}},\ }\bibfield  {title} {\enquote {\bibinfo {title} {Nonequilibrium fluctuations, fluctuation theorems, and counting statistics in quantum systems},}\ }\href {https://doi.org/10.1103/RevModPhys.81.1665} {\bibfield  {journal} {\bibinfo  {journal} {Reviews of modern physics}\ }\textbf {\bibinfo {volume} {81}},\ \bibinfo {pages} {1665--1702} (\bibinfo {year} {2009})}\BibitemShut {NoStop}%
\bibitem [{\citenamefont {Brandao}\ \emph {et~al.}(2013)\citenamefont {Brandao}, \citenamefont {Horodecki}, \citenamefont {Oppenheim}, \citenamefont {Renes},\ and\ \citenamefont {Spekkens}}]{brandao2013resource}%
  \BibitemOpen
  \bibfield  {author} {\bibinfo {author} {\bibfnamefont {Fernando~GSL}\ \bibnamefont {Brandao}}, \bibinfo {author} {\bibfnamefont {Micha{\l}}\ \bibnamefont {Horodecki}}, \bibinfo {author} {\bibfnamefont {Jonathan}\ \bibnamefont {Oppenheim}}, \bibinfo {author} {\bibfnamefont {Joseph~M}\ \bibnamefont {Renes}}, \ and\ \bibinfo {author} {\bibfnamefont {Robert~W}\ \bibnamefont {Spekkens}},\ }\bibfield  {title} {\enquote {\bibinfo {title} {Resource theory of quantum states out of thermal equilibrium},}\ }\href {https://doi.org/10.1103/PhysRevLett.111.250404} {\bibfield  {journal} {\bibinfo  {journal} {Physical Review Letters}\ }\textbf {\bibinfo {volume} {111}},\ \bibinfo {pages} {250404} (\bibinfo {year} {2013})}\BibitemShut {NoStop}%
\bibitem [{\citenamefont {Horodecki}\ and\ \citenamefont {Oppenheim}(2013)}]{horodecki2013fundamental}%
  \BibitemOpen
  \bibfield  {author} {\bibinfo {author} {\bibfnamefont {Micha{\l}}\ \bibnamefont {Horodecki}}\ and\ \bibinfo {author} {\bibfnamefont {Jonathan}\ \bibnamefont {Oppenheim}},\ }\bibfield  {title} {\enquote {\bibinfo {title} {Fundamental limitations for quantum and nanoscale thermodynamics},}\ }\href {https://doi.org/10.1038/ncomms3059} {\bibfield  {journal} {\bibinfo  {journal} {Nature communications}\ }\textbf {\bibinfo {volume} {4}},\ \bibinfo {pages} {2059} (\bibinfo {year} {2013})}\BibitemShut {NoStop}%
\bibitem [{\citenamefont {Goold}\ \emph {et~al.}(2016)\citenamefont {Goold}, \citenamefont {Huber}, \citenamefont {Riera}, \citenamefont {Del~Rio},\ and\ \citenamefont {Skrzypczyk}}]{goold2016role}%
  \BibitemOpen
  \bibfield  {author} {\bibinfo {author} {\bibfnamefont {John}\ \bibnamefont {Goold}}, \bibinfo {author} {\bibfnamefont {Marcus}\ \bibnamefont {Huber}}, \bibinfo {author} {\bibfnamefont {Arnau}\ \bibnamefont {Riera}}, \bibinfo {author} {\bibfnamefont {L{\'\i}dia}\ \bibnamefont {Del~Rio}}, \ and\ \bibinfo {author} {\bibfnamefont {Paul}\ \bibnamefont {Skrzypczyk}},\ }\bibfield  {title} {\enquote {\bibinfo {title} {The role of quantum information in thermodynamics—a topical review},}\ }\href {https://doi.org/10.1088/1751-8113/49/14/143001} {\bibfield  {journal} {\bibinfo  {journal} {Journal of Physics A: Mathematical and Theoretical}\ }\textbf {\bibinfo {volume} {49}},\ \bibinfo {pages} {143001} (\bibinfo {year} {2016})}\BibitemShut {NoStop}%
\bibitem [{\citenamefont {Linden}\ \emph {et~al.}(2010)\citenamefont {Linden}, \citenamefont {Popescu},\ and\ \citenamefont {Skrzypczyk}}]{linden2010small}%
  \BibitemOpen
  \bibfield  {author} {\bibinfo {author} {\bibfnamefont {Noah}\ \bibnamefont {Linden}}, \bibinfo {author} {\bibfnamefont {Sandu}\ \bibnamefont {Popescu}}, \ and\ \bibinfo {author} {\bibfnamefont {Paul}\ \bibnamefont {Skrzypczyk}},\ }\bibfield  {title} {\enquote {\bibinfo {title} {How small can thermal machines be? the smallest possible refrigerator},}\ }\href {https://doi.org/10.1103/PhysRevLett.105.130401} {\bibfield  {journal} {\bibinfo  {journal} {Physical Review Letters}\ }\textbf {\bibinfo {volume} {105}},\ \bibinfo {pages} {130401} (\bibinfo {year} {2010})}\BibitemShut {NoStop}%
\bibitem [{\citenamefont {Horowitz}\ and\ \citenamefont {Jacobs}(2014)}]{horowitz2014quantum}%
  \BibitemOpen
  \bibfield  {author} {\bibinfo {author} {\bibfnamefont {Jordan~M}\ \bibnamefont {Horowitz}}\ and\ \bibinfo {author} {\bibfnamefont {Kurt}\ \bibnamefont {Jacobs}},\ }\bibfield  {title} {\enquote {\bibinfo {title} {Quantum effects improve the energy efficiency of feedback control},}\ }\href {https://doi.org/10.1103/PhysRevE.89.042134} {\bibfield  {journal} {\bibinfo  {journal} {Physical Review E}\ }\textbf {\bibinfo {volume} {89}},\ \bibinfo {pages} {042134} (\bibinfo {year} {2014})}\BibitemShut {NoStop}%
\bibitem [{\citenamefont {Klatzow}\ \emph {et~al.}(2019)\citenamefont {Klatzow}, \citenamefont {Becker}, \citenamefont {Ledingham}, \citenamefont {Weinzetl}, \citenamefont {Kaczmarek}, \citenamefont {Saunders}, \citenamefont {Nunn}, \citenamefont {Walmsley}, \citenamefont {Uzdin},\ and\ \citenamefont {Poem}}]{klatzow2019experimental}%
  \BibitemOpen
  \bibfield  {author} {\bibinfo {author} {\bibfnamefont {James}\ \bibnamefont {Klatzow}}, \bibinfo {author} {\bibfnamefont {Jonas~N}\ \bibnamefont {Becker}}, \bibinfo {author} {\bibfnamefont {Patrick~M}\ \bibnamefont {Ledingham}}, \bibinfo {author} {\bibfnamefont {Christian}\ \bibnamefont {Weinzetl}}, \bibinfo {author} {\bibfnamefont {Krzysztof~T}\ \bibnamefont {Kaczmarek}}, \bibinfo {author} {\bibfnamefont {Dylan~J}\ \bibnamefont {Saunders}}, \bibinfo {author} {\bibfnamefont {Joshua}\ \bibnamefont {Nunn}}, \bibinfo {author} {\bibfnamefont {Ian~A}\ \bibnamefont {Walmsley}}, \bibinfo {author} {\bibfnamefont {Raam}\ \bibnamefont {Uzdin}}, \ and\ \bibinfo {author} {\bibfnamefont {Eilon}\ \bibnamefont {Poem}},\ }\bibfield  {title} {\enquote {\bibinfo {title} {Experimental demonstration of quantum effects in the operation of microscopic heat engines},}\ }\href {https://doi.org/10.1103/PhysRevLett.122.110601} {\bibfield  {journal} {\bibinfo  {journal} {Physical Review Letters}\ }\textbf {\bibinfo {volume} {122}},\
  \bibinfo {pages} {110601} (\bibinfo {year} {2019})}\BibitemShut {NoStop}%
\bibitem [{\citenamefont {Mitchison}(2019)}]{mitchison2019quantum}%
  \BibitemOpen
  \bibfield  {author} {\bibinfo {author} {\bibfnamefont {Mark~T}\ \bibnamefont {Mitchison}},\ }\bibfield  {title} {\enquote {\bibinfo {title} {Quantum thermal absorption machines: refrigerators, engines and clocks},}\ }\href {https://doi.org/10.1080/00107514.2019.1631555} {\bibfield  {journal} {\bibinfo  {journal} {Contemporary Physics}\ }\textbf {\bibinfo {volume} {60}},\ \bibinfo {pages} {164--187} (\bibinfo {year} {2019})}\BibitemShut {NoStop}%
\bibitem [{\citenamefont {Barra}(2022)}]{barra2022quantum}%
  \BibitemOpen
  \bibfield  {author} {\bibinfo {author} {\bibfnamefont {Felipe}\ \bibnamefont {Barra}},\ }\bibfield  {title} {\enquote {\bibinfo {title} {Quantum thermal machines: a simple scheme with realistic bath modelling},}\ }\href {https://doi.org/10.22331/qv-2022-09-26-68} {\bibfield  {journal} {\bibinfo  {journal} {Quantum Views}\ }\textbf {\bibinfo {volume} {6}},\ \bibinfo {pages} {68} (\bibinfo {year} {2022})}\BibitemShut {NoStop}%
\bibitem [{\citenamefont {Alicki}\ and\ \citenamefont {Fannes}(2013)}]{alicki2013entanglement}%
  \BibitemOpen
  \bibfield  {author} {\bibinfo {author} {\bibfnamefont {Robert}\ \bibnamefont {Alicki}}\ and\ \bibinfo {author} {\bibfnamefont {Mark}\ \bibnamefont {Fannes}},\ }\bibfield  {title} {\enquote {\bibinfo {title} {Entanglement boost for extractable work from ensembles of quantum batteries},}\ }\href {https://doi.org/10.1103/PhysRevE.87.042123} {\bibfield  {journal} {\bibinfo  {journal} {Physical Review E—Statistical, Nonlinear, and Soft Matter Physics}\ }\textbf {\bibinfo {volume} {87}},\ \bibinfo {pages} {042123} (\bibinfo {year} {2013})}\BibitemShut {NoStop}%
\bibitem [{\citenamefont {Binder}\ \emph {et~al.}(2015)\citenamefont {Binder}, \citenamefont {Vinjanampathy}, \citenamefont {Modi},\ and\ \citenamefont {Goold}}]{binder2015quantum}%
  \BibitemOpen
  \bibfield  {author} {\bibinfo {author} {\bibfnamefont {Felix}\ \bibnamefont {Binder}}, \bibinfo {author} {\bibfnamefont {Sai}\ \bibnamefont {Vinjanampathy}}, \bibinfo {author} {\bibfnamefont {Kavan}\ \bibnamefont {Modi}}, \ and\ \bibinfo {author} {\bibfnamefont {John}\ \bibnamefont {Goold}},\ }\bibfield  {title} {\enquote {\bibinfo {title} {Quantum thermodynamics of general quantum processes},}\ }\href {https://doi.org/10.1103/PhysRevE.91.032119} {\bibfield  {journal} {\bibinfo  {journal} {Physical Review E}\ }\textbf {\bibinfo {volume} {91}},\ \bibinfo {pages} {032119} (\bibinfo {year} {2015})}\BibitemShut {NoStop}%
\bibitem [{\citenamefont {Campaioli}\ \emph {et~al.}(2024)\citenamefont {Campaioli}, \citenamefont {Gherardini}, \citenamefont {Quach}, \citenamefont {Polini},\ and\ \citenamefont {Andolina}}]{campaioli2024colloquium}%
  \BibitemOpen
  \bibfield  {author} {\bibinfo {author} {\bibfnamefont {Francesco}\ \bibnamefont {Campaioli}}, \bibinfo {author} {\bibfnamefont {Stefano}\ \bibnamefont {Gherardini}}, \bibinfo {author} {\bibfnamefont {James~Q}\ \bibnamefont {Quach}}, \bibinfo {author} {\bibfnamefont {Marco}\ \bibnamefont {Polini}}, \ and\ \bibinfo {author} {\bibfnamefont {Gian~Marcello}\ \bibnamefont {Andolina}},\ }\bibfield  {title} {\enquote {\bibinfo {title} {Colloquium: quantum batteries},}\ }\href {Colloquium: quantum batteries,} {\bibfield  {journal} {\bibinfo  {journal} {Reviews of Modern Physics}\ }\textbf {\bibinfo {volume} {96}},\ \bibinfo {pages} {031001} (\bibinfo {year} {2024})}\BibitemShut {NoStop}%
\bibitem [{\citenamefont {Campaioli}\ \emph {et~al.}(2017)\citenamefont {Campaioli}, \citenamefont {Pollock}, \citenamefont {Binder}, \citenamefont {C{\'e}leri}, \citenamefont {Goold}, \citenamefont {Vinjanampathy},\ and\ \citenamefont {Modi}}]{campaioli2017enhancing}%
  \BibitemOpen
  \bibfield  {author} {\bibinfo {author} {\bibfnamefont {Francesco}\ \bibnamefont {Campaioli}}, \bibinfo {author} {\bibfnamefont {Felix~A}\ \bibnamefont {Pollock}}, \bibinfo {author} {\bibfnamefont {Felix~C}\ \bibnamefont {Binder}}, \bibinfo {author} {\bibfnamefont {Lucas}\ \bibnamefont {C{\'e}leri}}, \bibinfo {author} {\bibfnamefont {John}\ \bibnamefont {Goold}}, \bibinfo {author} {\bibfnamefont {Sai}\ \bibnamefont {Vinjanampathy}}, \ and\ \bibinfo {author} {\bibfnamefont {Kavan}\ \bibnamefont {Modi}},\ }\bibfield  {title} {\enquote {\bibinfo {title} {Enhancing the charging power of quantum batteries},}\ }\href {https://doi.org/10.1103/PhysRevLett.118.150601} {\bibfield  {journal} {\bibinfo  {journal} {Physical Review Letters}\ }\textbf {\bibinfo {volume} {118}},\ \bibinfo {pages} {150601} (\bibinfo {year} {2017})}\BibitemShut {NoStop}%
\bibitem [{\citenamefont {Ferraro}\ \emph {et~al.}(2018)\citenamefont {Ferraro}, \citenamefont {Campisi}, \citenamefont {Andolina}, \citenamefont {Pellegrini},\ and\ \citenamefont {Polini}}]{ferraro2018high}%
  \BibitemOpen
  \bibfield  {author} {\bibinfo {author} {\bibfnamefont {Dario}\ \bibnamefont {Ferraro}}, \bibinfo {author} {\bibfnamefont {Michele}\ \bibnamefont {Campisi}}, \bibinfo {author} {\bibfnamefont {Gian~Marcello}\ \bibnamefont {Andolina}}, \bibinfo {author} {\bibfnamefont {Vittorio}\ \bibnamefont {Pellegrini}}, \ and\ \bibinfo {author} {\bibfnamefont {Marco}\ \bibnamefont {Polini}},\ }\bibfield  {title} {\enquote {\bibinfo {title} {High-power collective charging of a solid-state quantum battery},}\ }\href {https://doi.org/10.1103/PhysRevLett.120.117702} {\bibfield  {journal} {\bibinfo  {journal} {Physical Review Letters}\ }\textbf {\bibinfo {volume} {120}},\ \bibinfo {pages} {117702} (\bibinfo {year} {2018})}\BibitemShut {NoStop}%
\bibitem [{\citenamefont {Barra}(2019)}]{barra2019dissipative}%
  \BibitemOpen
  \bibfield  {author} {\bibinfo {author} {\bibfnamefont {Felipe}\ \bibnamefont {Barra}},\ }\bibfield  {title} {\enquote {\bibinfo {title} {Dissipative charging of a quantum battery},}\ }\href {https://doi.org/10.1103/PhysRevLett.122.210601} {\bibfield  {journal} {\bibinfo  {journal} {Physical Review Letters}\ }\textbf {\bibinfo {volume} {122}},\ \bibinfo {pages} {210601} (\bibinfo {year} {2019})}\BibitemShut {NoStop}%
\bibitem [{\citenamefont {Crescente}\ \emph {et~al.}(2020)\citenamefont {Crescente}, \citenamefont {Carrega}, \citenamefont {Sassetti},\ and\ \citenamefont {Ferraro}}]{crescente2020ultrafast}%
  \BibitemOpen
  \bibfield  {author} {\bibinfo {author} {\bibfnamefont {Alba}\ \bibnamefont {Crescente}}, \bibinfo {author} {\bibfnamefont {Matteo}\ \bibnamefont {Carrega}}, \bibinfo {author} {\bibfnamefont {Maura}\ \bibnamefont {Sassetti}}, \ and\ \bibinfo {author} {\bibfnamefont {Dario}\ \bibnamefont {Ferraro}},\ }\bibfield  {title} {\enquote {\bibinfo {title} {Ultrafast charging in a two-photon dicke quantum battery},}\ }\href {https://doi.org/10.1103/PhysRevB.102.245407} {\bibfield  {journal} {\bibinfo  {journal} {Physical Review B}\ }\textbf {\bibinfo {volume} {102}},\ \bibinfo {pages} {245407} (\bibinfo {year} {2020})}\BibitemShut {NoStop}%
\bibitem [{\citenamefont {Ueki}\ \emph {et~al.}(2022)\citenamefont {Ueki}, \citenamefont {Kamimura}, \citenamefont {Matsuzaki}, \citenamefont {Yoshida},\ and\ \citenamefont {Tokura}}]{ueki2022quantum}%
  \BibitemOpen
  \bibfield  {author} {\bibinfo {author} {\bibfnamefont {Yudai}\ \bibnamefont {Ueki}}, \bibinfo {author} {\bibfnamefont {Shunsuke}\ \bibnamefont {Kamimura}}, \bibinfo {author} {\bibfnamefont {Yuichiro}\ \bibnamefont {Matsuzaki}}, \bibinfo {author} {\bibfnamefont {Kyo}\ \bibnamefont {Yoshida}}, \ and\ \bibinfo {author} {\bibfnamefont {Yasuhiro}\ \bibnamefont {Tokura}},\ }\bibfield  {title} {\enquote {\bibinfo {title} {Quantum battery based on superabsorption},}\ }\href {https://doi.org/10.7566/JPSJ.91.124002} {\bibfield  {journal} {\bibinfo  {journal} {Journal of the Physical Society of Japan}\ }\textbf {\bibinfo {volume} {91}},\ \bibinfo {pages} {124002} (\bibinfo {year} {2022})}\BibitemShut {NoStop}%
\bibitem [{\citenamefont {Pokhrel}\ and\ \citenamefont {Gea-Banacloche}(2025)}]{pokhrel2025large}%
  \BibitemOpen
  \bibfield  {author} {\bibinfo {author} {\bibfnamefont {Sagar}\ \bibnamefont {Pokhrel}}\ and\ \bibinfo {author} {\bibfnamefont {Julio}\ \bibnamefont {Gea-Banacloche}},\ }\bibfield  {title} {\enquote {\bibinfo {title} {Large collective power enhancement in dissipative charging of a quantum battery},}\ }\href {https://doi.org/10.1103/PhysRevLett.134.130401} {\bibfield  {journal} {\bibinfo  {journal} {Physical Review Letters}\ }\textbf {\bibinfo {volume} {134}},\ \bibinfo {pages} {130401} (\bibinfo {year} {2025})}\BibitemShut {NoStop}%
\bibitem [{\citenamefont {Shastri}\ \emph {et~al.}(2025)\citenamefont {Shastri}, \citenamefont {Jiang}, \citenamefont {Xu}, \citenamefont {Prasanna~Venkatesh},\ and\ \citenamefont {Watanabe}}]{shastri2025dephasing}%
  \BibitemOpen
  \bibfield  {author} {\bibinfo {author} {\bibfnamefont {Rahul}\ \bibnamefont {Shastri}}, \bibinfo {author} {\bibfnamefont {Chao}\ \bibnamefont {Jiang}}, \bibinfo {author} {\bibfnamefont {Guo-Hua}\ \bibnamefont {Xu}}, \bibinfo {author} {\bibfnamefont {B}~\bibnamefont {Prasanna~Venkatesh}}, \ and\ \bibinfo {author} {\bibfnamefont {Gentaro}\ \bibnamefont {Watanabe}},\ }\bibfield  {title} {\enquote {\bibinfo {title} {Dephasing enabled fast charging of quantum batteries},}\ }\href {https://doi.org/10.1038/s41534-025-00959-5} {\bibfield  {journal} {\bibinfo  {journal} {npj Quantum Information}\ }\textbf {\bibinfo {volume} {11}},\ \bibinfo {pages} {9} (\bibinfo {year} {2025})}\BibitemShut {NoStop}%
\bibitem [{\citenamefont {Andolina}\ \emph {et~al.}(2019)\citenamefont {Andolina}, \citenamefont {Keck}, \citenamefont {Mari}, \citenamefont {Campisi}, \citenamefont {Giovannetti},\ and\ \citenamefont {Polini}}]{adolina2019extractable}%
  \BibitemOpen
  \bibfield  {author} {\bibinfo {author} {\bibfnamefont {Gian~Marcello}\ \bibnamefont {Andolina}}, \bibinfo {author} {\bibfnamefont {Maximilian}\ \bibnamefont {Keck}}, \bibinfo {author} {\bibfnamefont {Andrea}\ \bibnamefont {Mari}}, \bibinfo {author} {\bibfnamefont {Michele}\ \bibnamefont {Campisi}}, \bibinfo {author} {\bibfnamefont {Vittorio}\ \bibnamefont {Giovannetti}}, \ and\ \bibinfo {author} {\bibfnamefont {Marco}\ \bibnamefont {Polini}},\ }\bibfield  {title} {\enquote {\bibinfo {title} {Extractable work, the role of correlations, and asymptotic freedom in quantum batteries},}\ }\href {https://doi.org/10.1103/PhysRevLett.122.047702} {\bibfield  {journal} {\bibinfo  {journal} {Physical Review Letters}\ }\textbf {\bibinfo {volume} {122}},\ \bibinfo {pages} {047702} (\bibinfo {year} {2019})}\BibitemShut {NoStop}%
\bibitem [{\citenamefont {Shi}\ \emph {et~al.}(2022)\citenamefont {Shi}, \citenamefont {Ding}, \citenamefont {Wan}, \citenamefont {Wang},\ and\ \citenamefont {Yang}}]{shi2022entanglement}%
  \BibitemOpen
  \bibfield  {author} {\bibinfo {author} {\bibfnamefont {Hai-Long}\ \bibnamefont {Shi}}, \bibinfo {author} {\bibfnamefont {Shu}\ \bibnamefont {Ding}}, \bibinfo {author} {\bibfnamefont {Qing-Kun}\ \bibnamefont {Wan}}, \bibinfo {author} {\bibfnamefont {Xiao-Hui}\ \bibnamefont {Wang}}, \ and\ \bibinfo {author} {\bibfnamefont {Wen-Li}\ \bibnamefont {Yang}},\ }\bibfield  {title} {\enquote {\bibinfo {title} {Entanglement, coherence, and extractable work in quantum batteries},}\ }\href {https://doi.org/10.1103/PhysRevLett.129.130602} {\bibfield  {journal} {\bibinfo  {journal} {Physical Review Letters}\ }\textbf {\bibinfo {volume} {129}},\ \bibinfo {pages} {130602} (\bibinfo {year} {2022})}\BibitemShut {NoStop}%
\bibitem [{\citenamefont {Lu}\ \emph {et~al.}(2025)\citenamefont {Lu}, \citenamefont {Tian}, \citenamefont {L{\"u}},\ and\ \citenamefont {Shang}}]{lu2025topological}%
  \BibitemOpen
  \bibfield  {author} {\bibinfo {author} {\bibfnamefont {Zhi-Guang}\ \bibnamefont {Lu}}, \bibinfo {author} {\bibfnamefont {Guoqing}\ \bibnamefont {Tian}}, \bibinfo {author} {\bibfnamefont {Xin-You}\ \bibnamefont {L{\"u}}}, \ and\ \bibinfo {author} {\bibfnamefont {Cheng}\ \bibnamefont {Shang}},\ }\bibfield  {title} {\enquote {\bibinfo {title} {Topological quantum batteries},}\ }\href {https://doi.org/10.1103/PhysRevLett.134.180401} {\bibfield  {journal} {\bibinfo  {journal} {Physical Review Letters}\ }\textbf {\bibinfo {volume} {134}},\ \bibinfo {pages} {180401} (\bibinfo {year} {2025})}\BibitemShut {NoStop}%
\bibitem [{\citenamefont {Friis}\ and\ \citenamefont {Huber}(2018)}]{friis2018precision}%
  \BibitemOpen
  \bibfield  {author} {\bibinfo {author} {\bibfnamefont {Nicolai}\ \bibnamefont {Friis}}\ and\ \bibinfo {author} {\bibfnamefont {Marcus}\ \bibnamefont {Huber}},\ }\bibfield  {title} {\enquote {\bibinfo {title} {Precision and work fluctuations in gaussian battery charging},}\ }\href {https://doi.org/10.22331/q-2018-04-23-61} {\bibfield  {journal} {\bibinfo  {journal} {Quantum}\ }\textbf {\bibinfo {volume} {2}},\ \bibinfo {pages} {61} (\bibinfo {year} {2018})}\BibitemShut {NoStop}%
\bibitem [{\citenamefont {Juli{\`a}-Farr{\'e}}\ \emph {et~al.}(2020)\citenamefont {Juli{\`a}-Farr{\'e}}, \citenamefont {Salamon}, \citenamefont {Riera}, \citenamefont {Bera},\ and\ \citenamefont {Lewenstein}}]{julia2020bounds}%
  \BibitemOpen
  \bibfield  {author} {\bibinfo {author} {\bibfnamefont {Sergi}\ \bibnamefont {Juli{\`a}-Farr{\'e}}}, \bibinfo {author} {\bibfnamefont {Tymoteusz}\ \bibnamefont {Salamon}}, \bibinfo {author} {\bibfnamefont {Arnau}\ \bibnamefont {Riera}}, \bibinfo {author} {\bibfnamefont {Manabendra~N}\ \bibnamefont {Bera}}, \ and\ \bibinfo {author} {\bibfnamefont {Maciej}\ \bibnamefont {Lewenstein}},\ }\bibfield  {title} {\enquote {\bibinfo {title} {Bounds on the capacity and power of quantum batteries},}\ }\href {https://doi.org/10.1103/PhysRevResearch.2.023113} {\bibfield  {journal} {\bibinfo  {journal} {Physical Review Research}\ }\textbf {\bibinfo {volume} {2}},\ \bibinfo {pages} {023113} (\bibinfo {year} {2020})}\BibitemShut {NoStop}%
\bibitem [{\citenamefont {Peng}\ \emph {et~al.}(2021)\citenamefont {Peng}, \citenamefont {He}, \citenamefont {Chesi}, \citenamefont {Lin},\ and\ \citenamefont {Guan}}]{peng2021lower}%
  \BibitemOpen
  \bibfield  {author} {\bibinfo {author} {\bibfnamefont {Li}~\bibnamefont {Peng}}, \bibinfo {author} {\bibfnamefont {Wen-Bin}\ \bibnamefont {He}}, \bibinfo {author} {\bibfnamefont {Stefano}\ \bibnamefont {Chesi}}, \bibinfo {author} {\bibfnamefont {Hai-Qing}\ \bibnamefont {Lin}}, \ and\ \bibinfo {author} {\bibfnamefont {Xi-Wen}\ \bibnamefont {Guan}},\ }\bibfield  {title} {\enquote {\bibinfo {title} {Lower and upper bounds of quantum battery power in multiple central spin systems},}\ }\href {https://doi.org/10.1103/PhysRevA.103.052220} {\bibfield  {journal} {\bibinfo  {journal} {Physical Review A}\ }\textbf {\bibinfo {volume} {103}},\ \bibinfo {pages} {052220} (\bibinfo {year} {2021})}\BibitemShut {NoStop}%
\bibitem [{\citenamefont {Tirone}\ \emph {et~al.}(2024)\citenamefont {Tirone}, \citenamefont {Salvia}, \citenamefont {Chessa},\ and\ \citenamefont {Giovannetti}}]{tirone2024quantum}%
  \BibitemOpen
  \bibfield  {author} {\bibinfo {author} {\bibfnamefont {Salvatore}\ \bibnamefont {Tirone}}, \bibinfo {author} {\bibfnamefont {Raffaele}\ \bibnamefont {Salvia}}, \bibinfo {author} {\bibfnamefont {Stefano}\ \bibnamefont {Chessa}}, \ and\ \bibinfo {author} {\bibfnamefont {Vittorio}\ \bibnamefont {Giovannetti}},\ }\bibfield  {title} {\enquote {\bibinfo {title} {Quantum work capacitances: Ultimate limits for energy extraction on noisy quantum batteries},}\ }\href {https://doi.org/10.21468/SciPostPhys.17.2.041} {\bibfield  {journal} {\bibinfo  {journal} {SciPost Physics}\ }\textbf {\bibinfo {volume} {17}},\ \bibinfo {pages} {041} (\bibinfo {year} {2024})}\BibitemShut {NoStop}%
\bibitem [{\citenamefont {Bai}\ \emph {et~al.}(2024)\citenamefont {Bai}, \citenamefont {Gong},\ and\ \citenamefont {Li}}]{bai2024change}%
  \BibitemOpen
  \bibfield  {author} {\bibinfo {author} {\bibfnamefont {Guoji}\ \bibnamefont {Bai}}, \bibinfo {author} {\bibfnamefont {Helin}\ \bibnamefont {Gong}}, \ and\ \bibinfo {author} {\bibfnamefont {Bo}~\bibnamefont {Li}},\ }\bibfield  {title} {\enquote {\bibinfo {title} {The change in quantum battery capacity under local pauli noise},}\ }\href {https://doi.org/10.1140/epjp/s13360-024-05847-z} {\bibfield  {journal} {\bibinfo  {journal} {The European Physical Journal Plus}\ }\textbf {\bibinfo {volume} {139}},\ \bibinfo {pages} {1--10} (\bibinfo {year} {2024})}\BibitemShut {NoStop}%
\bibitem [{\citenamefont {Liu}\ \emph {et~al.}(2025)\citenamefont {Liu}, \citenamefont {Tian},\ and\ \citenamefont {Wang}}]{liu2506open}%
  \BibitemOpen
  \bibfield  {author} {\bibinfo {author} {\bibfnamefont {Xiaofang}\ \bibnamefont {Liu}}, \bibinfo {author} {\bibfnamefont {Zehua}\ \bibnamefont {Tian}}, \ and\ \bibinfo {author} {\bibfnamefont {Jieci}\ \bibnamefont {Wang}},\ }\bibfield  {title} {\enquote {\bibinfo {title} {Open quantum battery in three-dimensional rotating black hole spacetime},}\ }\href {https://doi.org/10.48550/arXiv.2506.07568} {\bibfield  {journal} {\bibinfo  {journal} {arXiv preprint arXiv:2506.07568}\ } (\bibinfo {year} {2025})}\BibitemShut {NoStop}%
\bibitem [{\citenamefont {Malavazi}\ \emph {et~al.}(2025)\citenamefont {Malavazi}, \citenamefont {Sagar}, \citenamefont {Ahmadi},\ and\ \citenamefont {Dieguez}}]{malavazi2025two}%
  \BibitemOpen
  \bibfield  {author} {\bibinfo {author} {\bibfnamefont {Andr{\'e}~HA}\ \bibnamefont {Malavazi}}, \bibinfo {author} {\bibfnamefont {Rishav}\ \bibnamefont {Sagar}}, \bibinfo {author} {\bibfnamefont {Borhan}\ \bibnamefont {Ahmadi}}, \ and\ \bibinfo {author} {\bibfnamefont {Pedro~R}\ \bibnamefont {Dieguez}},\ }\bibfield  {title} {\enquote {\bibinfo {title} {Two-time weak-measurement protocol for ergotropy protection in open quantum batteries},}\ }\href {https://doi.org/10.1103/bv4w-jr6q} {\bibfield  {journal} {\bibinfo  {journal} {PRX Energy}\ }\textbf {\bibinfo {volume} {4}},\ \bibinfo {pages} {023011} (\bibinfo {year} {2025})}\BibitemShut {NoStop}%
\bibitem [{\citenamefont {Skrzypczyk}\ \emph {et~al.}(2014)\citenamefont {Skrzypczyk}, \citenamefont {Short},\ and\ \citenamefont {Popescu}}]{skrzypczyk2014work}%
  \BibitemOpen
  \bibfield  {author} {\bibinfo {author} {\bibfnamefont {Paul}\ \bibnamefont {Skrzypczyk}}, \bibinfo {author} {\bibfnamefont {Anthony~J}\ \bibnamefont {Short}}, \ and\ \bibinfo {author} {\bibfnamefont {Sandu}\ \bibnamefont {Popescu}},\ }\bibfield  {title} {\enquote {\bibinfo {title} {Work extraction and thermodynamics for individual quantum systems},}\ }\href {https://doi.org/10.1038/ncomms5185} {\bibfield  {journal} {\bibinfo  {journal} {Nature communications}\ }\textbf {\bibinfo {volume} {5}},\ \bibinfo {pages} {4185} (\bibinfo {year} {2014})}\BibitemShut {NoStop}%
\bibitem [{\citenamefont {Morris}\ \emph {et~al.}(2019)\citenamefont {Morris}, \citenamefont {Lami},\ and\ \citenamefont {Adesso}}]{morris2019assisted}%
  \BibitemOpen
  \bibfield  {author} {\bibinfo {author} {\bibfnamefont {Benjamin}\ \bibnamefont {Morris}}, \bibinfo {author} {\bibfnamefont {Ludovico}\ \bibnamefont {Lami}}, \ and\ \bibinfo {author} {\bibfnamefont {Gerardo}\ \bibnamefont {Adesso}},\ }\bibfield  {title} {\enquote {\bibinfo {title} {Assisted work distillation},}\ }\href {https://doi.org/10.1103/PhysRevLett.122.130601} {\bibfield  {journal} {\bibinfo  {journal} {Physical Review Letters}\ }\textbf {\bibinfo {volume} {122}},\ \bibinfo {pages} {130601} (\bibinfo {year} {2019})}\BibitemShut {NoStop}%
\bibitem [{\citenamefont {Guha}\ \emph {et~al.}(2020)\citenamefont {Guha}, \citenamefont {Alimuddin},\ and\ \citenamefont {Parashar}}]{guha2020thermodynamic}%
  \BibitemOpen
  \bibfield  {author} {\bibinfo {author} {\bibfnamefont {Tamal}\ \bibnamefont {Guha}}, \bibinfo {author} {\bibfnamefont {Mir}\ \bibnamefont {Alimuddin}}, \ and\ \bibinfo {author} {\bibfnamefont {Preeti}\ \bibnamefont {Parashar}},\ }\bibfield  {title} {\enquote {\bibinfo {title} {Thermodynamic advancement in the causally inseparable occurrence of thermal maps},}\ }\href {https://doi.org/10.1103/PhysRevA.102.032215} {\bibfield  {journal} {\bibinfo  {journal} {Physical Review A}\ }\textbf {\bibinfo {volume} {102}},\ \bibinfo {pages} {032215} (\bibinfo {year} {2020})}\BibitemShut {NoStop}%
\bibitem [{\citenamefont {{Allahverdyan, A. E.}}\ \emph {et~al.}(2004)\citenamefont {{Allahverdyan, A. E.}}, \citenamefont {{Balian, R.}},\ and\ \citenamefont {{Nieuwenhuizen, Th. M.}}}]{Allahverdyan}%
  \BibitemOpen
  \bibfield  {author} {\bibinfo {author} {\bibnamefont {{Allahverdyan, A. E.}}}, \bibinfo {author} {\bibnamefont {{Balian, R.}}}, \ and\ \bibinfo {author} {\bibnamefont {{Nieuwenhuizen, Th. M.}}},\ }\bibfield  {title} {\enquote {\bibinfo {title} {Maximal work extraction from finite quantum systems},}\ }\href {https://doi.org/10.1209/epl/i2004-10101-2} {\bibfield  {journal} {\bibinfo  {journal} {Europhys. Lett.}\ }\textbf {\bibinfo {volume} {67}},\ \bibinfo {pages} {565--571} (\bibinfo {year} {2004})}\BibitemShut {NoStop}%
\bibitem [{\citenamefont {Castellano}\ \emph {et~al.}(2024)\citenamefont {Castellano}, \citenamefont {Farina}, \citenamefont {Giovannetti},\ and\ \citenamefont {Acin}}]{castellano2024extended}%
  \BibitemOpen
  \bibfield  {author} {\bibinfo {author} {\bibfnamefont {Riccardo}\ \bibnamefont {Castellano}}, \bibinfo {author} {\bibfnamefont {Donato}\ \bibnamefont {Farina}}, \bibinfo {author} {\bibfnamefont {Vittorio}\ \bibnamefont {Giovannetti}}, \ and\ \bibinfo {author} {\bibfnamefont {Antonio}\ \bibnamefont {Acin}},\ }\bibfield  {title} {\enquote {\bibinfo {title} {Extended local ergotropy},}\ }\href {https://doi.org/10.1103/PhysRevLett.133.150402} {\bibfield  {journal} {\bibinfo  {journal} {Physical Review Letters}\ }\textbf {\bibinfo {volume} {133}},\ \bibinfo {pages} {150402} (\bibinfo {year} {2024})}\BibitemShut {NoStop}%
\bibitem [{\citenamefont {Halder}\ \emph {et~al.}(2024)\citenamefont {Halder}, \citenamefont {Ghosh}, \citenamefont {Roy},\ and\ \citenamefont {Guha}}]{halder2024operational}%
  \BibitemOpen
  \bibfield  {author} {\bibinfo {author} {\bibfnamefont {Pritam}\ \bibnamefont {Halder}}, \bibinfo {author} {\bibfnamefont {Srijon}\ \bibnamefont {Ghosh}}, \bibinfo {author} {\bibfnamefont {Saptarshi}\ \bibnamefont {Roy}}, \ and\ \bibinfo {author} {\bibfnamefont {Tamal}\ \bibnamefont {Guha}},\ }\bibfield  {title} {\enquote {\bibinfo {title} {Operational ergotropy: suboptimality of the geodesic drive},}\ }\href {https://doi.org/10.48550/arXiv.2403.05956} {\bibfield  {journal} {\bibinfo  {journal} {arXiv preprint arXiv:2403.05956}\ } (\bibinfo {year} {2024})}\BibitemShut {NoStop}%
\bibitem [{\citenamefont {Simonov}\ \emph {et~al.}(2022)\citenamefont {Simonov}, \citenamefont {Francica}, \citenamefont {Guarnieri},\ and\ \citenamefont {Paternostro}}]{simonov2022work}%
  \BibitemOpen
  \bibfield  {author} {\bibinfo {author} {\bibfnamefont {Kyrylo}\ \bibnamefont {Simonov}}, \bibinfo {author} {\bibfnamefont {Gianluca}\ \bibnamefont {Francica}}, \bibinfo {author} {\bibfnamefont {Giacomo}\ \bibnamefont {Guarnieri}}, \ and\ \bibinfo {author} {\bibfnamefont {Mauro}\ \bibnamefont {Paternostro}},\ }\bibfield  {title} {\enquote {\bibinfo {title} {Work extraction from coherently activated maps via quantum switch},}\ }\href {https://doi.org/10.1103/PhysRevA.105.032217} {\bibfield  {journal} {\bibinfo  {journal} {Physical Review A}\ }\textbf {\bibinfo {volume} {105}},\ \bibinfo {pages} {032217} (\bibinfo {year} {2022})}\BibitemShut {NoStop}%
\bibitem [{\citenamefont {Simonov}\ \emph {et~al.}(2025)\citenamefont {Simonov}, \citenamefont {Roy}, \citenamefont {Guha}, \citenamefont {Zimbor{\'a}s},\ and\ \citenamefont {Chiribella}}]{simonov2025activation}%
  \BibitemOpen
  \bibfield  {author} {\bibinfo {author} {\bibfnamefont {Kyrylo}\ \bibnamefont {Simonov}}, \bibinfo {author} {\bibfnamefont {Saptarshi}\ \bibnamefont {Roy}}, \bibinfo {author} {\bibfnamefont {Tamal}\ \bibnamefont {Guha}}, \bibinfo {author} {\bibfnamefont {Zolt{\'a}n}\ \bibnamefont {Zimbor{\'a}s}}, \ and\ \bibinfo {author} {\bibfnamefont {Giulio}\ \bibnamefont {Chiribella}},\ }\bibfield  {title} {\enquote {\bibinfo {title} {Activation of thermal states by coherently controlled thermalization processes},}\ }\href {https://iopscience.iop.org/article/10.1088/1367-2630/ade5c4/meta} {\bibfield  {journal} {\bibinfo  {journal} {New Journal of Physics}\ }\textbf {\bibinfo {volume} {27}} (\bibinfo {year} {2025})}\BibitemShut {NoStop}%
\bibitem [{\citenamefont {Gregoratti}\ and\ \citenamefont {Werner}(2003)}]{gregoratti2003quantum}%
  \BibitemOpen
  \bibfield  {author} {\bibinfo {author} {\bibfnamefont {Matteo}\ \bibnamefont {Gregoratti}}\ and\ \bibinfo {author} {\bibfnamefont {Reinhard~F}\ \bibnamefont {Werner}},\ }\bibfield  {title} {\enquote {\bibinfo {title} {Quantum lost and found},}\ }\href {https://doi.org/10.1080/09500340308234541} {\bibfield  {journal} {\bibinfo  {journal} {Journal of Modern Optics}\ }\textbf {\bibinfo {volume} {50}},\ \bibinfo {pages} {915--933} (\bibinfo {year} {2003})}\BibitemShut {NoStop}%
\bibitem [{\citenamefont {Hayden}\ and\ \citenamefont {King}(2005)}]{hayden2005correcting}%
  \BibitemOpen
  \bibfield  {author} {\bibinfo {author} {\bibfnamefont {Patrick}\ \bibnamefont {Hayden}}\ and\ \bibinfo {author} {\bibfnamefont {Christopher}\ \bibnamefont {King}},\ }\bibfield  {title} {\enquote {\bibinfo {title} {Correcting quantum channels by measuring the environment},}\ }\href {https://dl.acm.org/doi/10.5555/2011626.2011632} {\bibfield  {journal} {\bibinfo  {journal} {Quant. Info. Comp.}\ }\textbf {\bibinfo {volume} {5}},\ \bibinfo {pages} {156} (\bibinfo {year} {2005})}\BibitemShut {NoStop}%
\bibitem [{\citenamefont {Karumanchi}\ \emph {et~al.}(2016{\natexlab{a}})\citenamefont {Karumanchi}, \citenamefont {Mancini}, \citenamefont {Winter},\ and\ \citenamefont {Yang}}]{karumanchi2016classical}%
  \BibitemOpen
  \bibfield  {author} {\bibinfo {author} {\bibfnamefont {Siddharth}\ \bibnamefont {Karumanchi}}, \bibinfo {author} {\bibfnamefont {Stefano}\ \bibnamefont {Mancini}}, \bibinfo {author} {\bibfnamefont {Andreas}\ \bibnamefont {Winter}}, \ and\ \bibinfo {author} {\bibfnamefont {Dong}\ \bibnamefont {Yang}},\ }\bibfield  {title} {\enquote {\bibinfo {title} {Classical capacities of quantum channels with environment assistance},}\ }\href {https://doi.org/10.1134/S0032946016030029} {\bibfield  {journal} {\bibinfo  {journal} {Problems of Information Transmission}\ }\textbf {\bibinfo {volume} {52}},\ \bibinfo {pages} {214--238} (\bibinfo {year} {2016}{\natexlab{a}})}\BibitemShut {NoStop}%
\bibitem [{\citenamefont {Karumanchi}\ \emph {et~al.}(2016{\natexlab{b}})\citenamefont {Karumanchi}, \citenamefont {Mancini}, \citenamefont {Winter},\ and\ \citenamefont {Yang}}]{karumanchi2016quantum}%
  \BibitemOpen
  \bibfield  {author} {\bibinfo {author} {\bibfnamefont {Siddharth}\ \bibnamefont {Karumanchi}}, \bibinfo {author} {\bibfnamefont {Stefano}\ \bibnamefont {Mancini}}, \bibinfo {author} {\bibfnamefont {Andreas}\ \bibnamefont {Winter}}, \ and\ \bibinfo {author} {\bibfnamefont {Dong}\ \bibnamefont {Yang}},\ }\bibfield  {title} {\enquote {\bibinfo {title} {Quantum channel capacities with passive environment assistance},}\ }\href {https://doi.org/10.1109/TIT.2016.2522192} {\bibfield  {journal} {\bibinfo  {journal} {IEEE Transactions on Information Theory}\ }\textbf {\bibinfo {volume} {62}},\ \bibinfo {pages} {1733--1747} (\bibinfo {year} {2016}{\natexlab{b}})}\BibitemShut {NoStop}%
\bibitem [{\citenamefont {Pirandola}\ \emph {et~al.}(2021)\citenamefont {Pirandola}, \citenamefont {Ottaviani}, \citenamefont {Jacobsen}, \citenamefont {Spedalieri}, \citenamefont {Braunstein}, \citenamefont {Gehring},\ and\ \citenamefont {Andersen}}]{pirandola2021environment}%
  \BibitemOpen
  \bibfield  {author} {\bibinfo {author} {\bibfnamefont {Stefano}\ \bibnamefont {Pirandola}}, \bibinfo {author} {\bibfnamefont {Carlo}\ \bibnamefont {Ottaviani}}, \bibinfo {author} {\bibfnamefont {Christian~S}\ \bibnamefont {Jacobsen}}, \bibinfo {author} {\bibfnamefont {Gaetana}\ \bibnamefont {Spedalieri}}, \bibinfo {author} {\bibfnamefont {Samuel~L}\ \bibnamefont {Braunstein}}, \bibinfo {author} {\bibfnamefont {Tobias}\ \bibnamefont {Gehring}}, \ and\ \bibinfo {author} {\bibfnamefont {Ulrik~L}\ \bibnamefont {Andersen}},\ }\bibfield  {title} {\enquote {\bibinfo {title} {Environment-assisted bosonic quantum communications},}\ }\href {https://doi.org/10.1038/s41534-021-00413-2} {\bibfield  {journal} {\bibinfo  {journal} {npj Quantum Information}\ }\textbf {\bibinfo {volume} {7}},\ \bibinfo {pages} {77} (\bibinfo {year} {2021})}\BibitemShut {NoStop}%
\bibitem [{\citenamefont {Oskouei}\ \emph {et~al.}(2021)\citenamefont {Oskouei}, \citenamefont {Mancini},\ and\ \citenamefont {Winter}}]{oskouei2021capacities}%
  \BibitemOpen
  \bibfield  {author} {\bibinfo {author} {\bibfnamefont {Samad~Khabbazi}\ \bibnamefont {Oskouei}}, \bibinfo {author} {\bibfnamefont {Stefano}\ \bibnamefont {Mancini}}, \ and\ \bibinfo {author} {\bibfnamefont {Andreas}\ \bibnamefont {Winter}},\ }\bibfield  {title} {\enquote {\bibinfo {title} {Capacities of gaussian quantum channels with passive environment assistance},}\ }\href {https://doi.org/10.1109/TIT.2021.3122150} {\bibfield  {journal} {\bibinfo  {journal} {IEEE Transactions on Information Theory}\ }\textbf {\bibinfo {volume} {68}},\ \bibinfo {pages} {339--358} (\bibinfo {year} {2021})}\BibitemShut {NoStop}%
\bibitem [{\citenamefont {Harraz}\ \emph {et~al.}(2022)\citenamefont {Harraz}, \citenamefont {Cong},\ and\ \citenamefont {Nieto}}]{harraz2022quantum}%
  \BibitemOpen
  \bibfield  {author} {\bibinfo {author} {\bibfnamefont {Sajede}\ \bibnamefont {Harraz}}, \bibinfo {author} {\bibfnamefont {Shuang}\ \bibnamefont {Cong}}, \ and\ \bibinfo {author} {\bibfnamefont {Juan~J}\ \bibnamefont {Nieto}},\ }\bibfield  {title} {\enquote {\bibinfo {title} {Quantum state recovery via environment-assisted measurement and weak measurement},}\ }\href {https://doi.org/10.1007/s10773-022-05055-4} {\bibfield  {journal} {\bibinfo  {journal} {International Journal of Theoretical Physics}\ }\textbf {\bibinfo {volume} {61}},\ \bibinfo {pages} {140} (\bibinfo {year} {2022})}\BibitemShut {NoStop}%
\bibitem [{\citenamefont {Harraz}\ \emph {et~al.}(2023)\citenamefont {Harraz}, \citenamefont {Zhang},\ and\ \citenamefont {Cong}}]{harraz2023high}%
  \BibitemOpen
  \bibfield  {author} {\bibinfo {author} {\bibfnamefont {Sajede}\ \bibnamefont {Harraz}}, \bibinfo {author} {\bibfnamefont {Jiao-Yang}\ \bibnamefont {Zhang}}, \ and\ \bibinfo {author} {\bibfnamefont {Shuang}\ \bibnamefont {Cong}},\ }\bibfield  {title} {\enquote {\bibinfo {title} {High-fidelity quantum teleportation through noisy channels via weak measurement and environment-assisted measurement},}\ }\href {https://doi.org/10.48550/arXiv.2206.14463} {\bibfield  {journal} {\bibinfo  {journal} {Results in Physics}\ }\textbf {\bibinfo {volume} {55}},\ \bibinfo {pages} {107164} (\bibinfo {year} {2023})}\BibitemShut {NoStop}%
\bibitem [{\citenamefont {Chowdhury}\ \emph {et~al.}(2025)\citenamefont {Chowdhury}, \citenamefont {Saha}, \citenamefont {Ghosh}, \citenamefont {Adhikary},\ and\ \citenamefont {Guha}}]{chowdhury2025minimal}%
  \BibitemOpen
  \bibfield  {author} {\bibinfo {author} {\bibfnamefont {Snehasish~Roy}\ \bibnamefont {Chowdhury}}, \bibinfo {author} {\bibfnamefont {Sutapa}\ \bibnamefont {Saha}}, \bibinfo {author} {\bibfnamefont {Subhendu~B}\ \bibnamefont {Ghosh}}, \bibinfo {author} {\bibfnamefont {Ranendu}\ \bibnamefont {Adhikary}}, \ and\ \bibinfo {author} {\bibfnamefont {Tamal}\ \bibnamefont {Guha}},\ }\bibfield  {title} {\enquote {\bibinfo {title} {Minimal help, maximal gain: Environmental assistance unlocks encoding strength},}\ }\href {https://doi.org/10.48550/arXiv.2509.09340} {\bibfield  {journal} {\bibinfo  {journal} {arXiv preprint arXiv:2509.09340}\ } (\bibinfo {year} {2025})}\BibitemShut {NoStop}%
\bibitem [{\citenamefont {Macchiavello}\ and\ \citenamefont {Palma}(2002)}]{macchiavello2002entanglement}%
  \BibitemOpen
  \bibfield  {author} {\bibinfo {author} {\bibfnamefont {Chiara}\ \bibnamefont {Macchiavello}}\ and\ \bibinfo {author} {\bibfnamefont {G~Massimo}\ \bibnamefont {Palma}},\ }\bibfield  {title} {\enquote {\bibinfo {title} {Entanglement-enhanced information transmission over a quantum channel with correlated noise},}\ }\href {https://doi.org/10.1103/PhysRevA.65.050301} {\bibfield  {journal} {\bibinfo  {journal} {Physical Review A}\ }\textbf {\bibinfo {volume} {65}},\ \bibinfo {pages} {050301} (\bibinfo {year} {2002})}\BibitemShut {NoStop}%
\bibitem [{\citenamefont {Bowen}\ and\ \citenamefont {Mancini}(2004)}]{bowen2004quantum}%
  \BibitemOpen
  \bibfield  {author} {\bibinfo {author} {\bibfnamefont {Garry}\ \bibnamefont {Bowen}}\ and\ \bibinfo {author} {\bibfnamefont {Stefano}\ \bibnamefont {Mancini}},\ }\bibfield  {title} {\enquote {\bibinfo {title} {Quantum channels with a finite memory},}\ }\href {https://doi.org/10.1103/PhysRevA.69.012306} {\bibfield  {journal} {\bibinfo  {journal} {Physical Review A}\ }\textbf {\bibinfo {volume} {69}},\ \bibinfo {pages} {012306} (\bibinfo {year} {2004})}\BibitemShut {NoStop}%
\bibitem [{\citenamefont {Ball}\ \emph {et~al.}(2004)\citenamefont {Ball}, \citenamefont {Dragan},\ and\ \citenamefont {Banaszek}}]{ball2004exploiting}%
  \BibitemOpen
  \bibfield  {author} {\bibinfo {author} {\bibfnamefont {Jonathan}\ \bibnamefont {Ball}}, \bibinfo {author} {\bibfnamefont {Andrzej}\ \bibnamefont {Dragan}}, \ and\ \bibinfo {author} {\bibfnamefont {Konrad}\ \bibnamefont {Banaszek}},\ }\bibfield  {title} {\enquote {\bibinfo {title} {Exploiting entanglement in communication channels with correlated noise},}\ }\href {https://doi.org/10.1103/PhysRevA.69.042324} {\bibfield  {journal} {\bibinfo  {journal} {Physical Review A}\ }\textbf {\bibinfo {volume} {69}},\ \bibinfo {pages} {042324} (\bibinfo {year} {2004})}\BibitemShut {NoStop}%
\bibitem [{\citenamefont {Banaszek}\ \emph {et~al.}(2004)\citenamefont {Banaszek}, \citenamefont {Dragan}, \citenamefont {Wasilewski},\ and\ \citenamefont {Radzewicz}}]{banaszek2004experimental}%
  \BibitemOpen
  \bibfield  {author} {\bibinfo {author} {\bibfnamefont {Konrad}\ \bibnamefont {Banaszek}}, \bibinfo {author} {\bibfnamefont {Andrzej}\ \bibnamefont {Dragan}}, \bibinfo {author} {\bibfnamefont {Wojciech}\ \bibnamefont {Wasilewski}}, \ and\ \bibinfo {author} {\bibfnamefont {Czes{\l}aw}\ \bibnamefont {Radzewicz}},\ }\bibfield  {title} {\enquote {\bibinfo {title} {Experimental demonstration of entanglement-enhanced classical communication over a quantum channel with correlated noise},}\ }\href {https://doi.org/10.1103/PhysRevLett.92.257901} {\bibfield  {journal} {\bibinfo  {journal} {Physical Review Letters}\ }\textbf {\bibinfo {volume} {92}},\ \bibinfo {pages} {257901} (\bibinfo {year} {2004})}\BibitemShut {NoStop}%
\bibitem [{\citenamefont {Chiribella}\ and\ \citenamefont {Kristj{\'a}nsson}(2019)}]{chiribella2019quantum}%
  \BibitemOpen
  \bibfield  {author} {\bibinfo {author} {\bibfnamefont {Giulio}\ \bibnamefont {Chiribella}}\ and\ \bibinfo {author} {\bibfnamefont {Hl{\'e}r}\ \bibnamefont {Kristj{\'a}nsson}},\ }\bibfield  {title} {\enquote {\bibinfo {title} {Quantum shannon theory with superpositions of trajectories},}\ }\href {https://doi.org/10.1098/rspa.2018.0903} {\bibfield  {journal} {\bibinfo  {journal} {Proceedings of the Royal Society A}\ }\textbf {\bibinfo {volume} {475}},\ \bibinfo {pages} {20180903} (\bibinfo {year} {2019})}\BibitemShut {NoStop}%
\bibitem [{\citenamefont {Guha}\ \emph {et~al.}(2023)\citenamefont {Guha}, \citenamefont {Roy},\ and\ \citenamefont {Chiribella}}]{guha2023quantum}%
  \BibitemOpen
  \bibfield  {author} {\bibinfo {author} {\bibfnamefont {Tamal}\ \bibnamefont {Guha}}, \bibinfo {author} {\bibfnamefont {Saptarshi}\ \bibnamefont {Roy}}, \ and\ \bibinfo {author} {\bibfnamefont {Giulio}\ \bibnamefont {Chiribella}},\ }\bibfield  {title} {\enquote {\bibinfo {title} {Quantum networks boosted by entanglement with a control system},}\ }\href {https://doi.org/10.1103/PhysRevResearch.5.033214} {\bibfield  {journal} {\bibinfo  {journal} {Physical Review Research}\ }\textbf {\bibinfo {volume} {5}},\ \bibinfo {pages} {033214} (\bibinfo {year} {2023})}\BibitemShut {NoStop}%
\bibitem [{\citenamefont {Lai}\ \emph {et~al.}(2024)\citenamefont {Lai}, \citenamefont {Lin}, \citenamefont {Huang}, \citenamefont {Jan},\ and\ \citenamefont {Chen}}]{lai2024quick}%
  \BibitemOpen
  \bibfield  {author} {\bibinfo {author} {\bibfnamefont {Po-Rong}\ \bibnamefont {Lai}}, \bibinfo {author} {\bibfnamefont {Jhen-Dong}\ \bibnamefont {Lin}}, \bibinfo {author} {\bibfnamefont {Yi-Te}\ \bibnamefont {Huang}}, \bibinfo {author} {\bibfnamefont {Hsien-Chao}\ \bibnamefont {Jan}}, \ and\ \bibinfo {author} {\bibfnamefont {Yueh-Nan}\ \bibnamefont {Chen}},\ }\bibfield  {title} {\enquote {\bibinfo {title} {Quick charging of a quantum battery with superposed trajectories},}\ }\href {https://doi.org/10.1103/PhysRevResearch.6.023136} {\bibfield  {journal} {\bibinfo  {journal} {Physical Review Research}\ }\textbf {\bibinfo {volume} {6}},\ \bibinfo {pages} {023136} (\bibinfo {year} {2024})}\BibitemShut {NoStop}%
\bibitem [{\citenamefont {Saha}\ and\ \citenamefont {Sen}(2025)}]{saha2025interference}%
  \BibitemOpen
  \bibfield  {author} {\bibinfo {author} {\bibfnamefont {Sutapa}\ \bibnamefont {Saha}}\ and\ \bibinfo {author} {\bibfnamefont {Ujjwal}\ \bibnamefont {Sen}},\ }\bibfield  {title} {\enquote {\bibinfo {title} {Interference between lossy quantum evolutions activates information backflow},}\ }\href {https://doi.org/10.48550/arXiv.2507.22150} {\bibfield  {journal} {\bibinfo  {journal} {arXiv preprint arXiv:2507.22150}\ } (\bibinfo {year} {2025})}\BibitemShut {NoStop}%
\bibitem [{\citenamefont {Rubino}\ \emph {et~al.}(2021)\citenamefont {Rubino}, \citenamefont {Rozema}, \citenamefont {Ebler}, \citenamefont {Kristj{\'a}nsson}, \citenamefont {Salek}, \citenamefont {Allard~Gu{\'e}rin}, \citenamefont {Abbott}, \citenamefont {Branciard}, \citenamefont {Brukner}, \citenamefont {Chiribella} \emph {et~al.}}]{rubino2021experimental}%
  \BibitemOpen
  \bibfield  {author} {\bibinfo {author} {\bibfnamefont {Giulia}\ \bibnamefont {Rubino}}, \bibinfo {author} {\bibfnamefont {Lee~A}\ \bibnamefont {Rozema}}, \bibinfo {author} {\bibfnamefont {Daniel}\ \bibnamefont {Ebler}}, \bibinfo {author} {\bibfnamefont {Hl{\'e}r}\ \bibnamefont {Kristj{\'a}nsson}}, \bibinfo {author} {\bibfnamefont {Sina}\ \bibnamefont {Salek}}, \bibinfo {author} {\bibfnamefont {Philippe}\ \bibnamefont {Allard~Gu{\'e}rin}}, \bibinfo {author} {\bibfnamefont {Alastair~A}\ \bibnamefont {Abbott}}, \bibinfo {author} {\bibfnamefont {Cyril}\ \bibnamefont {Branciard}}, \bibinfo {author} {\bibfnamefont {{\v{C}}aslav}\ \bibnamefont {Brukner}}, \bibinfo {author} {\bibfnamefont {Giulio}\ \bibnamefont {Chiribella}},  \emph {et~al.},\ }\bibfield  {title} {\enquote {\bibinfo {title} {Experimental quantum communication enhancement by superposing trajectories},}\ }\href {https://doi.org/10.1103/PhysRevResearch.3.013093} {\bibfield  {journal} {\bibinfo  {journal} {Physical Review Research}\ }\textbf {\bibinfo
  {volume} {3}},\ \bibinfo {pages} {013093} (\bibinfo {year} {2021})}\BibitemShut {NoStop}%
\bibitem [{\citenamefont {Gregoratti}\ and\ \citenamefont {Werner}(2004)}]{gregoratti2004quantum}%
  \BibitemOpen
  \bibfield  {author} {\bibinfo {author} {\bibfnamefont {M}~\bibnamefont {Gregoratti}}\ and\ \bibinfo {author} {\bibfnamefont {RF}~\bibnamefont {Werner}},\ }\bibfield  {title} {\enquote {\bibinfo {title} {On quantum error-correction by classical feedback in discrete time},}\ }\href {https://doi.org/10.1063/1.1758320} {\bibfield  {journal} {\bibinfo  {journal} {Journal of Mathematical Physics}\ }\textbf {\bibinfo {volume} {45}},\ \bibinfo {pages} {2600--2612} (\bibinfo {year} {2004})}\BibitemShut {NoStop}%
\bibitem [{\citenamefont {Winter}(2005)}]{winter2005environment}%
  \BibitemOpen
  \bibfield  {author} {\bibinfo {author} {\bibfnamefont {Andreas}\ \bibnamefont {Winter}},\ }\bibfield  {title} {\enquote {\bibinfo {title} {On environment-assisted capacities of quantum channels},}\ }\href {https://doi.org/10.48550/arXiv.quant-ph/0507045} {\bibfield  {journal} {\bibinfo  {journal} {arXiv preprint quant-ph/0507045}\ } (\bibinfo {year} {2005})}\BibitemShut {NoStop}%
\bibitem [{\citenamefont {Gour}(2024)}]{gour2024resources}%
  \BibitemOpen
  \bibfield  {author} {\bibinfo {author} {\bibfnamefont {Gilad}\ \bibnamefont {Gour}},\ }\bibfield  {title} {\enquote {\bibinfo {title} {Resources of the quantum world},}\ }\href {https://doi.org/10.48550/arXiv.2402.05474} {\bibfield  {journal} {\bibinfo  {journal} {arXiv preprint arXiv:2402.05474}\ } (\bibinfo {year} {2024})}\BibitemShut {NoStop}%
\bibitem [{\citenamefont {Wootters}(1998)}]{wootters1998entanglement}%
  \BibitemOpen
  \bibfield  {author} {\bibinfo {author} {\bibfnamefont {William~K}\ \bibnamefont {Wootters}},\ }\bibfield  {title} {\enquote {\bibinfo {title} {Entanglement of formation of an arbitrary state of two qubits},}\ }\href {https://doi.org/10.1103/PhysRevLett.80.2245} {\bibfield  {journal} {\bibinfo  {journal} {Physical Review Letters}\ }\textbf {\bibinfo {volume} {80}},\ \bibinfo {pages} {2245} (\bibinfo {year} {1998})}\BibitemShut {NoStop}%
\bibitem [{\citenamefont {Horodecki}\ \emph {et~al.}(2000)\citenamefont {Horodecki}, \citenamefont {Horodecki},\ and\ \citenamefont {Horodecki}}]{horodecki2000limits}%
  \BibitemOpen
  \bibfield  {author} {\bibinfo {author} {\bibfnamefont {Micha{\l}}\ \bibnamefont {Horodecki}}, \bibinfo {author} {\bibfnamefont {Pawe{\l}}\ \bibnamefont {Horodecki}}, \ and\ \bibinfo {author} {\bibfnamefont {Ryszard}\ \bibnamefont {Horodecki}},\ }\bibfield  {title} {\enquote {\bibinfo {title} {Limits for entanglement measures},}\ }\href {https://doi.org/10.1103/PhysRevLett.84.2014} {\bibfield  {journal} {\bibinfo  {journal} {Physical Review Letters}\ }\textbf {\bibinfo {volume} {84}},\ \bibinfo {pages} {2014} (\bibinfo {year} {2000})}\BibitemShut {NoStop}%
\bibitem [{\citenamefont {Francica}\ \emph {et~al.}(2017)\citenamefont {Francica}, \citenamefont {Goold}, \citenamefont {Plastina},\ and\ \citenamefont {Paternostro}}]{Francica2017}%
  \BibitemOpen
  \bibfield  {author} {\bibinfo {author} {\bibfnamefont {Gianluca}\ \bibnamefont {Francica}}, \bibinfo {author} {\bibfnamefont {John}\ \bibnamefont {Goold}}, \bibinfo {author} {\bibfnamefont {Francesco}\ \bibnamefont {Plastina}}, \ and\ \bibinfo {author} {\bibfnamefont {Mauro}\ \bibnamefont {Paternostro}},\ }\bibfield  {title} {\enquote {\bibinfo {title} {Daemonic ergotropy: enhanced work extraction from quantum correlations},}\ }\href {\doibase 10.1038/s41534-017-0012-8} {\bibfield  {journal} {\bibinfo  {journal} {npj Quantum Information}\ }\textbf {\bibinfo {volume} {3}},\ \bibinfo {pages} {12} (\bibinfo {year} {2017})}\BibitemShut {NoStop}%
\bibitem [{\citenamefont {Biswas}\ \emph {et~al.}(2025)\citenamefont {Biswas}, \citenamefont {Datta},\ and\ \citenamefont {Garc{\'\i}a-Pintos}}]{biswas2025quantum}%
  \BibitemOpen
  \bibfield  {author} {\bibinfo {author} {\bibfnamefont {Tanmoy}\ \bibnamefont {Biswas}}, \bibinfo {author} {\bibfnamefont {Chandan}\ \bibnamefont {Datta}}, \ and\ \bibinfo {author} {\bibfnamefont {Luis~Pedro}\ \bibnamefont {Garc{\'\i}a-Pintos}},\ }\bibfield  {title} {\enquote {\bibinfo {title} {Quantum thermodynamic advantage in work extraction from steerable quantum correlations},}\ }\href {https://doi.org/10.1103/9qcc-7lq5} {\bibfield  {journal} {\bibinfo  {journal} {Physical Review Letters}\ }\textbf {\bibinfo {volume} {135}},\ \bibinfo {pages} {110402} (\bibinfo {year} {2025})}\BibitemShut {NoStop}%
\bibitem [{\citenamefont {Heinosaari}\ and\ \citenamefont {Ziman}(2011)}]{heinosaari2011mathematical}%
  \BibitemOpen
  \bibfield  {author} {\bibinfo {author} {\bibfnamefont {Teiko}\ \bibnamefont {Heinosaari}}\ and\ \bibinfo {author} {\bibfnamefont {M{\'a}rio}\ \bibnamefont {Ziman}},\ }\href {https://doi.org/10.1017/CBO9781139031103} {\emph {\bibinfo {title} {The mathematical language of quantum theory: from uncertainty to entanglement}}}\ (\bibinfo  {publisher} {Cambridge University Press},\ \bibinfo {year} {2011})\BibitemShut {NoStop}%
\bibitem [{\citenamefont {Lepp{\"a}j{\"a}rvi}\ and\ \citenamefont {Sedl{\'a}k}(2021)}]{leppajarvi2021postprocessing}%
  \BibitemOpen
  \bibfield  {author} {\bibinfo {author} {\bibfnamefont {Leevi}\ \bibnamefont {Lepp{\"a}j{\"a}rvi}}\ and\ \bibinfo {author} {\bibfnamefont {Michal}\ \bibnamefont {Sedl{\'a}k}},\ }\bibfield  {title} {\enquote {\bibinfo {title} {Postprocessing of quantum instruments},}\ }\href {https://doi.org/10.1103/PhysRevA.103.022615} {\bibfield  {journal} {\bibinfo  {journal} {Physical Review A}\ }\textbf {\bibinfo {volume} {103}},\ \bibinfo {pages} {022615} (\bibinfo {year} {2021})}\BibitemShut {NoStop}%
\bibitem [{\citenamefont {Ghai}\ and\ \citenamefont {Mitra}(2025)}]{ghai2025instrument}%
  \BibitemOpen
  \bibfield  {author} {\bibinfo {author} {\bibfnamefont {Jatin}\ \bibnamefont {Ghai}}\ and\ \bibinfo {author} {\bibfnamefont {Arindam}\ \bibnamefont {Mitra}},\ }\bibfield  {title} {\enquote {\bibinfo {title} {Instrument-based quantum resources: quantification, hierarchies and towards constructing resource theories},}\ }\href {https://doi.org/10.48550/arXiv.2508.09134} {\bibfield  {journal} {\bibinfo  {journal} {arXiv preprint arXiv:2508.09134}\ } (\bibinfo {year} {2025})}\BibitemShut {NoStop}%
\bibitem [{\citenamefont {Sau}\ \emph {et~al.}(2026)\citenamefont {Sau}, \citenamefont {Jen{\v{c}}ov{\'a}},\ and\ \citenamefont {Guha}}]{sau2026demultiplexing}%
  \BibitemOpen
  \bibfield  {author} {\bibinfo {author} {\bibfnamefont {Soham}\ \bibnamefont {Sau}}, \bibinfo {author} {\bibfnamefont {Anna}\ \bibnamefont {Jen{\v{c}}ov{\'a}}}, \ and\ \bibinfo {author} {\bibfnamefont {Tamal}\ \bibnamefont {Guha}},\ }\bibfield  {title} {\enquote {\bibinfo {title} {Demultiplexing generalized information via quantum transmission lines},}\ }\href {https://doi.org/10.48550/arXiv.2606.17894} {\bibfield  {journal} {\bibinfo  {journal} {arXiv preprint arXiv:2606.17894}\ } (\bibinfo {year} {2026})}\BibitemShut {NoStop}%
\bibitem [{\citenamefont {Gisin}(1989)}]{gisin1989stochastic}%
  \BibitemOpen
  \bibfield  {author} {\bibinfo {author} {\bibfnamefont {Nicolas}\ \bibnamefont {Gisin}},\ }\bibfield  {title} {\enquote {\bibinfo {title} {Stochastic quantum dynamics and relativity},}\ }\href {https://doi.org/10.5169/SEALS-116034} {\bibfield  {journal} {\bibinfo  {journal} {Helv. Phys. Acta}\ }\textbf {\bibinfo {volume} {62}},\ \bibinfo {pages} {363--371} (\bibinfo {year} {1989})}\BibitemShut {NoStop}%
\bibitem [{\citenamefont {Hughston}\ \emph {et~al.}(1993)\citenamefont {Hughston}, \citenamefont {Jozsa},\ and\ \citenamefont {Wootters}}]{hughston1993complete}%
  \BibitemOpen
  \bibfield  {author} {\bibinfo {author} {\bibfnamefont {Lane~P}\ \bibnamefont {Hughston}}, \bibinfo {author} {\bibfnamefont {Richard}\ \bibnamefont {Jozsa}}, \ and\ \bibinfo {author} {\bibfnamefont {William~K}\ \bibnamefont {Wootters}},\ }\bibfield  {title} {\enquote {\bibinfo {title} {A complete classification of quantum ensembles having a given density matrix},}\ }\href {https://doi.org/10.1016/0375-9601(93)90880-9} {\bibfield  {journal} {\bibinfo  {journal} {Physics Letters A}\ }\textbf {\bibinfo {volume} {183}},\ \bibinfo {pages} {14--18} (\bibinfo {year} {1993})}\BibitemShut {NoStop}%
\bibitem [{\citenamefont {Piani}\ and\ \citenamefont {Watrous}(2015)}]{piani2015necessary}%
  \BibitemOpen
  \bibfield  {author} {\bibinfo {author} {\bibfnamefont {Marco}\ \bibnamefont {Piani}}\ and\ \bibinfo {author} {\bibfnamefont {John}\ \bibnamefont {Watrous}},\ }\bibfield  {title} {\enquote {\bibinfo {title} {Necessary and sufficient quantum information characterization of {Einstein-Podolsky-Rosen} steering},}\ }\href {https://doi.org/10.1103/PhysRevLett.114.060404} {\bibfield  {journal} {\bibinfo  {journal} {Physical Review Letters}\ }\textbf {\bibinfo {volume} {114}},\ \bibinfo {pages} {060404} (\bibinfo {year} {2015})}\BibitemShut {NoStop}%
\bibitem [{\citenamefont {Coffman}\ \emph {et~al.}(2000)\citenamefont {Coffman}, \citenamefont {Kundu},\ and\ \citenamefont {Wootters}}]{coffman2000distributed}%
  \BibitemOpen
  \bibfield  {author} {\bibinfo {author} {\bibfnamefont {Valerie}\ \bibnamefont {Coffman}}, \bibinfo {author} {\bibfnamefont {Joydip}\ \bibnamefont {Kundu}}, \ and\ \bibinfo {author} {\bibfnamefont {William~K}\ \bibnamefont {Wootters}},\ }\bibfield  {title} {\enquote {\bibinfo {title} {Distributed entanglement},}\ }\href {https://doi.org/10.1103/PhysRevA.61.052306} {\bibfield  {journal} {\bibinfo  {journal} {Physical Review A}\ }\textbf {\bibinfo {volume} {61}},\ \bibinfo {pages} {052306} (\bibinfo {year} {2000})}\BibitemShut {NoStop}%
\end{thebibliography}%
\section*{Appendix}
\subsection{Proof of Proposition \ref{p1}}\label{pp1}
Let us recall the expression of \(W_{weak}(\rho_s,\Lambda_{\beta})\) in  Definition \ref{d1},
\begin{align}\label{a1}
    \nonumber W_{weak}(\rho_s,\Lambda_{\beta})&=\max_{\{\Pi_b^k\}}\sum_kp_kF_{\beta}(\sigma_{s|k})\\\nonumber&=\max_{\{\Pi_b^k\}}\sum_kp_k[\Tr(H_s\sigma_{s|k})-\frac 1{\beta}S(\sigma_{s|k})]\\\nonumber&=\Tr(H_s\sum_kp_k\sigma_{s|k})-\frac 1{\beta}\min_{\{\Pi_b^k\}}\sum_kp_kS(\sigma_{s|k})\\&=E(\sigma_s)-\frac 1{\beta}\min_{\{\Pi_b^k\}}\sum_kp_kS(\sigma_{s|k}),
\end{align}
where all the symbols have usual meaning as in the main text and we have used the fact that \(\sum_k p_k\sigma_{s|k}=\sigma_s=\Lambda_{\beta}(\rho_s)\) for any possible choice of the measurement \(\{\Pi_b^k\}\). 

Now, using the concavity of the von Neumann entropy, we have 
\[\sum_k p_kS(\sigma_{s|k})\leq S(\sum_k p_k\sigma_{s|k})=S(\sigma_s).\]
This, together with Eq. (\ref{a1}) implies 
\begin{align}\label{a2}
W_{weak}(\rho_s,\Lambda_{\beta})\geq E(\sigma_s)-\frac 1{\beta}S(\sigma_s)=F_{\beta}(\sigma_s).
\end{align}
Let us now consider the charge retrieval under strong assistance with a specific choice of measurements for both the bath and the reference systems. In particular, we choose \(\{\Pi_k^{b}\}^*\) as the optimal measurement to achieve \(W_{weak}(\rho_s,\Lambda_{\beta})\) and the trivial measurement \(\{T_R^0=\mathbb{I}_R,T_R^1=\Theta_R\}\), \(\Theta_R\) being the null operator, for the reference system. It is then easy to argue that
\begin{align}\label{a3}
W_{strong}(\rho_s,\Lambda_{\beta},\{(\Pi_b^k)^*\otimes T_R^l\})\leq W_{strong}(\rho_s,\Lambda_{\beta}),
\end{align}
since the latter involves optimization over all possible product measurements allowed to perform on the bath and the reference systems.

Now, for \(W_{strong}(\rho_s,\Lambda_{\beta},\{(\Pi_b^k)^*\otimes T_R^l\})\), we have
\begin{align*}
p_{k,l}&=\Tr[(\mathbb{I}_s\otimes(\Pi_b^k)^*\otimes T_R^l)\sigma_{sbR}]\\
&=\Tr[(\mathbb{I}_s\otimes(\Pi_b^k)^*)\Tr_R(\sigma_{sbR})]\times\delta_{l,0}\\&=p_k\times\delta_{l,0}\\\text{and }\sigma_{s|k,0}&=\frac 1{p_{k,0}}\Tr_{bR}[(\mathbb{I}_s\otimes(\Pi_b^k)^*\otimes\mathbb{I}_R)\sigma_{sbR}]\\&=\frac 1{p_k}\Tr_b[(\mathbb{I}_s\otimes(\Pi_b^k)^*)\Tr_R(\sigma_{sbR})].%,\\\sigma_{s|k,1}&=\Theta.
\end{align*}
Therefore, we can identify
\[W_{strong}(\rho_s,\Lambda_{\beta},\{(\Pi_b^k)^*\otimes T_R^l\})=W_{weak}(\rho_s,\Lambda_{\beta}).\]
Hence using Eq. (\ref{a3}), we can conclude
\begin{align}\label{a4}
    W_{weak}(\rho_s,\Lambda_{\beta})\leq W_{strong}(\rho_s,\Lambda_{\beta}).
\end{align}
Finally, using the analysis similar to that of Eq. (\ref{a1}) we have 
\begin{align}\label{a4.5}
    W_{strong}(\rho_s,\Lambda_{\beta})=E(\sigma_s)-\frac 1{\beta}\min_{\{\Pi_b^k\otimes\Pi_R^l\}}\sum_{k,l}p_{k,l}S(\sigma_{s|k,l}).
\end{align}
This, together with the positivity of the von Neumann entropy implies
\begin{align}\label{a5}
    W_{strong}(\rho_s,\Lambda_{\beta})\leq E(\sigma_s). 
\end{align}
Therefore, from Eqs. (\ref{a2}),  (\ref{a4}) and (\ref{a5}) together, we conclude the proof of the proposition.
\subsection{Proof of Theorem \ref{t1}}\label{pt1}
Let us begin with the following two lemmas, which will be instrumental for some of the results.
\begin{lemma}\label{l1}
For a bipartite state \(\rho_{AB}\), with a given decomposition \(\rho_{AB}=\sum_i q_i\eta^i_{AB}\), consider the following quantity
\[S_{\{q_i,\eta_i\}}(A)=\sum_i q_iS(\eta^i_A),\text{ where, }\eta_A^i=\Tr_B[\eta^i_{AB}].\]
The value of \(S_{\{q_i,\eta_i\}}(A)\) reaches to its minimum, only for a possible pure decomposition of \(\rho_{AB}\). 
\end{lemma}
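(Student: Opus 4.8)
The plan is to show that \emph{any} decomposition of $\rho_{AB}$ can be refined into one whose components are all pure, in such a way that the figure of merit $S_{\{q_i,\eta_i\}}(A)$ does not increase. Since the pure decompositions form a subset of all decompositions of $\rho_{AB}$, this forces the minimum of $S_{\{q_i,\eta_i\}}(A)$ to be attained on a pure decomposition. The only input needed is the convex structure of the set of decompositions together with concavity of the von Neumann entropy; no optimization over measurements or channels enters.

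Concretely, I would take an arbitrary decomposition $\rho_{AB}=\sum_i q_i\,\eta^i_{AB}$ and, for each $i$, fix a spectral (or any) pure-state decomposition of the component, $\eta^i_{AB}=\sum_j r_{ij}\,\ketbra{\psi_{ij}}{\psi_{ij}}_{AB}$ with $r_{ij}\ge 0$ and $\sum_j r_{ij}=1$. Substituting gives the refined pure decomposition $\rho_{AB}=\sum_{i,j} q_i r_{ij}\,\ketbra{\psi_{ij}}{\psi_{ij}}_{AB}$, whose weights $q_i r_{ij}$ sum to one. Writing $\psi^A_{ij}:=\Tr_B\ketbra{\psi_{ij}}{\psi_{ij}}$, linearity of the partial trace gives $\eta^i_A=\sum_j r_{ij}\,\psi^A_{ij}$, so concavity of $S$ yields $S(\eta^i_A)\ge\sum_j r_{ij}S(\psi^A_{ij})$; multiplying by $q_i$ and summing,
\begin{align*}
S_{\{q_i,\eta_i\}}(A)&=\sum_i q_i S(\eta^i_A)\ \ge\ \sum_{i,j} q_i r_{ij}\,S(\psi^A_{ij})\\
&=S_{\{q_i r_{ij},\,\psi_{ij}\}}(A),
\end{align*}
the right-hand side being the figure of merit of the refined pure decomposition. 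Hence the infimum over all decompositions equals the infimum over pure decompositions and is attained there.

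The one point I would be careful about is the exact force of ``only for a pure decomposition.'' A minimizer need not literally consist of pure states — e.g.\ a component that is a product with a fixed pure $A$-marginal contributes zero entropy and can be refined freely without changing anything — so the honest statement is that the minimum is \emph{achievable} by a pure decomposition, which is precisely what the concavity refinement above delivers. If one wants the sharper reading, strict concavity of $S$ turns the displayed inequality into an equality exactly when, for each $i$ with $q_i>0$, all marginals $\psi^A_{ij}$ with $r_{ij}>0$ coincide (so the component already carries no ``extra'' mixing on $A$); any genuine additional mixing on the $A$ side is then strictly suboptimal. I expect this interpretational remark, rather than any computation, to be the only delicate part.
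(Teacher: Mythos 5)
Your proposal is correct and follows essentially the same route as the paper: refine each component of an arbitrary decomposition into pure states and invoke concavity of the von Neumann entropy to show the refined pure decomposition does at least as well. Your closing remark is also apt — the paper phrases this as a proof by contradiction, but since concavity only gives a non-strict inequality the honest conclusion is exactly the one you state, namely that the minimum is \emph{attainable} on a pure decomposition rather than that every minimizer is pure.
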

\begin{proof}
    Suppose the optimal decomposition of \(\rho_{AB}=\sum_ip_i\sigma_{AB}^i\), such that
    \begin{align*}
        S_{opt}(A)=S_{\{p_i,\sigma_i\}}(A)=\sum_i p_iS(\sigma_A^i)
    \end{align*}
    and there exists atleast one \(\sigma_{AB}^{i^{\star}}\) which is mixed in nature. That is,
    \[\sigma_{AB}^{i^{\star}}=\sum_{k_i}t_{k_i}\ketbra{\psi_{k_i}}{\psi_{k_i}}_{AB}.\] 
    But then, by denoting \(\sigma_A^{k_i}=\Tr_B(\ketbra{\psi_{k_i}}{\psi_{k_i}})\) and using the concavity of von Neumann entropy,
    \begin{align*}
      S_{\{p_it_{k_i},\ketbra{\psi_{k_i}}{\psi_{k_i}}\}}(A)&=\sum_{i\neq i^*}p_iS(\sigma^i_A)+p_{i^{\star}}\sum_{k_i}t_{k_i}S(\sigma_A^{k_i})\\&\leq \sum_{i\neq i^*}p_iS(\sigma^i_A)+p_{i^{\star}}S(\sum_{k_i}t_{k_i}\sigma_A^{k_i})\\&=\sum_{i\neq i^*}p_iS(\sigma^i_A)+p_{i^{\star}}S(\sigma_A^{i^{\star}})\\&=S_{opt}(A)  
    \end{align*}
    This leads to a contradiction and hence completes the proof.
\end{proof}
\begin{lemma}\label{l2}
     Consider a bipartite state \(\rho_{AB}\). Then the minimal average entropy of the marginal  \(A\), optimized over all possible POVM for \(B\) equals to \(E_f(\rho_{AC})\). Here, \(E_f(\rho_{AC})\) is the entanglement of formation of the state \(\rho_{AC}=\Tr_B(\ketbra{\psi}{\psi}_{ABC})\) and \(\ket{\psi}_{ABC}\) is the purification of \(\rho_{AB}\), that is, \(\rho_{AB}=\Tr_C(\ketbra{\psi}{\psi}_{ABC})\).
\end{lemma}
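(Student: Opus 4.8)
The plan is to prove the two inequalities separately, using the purification $\ket{\psi}_{ABC}$ as the bridge between POVMs on $B$ and pure-state ensemble decompositions of $\rho_{AC}=\Tr_B(\ketbra{\psi}{\psi}_{ABC})$.

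For the lower bound, I would first reduce to rank-one POVMs on $B$. If an effect $\Pi_B^k$ has rank larger than one, split it into rank-one pieces $\Pi_B^k=\sum_j\Pi_B^{kj}$, so that the corresponding conditional marginal on $A$ decomposes as $\rho_{A|k}=\sum_j(p_{kj}/p_k)\,\rho_{A|kj}$ with $p_k=\sum_j p_{kj}$; concavity of the von Neumann entropy then gives $S(\rho_{A|k})\ge\sum_j(p_{kj}/p_k)\,S(\rho_{A|kj})$, so refining never increases $\sum_k p_k S(\rho_{A|k})$ --- the same mechanism that underlies Lemma~\ref{l1}. It therefore suffices to bound the objective for rank-one POVMs $\{\Pi_B^k=\ketbra{b_k}{b_k}\}_k$. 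Applying such a measurement to the pure state $\ket{\psi}_{ABC}$, the conditional post-measurement state of $AC$ given outcome $k$ is again a pure state $\ketbra{\chi_k}{\chi_k}_{AC}$, and $\rho_{AC}=\sum_k p_k\ketbra{\chi_k}{\chi_k}_{AC}$ is a legitimate pure-state decomposition. Since $\rho_{A|k}=\Tr_C(\ketbra{\chi_k}{\chi_k}_{AC})$, we get $S(\rho_{A|k})=E_f(\ketbra{\chi_k}{\chi_k}_{AC})$, the entanglement entropy of that pure state, and hence $\sum_k p_k S(\rho_{A|k})=\sum_k p_k E_f(\ketbra{\chi_k}{\chi_k}_{AC})\ge E_f(\rho_{AC})$ by the definition in Eq.~\eqref{e8}.

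For the matching upper bound, I would invoke the Gisin--Hughston--Jozsa--Wootters (GHJW) theorem. Fix an optimal pure-state decomposition $\rho_{AC}=\sum_k p_k\ketbra{\chi_k}{\chi_k}_{AC}$ attaining $E_f(\rho_{AC})$. Because $\ket{\psi}_{ABC}$ is a purification of $\rho_{AC}$ with $B$ playing the role of the purifying system, there exists a POVM $\{\Pi_B^k\}_k$ on $B$ which, performed on $\ket{\psi}_{ABC}$, produces outcome $k$ with probability $p_k$ and leaves $AC$ in the state $\ketbra{\chi_k}{\chi_k}_{AC}$. Tracing out $C$ then gives $\rho_{A|k}=\Tr_C(\ketbra{\chi_k}{\chi_k}_{AC})$, so $\sum_k p_k S(\rho_{A|k})=\sum_k p_k E_f(\ketbra{\chi_k}{\chi_k}_{AC})=E_f(\rho_{AC})$; since the left-hand side is an admissible value of the minimisation over POVMs on $B$, the minimum is at most $E_f(\rho_{AC})$. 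Combining with the previous paragraph yields the asserted equality.

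I expect the main obstacle to be the upper bound, specifically ensuring that the optimal pure decomposition of $\rho_{AC}$ is physically realisable by a measurement of $B$ alone (not of $BC$ jointly). This is precisely guaranteed by the way $\ket{\psi}_{ABC}$ is set up, namely $\Tr_B(\ketbra{\psi}{\psi}_{ABC})=\rho_{AC}$, so that $B$ on its own purifies $\rho_{AC}$; the GHJW theorem then supplies the required measurement, with POVMs rather than projective measurements absorbing any mismatch between the number of ensemble elements and $\dim\mathcal{H}_B$. A secondary detail to state carefully is the reduction to rank-one POVMs in the lower bound, which leans on the concavity argument behind Lemma~\ref{l1}.
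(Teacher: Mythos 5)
Your proposal is correct and follows essentially the same route as the paper's proof: reduction to rank-one POVMs via concavity (the content of Lemma~\ref{l1}), the GHJW correspondence between rank-one measurements on the purifying system $B$ and pure-state decompositions of $\rho_{AC}$, and the identification $E_f(\ketbra{\chi_k}{\chi_k}_{AC})=S(\rho_{A|k})$ for pure states. The only difference is presentational --- you split the argument into two inequalities where the paper writes a single chain of equalities, which makes the two directions of the GHJW correspondence slightly more explicit but adds nothing substantively new.
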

\begin{proof}
    The minimal average entropy at \(A\), conditioned over the measurement outcomes, optimized over all possible POVMs at \(B\), can be written as
    \[S_{opt}(A|B)=\min_{\{\Pi_k\}_k}\sum_k p_k S(\rho_{A|k}),\]
    where \(p_k\) is the probability that the POVM element \(\Pi_k\) clicks at \(B\) and \(\rho_{A|k}\) is the updated state at \(A\) in that particular instant. 
    
    Also note that every possible measurement \(\{\Pi_k\}_k\) at \(B\), produces different decompositions of the state \(\rho_{AC}\) \cite{gisin1989stochastic, hughston1993complete}. Thanks to Lemma \ref{l1}, we can now restrict ourselves in the particular set of POVMs \(\{M_k\}_k\)---only with  rank one elements \cite{piani2015necessary}, producing all possible pure decompositions of \(\rho_{AC}=\sum_k p_k \ketbra{\psi_k}{\psi_k}_{AC}\). So,
    \begin{align}\label{a6}
    S_{opt}(A|B)&=\min_{\{M_k\}_k}\sum_kp_k S(\rho_{A|k}),\\\nonumber
    \text{where, }p_k&=\Tr[(\mathbb{I}_A\otimes {M_k})\rho_{AB}]\\\nonumber&=\Tr[(\mathbb{I}_A\otimes {M_k}\otimes\mathbb{I}_C)\ketbra{\psi}{\psi}_{ABC}],\\\nonumber\text{and }\rho_{A|k}&=\Tr_B[(\mathbb{I}_A\otimes {M_k}_)\rho_{AB}]\\\nonumber&=\Tr_C(\ketbra{\psi_k}{\psi_k}_{AC}).
    \end{align}
    Now, for any bipartite pure state \(\ketbra{\psi_k}{\psi_k}_{AC}\), the entanglement of formation \cite{wootters1998entanglement} (in fact, all possible entanglement measures \cite{horodecki2000limits})
    \[E_f(\ketbra{\psi_k}{\psi_k}_{AC})=S(\rho_{A|k})=S(\rho_{C|k}).\]
    Therefore, using this expression in Eq. (\ref{a6}), we have
    \begin{align*}
    S_{opt}(A|B)&=\min_{\{p_k,\psi_k\}} \sum_k p_k E_f(\ketbra{\psi_k}{\psi_k}_{AC})\\&=E_f(\rho_{AC})
    \end{align*}
   Hence this completes the proof.
   \end{proof}
   Coming back to the proof of the main Theorem, it is first important to recall the expression of the weak retrieval of charge in a quantum battery, as in Eq. (\ref{a1}), and then applying Lemma \ref{l2} we  have
   \begin{align}\label{a7}
       \nonumber W_{weak}(\rho_s,\Lambda_{\beta})&=E(\sigma_s)-\frac 1{\beta} \min_{\{\Pi_b^k \}}\sum_kp_kS(\sigma_{s|k})\\&=E(\sigma_s)-\frac 1{\beta} E_f(\sigma_{sRR'})
   \end{align}
   where, \(\sigma_{sRR'}=\Tr_b(\ketbra{\psi}{\psi}_{sbRR'})\) and \(\ket{\psi}_{sbRR'}\) is the purification of the state 
   \[\sigma_{sbR}=(U_{sb}\otimes\mathbb{I}_R)(\rho_s\otimes\ketbra{\phi^+_{\beta}}{\phi^+_{\beta}}_{bR})(U_{sb}^{\dagger}\otimes\mathbb{I}_R)\]
Also note that the \(E_f(\sigma_{sRR'})\) implies the entanglement of formation of the state \(\sigma_{sRR'}\) in the \(s|RR'\) bipartition. It is now trivial to argue that
\begin{align}\label{a8}
E_f(\sigma_{sR})\leq E_f(\sigma_{sRR'}),
\end{align}
where, \(\sigma_{sR}=\Tr_{R'}[\sigma_{sRR'}]\), i.e., just by tracing out the subsystem \(R'\) and since, any entanglement measure (including \(E_f\)) non-increasing under LOCC. Therefore, by invoking Eq. (\ref{a8}) in Eq. (\ref{a7}), we can conclude
\[W_{weak}(\rho_s,\Lambda_{\beta})\leq E(\sigma_s)-\frac 1{\beta} E_f(\sigma_{sR}).\]
Moreover, when \(\rho_s\) is a pure quantum battery, then \(\sigma_{sbR}\) is also pure. Hence, in that case \(E_f(\sigma_{sRR'})=E_f(\sigma_{sR})\). This trivially implies that the above inequality saturates for \(\rho_s\) being pure and therefore,  completes the proof.
\subsection{Proof of Theorem \ref{t2}}\label{pt2}
Let us first recall the expression of charge retrieval in a qubit battery, under strong assistance, as in Eq. (\ref{a4.5}), and using Lemma \ref{l1}, we can again restrict the local POVMs at \(b\) and \(R\), with rank one elements \(\{M^k_b\otimes M^l_R\}_{k,l}\) only, that is
\begin{align}\label{a8.5}
W_{strong}(\rho_s,\Lambda_{\beta})=E(\sigma_s)-\frac 1{\beta}\min_{\{M_b^k\otimes M_R^l\}_{k,l}}\sum_{k,l}p_{k,l}S(\sigma_{s|k,l}).\end{align}
Also note that
\begin{align}\label{a9}
    \nonumber\min_{\{M_b^k\otimes M_R^l\}_{k,l}}\sum_{k,l}p_{k,l}S(\sigma_{s|k,l})&\geq \min_{\{M_{bR}^m\}_m}\sum_m p_m S(\sigma_{s|m})\\&=E_f(\sigma_{sR'})
\end{align}
where, \(\{M^m_{bR}\}_m\) is the one-rank POVM acting jointly on the bath (\(b\)) and the reference (\(R\)) subsystems and \(\sigma_{sR'}=\Tr_{bR}(\ketbra{\psi}{\psi}_{sbRR'})\). Then Eq. (\ref{a8.5}) can be written in terms of the following inequality
\begin{align}\label{a9.25}
   W_{strong}(\rho_s,\Lambda_{\beta})\leq E(\sigma_s)-\frac 1{\beta}E_f(\sigma_{sR'}). 
\end{align}
As mentioned earlier, whenever the input quantum battery \(\rho_s\) is pure, the state \(\sigma_{sbR}\) is also pure. Therefore, \(E_f(\sigma_{sR'})=0\), for the pure quantum state \(\rho_s\). 

Nevertheless, for the pure tripartite state \(\ket{\psi}_{sbR}\) we can choose any arbitrary product basis \(\{\{\ket{\psi_i}_b\otimes\ket{\phi_j}_R\}_{i=0}^{d_b-1}\}_{j=0}^{d_R-1}\), such that
\begin{align}\label{a9.5}
    \nonumber\ket{\phi_{\psi}^{\beta}}_{sbR}=\sum_{i=0}^{d_b-1}\sum_{j=0}^{d_R-1}c_{ij}\ket{\xi_{i,j}}_s&\otimes\ket{\psi_i}_b\otimes\ket{\phi_j}_R,\\\text{where, }\Tr_{bR}[\ketbra{\phi_{\psi}^{\beta}}{\phi_{\psi}^{\beta}}_{sbR}]=& \sigma_s=\Lambda_{\beta}(\ketbra{\psi}{\psi}_s)
\end{align}
Evidently, if the bath and the reference qudits are measured in the \(\{\ketbra{\psi_i}{\psi_i}\}_{i=0}^{d_b-1}\) and \(\{\ketbra{\phi_j}{\phi_j}\}_{j=0}^{d_R-1}\) bases respectively, then the post measurement states in the QB are pure---possessing no entropy. Therefore, using Eq. (\ref{a8.5})
\[W_{strong}(\rho_s,\Lambda_{\beta})=E(\sigma_s),\]
Note that whenever \(\rho_s\) is pure, then in fact, any projective measurement will work for the maximal amount of the charge retrieval under strong assistance. This completes the proof.
%On the other hand, non-negativity of von Neumann entropy implies that the left hand side of Eq. (\ref{a9}) is always non-negative. This further implies, 
%\[\min_{\{M_b^k\otimes M_R^l\}_{k,l}}\sum_{k,l}p_{k,l}S(\sigma_{s|k,l})=0,\]
%for every pure quantum batteries and hence in that case
%\[W_{strong}(\rho_s,\Lambda_{\beta})=E(\sigma_s).\]
\subsection{Proof of Theorem \ref{t3}}\label{pt3}
\subsubsection{Proof of part (i)}
  We begin with the following two Lemmas. While Lemma \ref{l3} is simple, we will find a nontrivial implication of it in Lemma \ref{l4}, in the  question of identifying the Stinespring representation of a thermal map \(\Lambda_{\beta}(\rho_s)=\tau_{\beta}^s,~\forall \rho_s\), uniquely.
  \begin{lemma}\label{l3}
      Consider any N-partite pure quantum state \(\ket{\psi}_{A_1\cdots A_N}\), with the iso-spectral marginals for \(A_1,\cdots,A_k\), where \(k<N\) and 
      \(\rho_{A_j}=\sum_i p_i \ketbra{\psi^{(j)}_i}{\psi^{(j)}_i}_{A_j} \). Then the state admits the following canonical form (up to local unitary):
      \begin{align}\label{a10}
      \ket{\psi}_{A_1\cdots A_N}=\sum_i\sqrt{p_i}\bigotimes_{j=1}^k\ket{\psi_i^{(j)}}_{A_j}\otimes\ket{\phi_i}_{A_{k+1}\cdots A_N},
      \end{align}
      where, \(\forall j~,\langle\psi_i^{(j)}|\psi_k^{(j)}\rangle=\delta_{i,k}\) and \(\langle\phi_i|\phi_k\rangle=\delta_{i,k}\) whenever \(\ket{\phi_i}\nsim\ket{\phi_k}\).
  \end{lemma}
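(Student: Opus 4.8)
The plan is to induct on $k$. The base case $k=1$ is nothing but the Schmidt decomposition of $\ket{\psi}$ across the cut $A_1\,|\,A_2\cdots A_N$: since $\rho_{A_1}=\sum_i p_i\ketbra{\psi^{(1)}_i}{\psi^{(1)}_i}$, the Schmidt coefficients are $\{\sqrt{p_i}\}$, the $A_1$-side Schmidt vectors may be taken to be precisely the eigenvectors $\ket{\psi^{(1)}_i}$ (any residual basis freedom inside a degenerate eigenspace of $\rho_{A_1}$ being exactly the ``up to local unitary'' clause), and the companion vectors $\ket{\phi_i}\in\mathcal{H}_{A_2\cdots A_N}$ are automatically orthonormal, so the stated condition on the $\ket{\phi_i}$ holds trivially here.

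For the inductive step I would assume the $(k-1)$-factor form, $\ket{\psi}=\sum_i\sqrt{p_i}\bigl(\bigotimes_{j=1}^{k-1}\ket{\psi^{(j)}_i}_{A_j}\bigr)\otimes\ket{\varphi_i}_{A_k\cdots A_N}$, with $\{\ket{\psi^{(j)}_i}\}_i$ orthonormal for each $j<k$ and the $\ket{\varphi_i}$ mutually orthogonal or parallel, and then bring in the extra hypothesis that $\rho_{A_k}$ has spectrum $\{p_i\}$. Tracing out $A_1,\dots,A_{k-1}$ annihilates every cross term by orthonormality of the $\ket{\psi^{(j)}_i}$, so $\Tr_{A_1\cdots A_{k-1}}\ketbra{\psi}{\psi}=\sum_i p_i\ketbra{\varphi_i}{\varphi_i}=:\omega$ and hence $\rho_{A_k}=\Tr_{A_{k+1}\cdots A_N}\omega=\sum_i p_i\,\tau^{(i)}$ with $\tau^{(i)}:=\Tr_{A_{k+1}\cdots A_N}\ketbra{\varphi_i}{\varphi_i}$. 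Since the eigenvalues of $\rho_{A_k}$ are $\{p_i\}$, we have $S(\rho_{A_k})=H(\{p_i\})=S(\omega)$, i.e.\ $\omega$ saturates $S(A_k)\le S(A_kA_{k+1}\cdots A_N)$; invoking the structure theorem for states with vanishing conditional entropy then forces each $\tau^{(i)}$ to be pure, $\tau^{(i)}=\ketbra{\psi^{(k)}_i}{\psi^{(k)}_i}$, and each $\ket{\varphi_i}$ to factorize as $\ket{\psi^{(k)}_i}_{A_k}\otimes\ket{\phi_i}_{A_{k+1}\cdots A_N}$, with the $A_k$-parts $\{\ket{\psi^{(k)}_i}\}_i$ orthonormal. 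Absorbing the remaining basis freedom inside degenerate eigenspaces of $\rho_{A_k}$ into a local unitary on $A_k$ produces the $k$-factor canonical form and closes the induction.

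The step I expect to be the main obstacle is exactly this last implication: converting the single scalar fact $S(\rho_{A_k})=H(\{p_i\})$ into the product-plus-orthogonality structure of the $\ket{\varphi_i}$. The clean route is through the equality conditions of the relevant (sub)additivity inequality, which yield both the factorization $\ket{\varphi_i}=\ket{\psi^{(k)}_i}_{A_k}\otimes\ket{\phi_i}_{A_{k+1}\cdots A_N}$ and the orthonormality of the $A_k$-parts; the ``orthogonal or parallel'' weakening in the conclusion for the $\ket{\phi_i}$ is precisely the imprint of a degenerate common spectrum, and one has to track carefully which vectors in the decomposition are genuinely orthonormal and which merely coincide. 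The remaining bookkeeping — that the local unitaries introduced at the successive stages compose consistently, and that the degenerate case is handled without circularity — is routine but must be carried out with some care.
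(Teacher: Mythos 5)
Your plan locates the crux correctly, but the step you flag as ``the main obstacle'' is not an obstacle to be engineered around --- it is where the argument genuinely fails, for two reasons. First, $S(A_k)\le S(A_kA_{k+1}\cdots A_N)$ is not a valid quantum inequality: the conditional entropy $S(A_{k+1}\cdots A_N|A_k)$ can be negative (any state pure and entangled across that cut has $S(A_kB)=0<S(A_k)$), so there are no ``equality conditions'' of it to invoke. What is true for your $\omega=\sum_i p_i\ketbra{\varphi_i}{\varphi_i}$ with orthonormal $\ket{\varphi_i}$ is the scalar identity $S(\omega)=H(\{p_i\})=S(\rho_{A_k})$, i.e.\ vanishing conditional entropy; but there is no structure theorem asserting that $S(B|A)=0$ forces the classical--product form $\sum_i p_i\ketbra{\psi_i}{\psi_i}_A\otimes\ketbra{\phi_i}{\phi_i}_B$ (a maximally entangled pair on $AB_1$ tensored with a maximally mixed $B_2$ of matching dimension has $S(B|A)=0$ and is entangled). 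The single equation $S(\rho_{A_k})=H(\{p_i\})$ simply does not imply that each $\tau^{(i)}$ is pure.

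Concretely, the implication --- and in fact the lemma as stated --- fails for
\[
\ket{\psi}_{A_1A_2A_3}=\sqrt{\tfrac15}\,\ket{0}_{A_1}\ket{0}_{A_2}\ket{e_1}_{A_3}+\sqrt{\tfrac25}\,\ket{0}_{A_1}\ket{1}_{A_2}\ket{e_2}_{A_3}+\sqrt{\tfrac25}\,\ket{1}_{A_1}\ket{0}_{A_2}\ket{e_0}_{A_3},
\]
with $\{\ket{e_0},\ket{e_1},\ket{e_2}\}$ orthonormal. Here $\rho_{A_1}=\rho_{A_2}=\tfrac35\ketbra{0}{0}+\tfrac25\ketbra{1}{1}$, so the hypotheses hold with a nondegenerate spectrum, yet $\rho_{A_1A_2}$ has rank three, whereas any state of the form (\ref{a10}) with $k=2$ has $\rho_{A_1A_2}$ of rank at most two; in your inductive notation, $\ket{\varphi_1}=\sqrt{1/3}\,\ket{0}_{A_2}\ket{e_1}_{A_3}+\sqrt{2/3}\,\ket{1}_{A_2}\ket{e_2}_{A_3}$ is genuinely entangled across $A_2|A_3$ even though $S(\rho_{A_2})=H(\{p_i\})$. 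For what it is worth, the paper's own proof makes exactly the same unjustified jump (from ``$\sum_i p_i\Tr_{A_3\cdots A_N}[\ketbra{\xi_i}{\xi_i}]$ has spectrum $\{p_i\}$'' to ``each $\ket{\xi_i}$ factorizes''), so you have reproduced rather than missed its reasoning; but the statement needs hypotheses beyond iso-spectrality of the single-party marginals before any proof, by this route or another, can go through.
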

  \begin{proof}
      Consider the \(A_1\) vs. rest bipartition of the state \(\ket{\psi}_{A_1\cdots A_N}\). Since \(\rho_{A_1}=\sum_ip_i\ketbra{\psi_i^{(1)}}{\psi_i^{(1)}}_{A_1}\), we can write the pure state in its Schimdt form:
      \[\ket{\psi}_{A_1\cdots A_N}=\sum_i\sqrt{p_i} \ket{\psi_i^{(1)}}_{A_1}\otimes\ket{\xi_i}_{A_2\cdots A_N}.\]
      where, \(\langle\xi_i|\xi_k\rangle=\delta_{i,k},\) whenever \(\ket{\xi_i}\nsim\ket{\xi_k}\).
      This further implies that 
      \[\rho_{A_2\cdots A_N}=\sum_i p_i\ketbra{\xi_i}{\xi_i}_{A_2\cdots A_N}.\]
      But we have, %
      \begin{align*} 
      \rho_{A_2}&=\sum_ip_i\ketbra{\psi_i^{(2)}}{\psi_i^{(2)}}_{A_2}\\&=\Tr_{A_3\cdots A_N}[\rho_{A_2\cdots A_N}]\\&=\sum_i p_i\Tr_{A_3\cdots A_N}[\ketbra{\xi_i}{\xi_i}_{A_2\cdots A_N}
      ].\end{align*}
      Therefore,
      \[\forall i,~\ket{\xi_i}_{A_2\cdots A_N}=\ket{\psi_i^{(2)}}_{A_2}\otimes \ket{\eta_i}_{A_3\cdots A_N},\]
      and hence,
      \[\ket{\psi}_{A_1\cdots A_N}=\sum_i\sqrt{p_i} \ket{\psi_i^{(1)}}_{A_1}\otimes\ket{\psi_i^{(2)}}_{A_2}\otimes\ket{\eta_i}_{A_3\cdots A_N}.\]
      where, we again have \(\langle\eta_i|\eta_k\rangle=\delta_{i,k},\) whenever \(\ket{\eta_i}\nsim\ket{\eta_k}\). We can now trivially extend similar representation for each of the parties \(\{A_3,\cdots,A_k\}\), which leads to the final form as in Eq. (\ref{a10}).

      Also note that, this canonical form is unique up to local unitary for each of the parties \(\{A_1,\cdots,A_k\}\), where each \(U_j,~j\in\{1,\cdots,k\}\) is diagonalized in the corresponding \(\{\ket{\psi_i^{(j)}}\}_i\) basis.
  \end{proof}

  \begin{lemma}\label{l4}
      For completely non-degenerate \(H_s=H_b\), the energy-conserving joint unitary for every thermal map is uniquely the SWAP operation between the QB and the bath qudit (upto local energy preserving unitary).
  \end{lemma}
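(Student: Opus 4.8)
The plan is to run each energy eigenstate of the battery through the isometric extension (\ref{e4}), apply Lemma \ref{l3} to the resulting tripartite pure state, and then let energy conservation pin down the remaining freedom. The point is that a thermal map forces \emph{both} the system \emph{and} the reference to come out thermal, and that is exactly the input Lemma \ref{l3} needs.

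First I would fix an energy eigenstate \(\ket{\epsilon_m}_s\) and study the pure output \(\ket{\Psi_m}_{sbR}:=\mathcal{V}_{\Lambda_\beta}\ket{\epsilon_m}_s=(U_{sb}\otimes\mathbb{I}_R)(\ket{\epsilon_m}_s\otimes\ket{\phi^+_\beta}_{bR})\). Its reduced state on \(s\) is \(\Lambda_\beta(\ketbra{\epsilon_m}{\epsilon_m})=\tau^s_\beta\) because \(\Lambda_\beta\) is a thermal map by hypothesis, and its reduced state on \(R\) is \(\Tr_b\ketbra{\phi^+_\beta}{\phi^+_\beta}_{bR}=\tau_\beta\), since \(R\) is a spectator on which \(U_{sb}\) acts trivially. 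Because \(H_s=H_b\) (and \(R\) carries the same Hamiltonian), these two marginals are iso-spectral, so Lemma \ref{l3} applies with \(s\) and \(R\) as the iso-spectral parties.

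Then, since the Gibbs weights \(\lambda^\beta_i=e^{-\beta\epsilon_i}/Z\) are pairwise distinct for non-degenerate \(H_s=H_b\), the canonical form of Lemma \ref{l3} is forced to be \(\ket{\Psi_m}_{sbR}=\sum_i\sqrt{\lambda^\beta_i}\,\ket{\epsilon_i}_s\otimes\ket{\chi^{(m)}_i}_b\otimes\ket{\epsilon_i}_R\) for some unit vectors \(\ket{\chi^{(m)}_i}_b\) (the \(s\)- and \(R\)-eigenbases must coincide with the energy eigenbasis, up to phases absorbed into \(\ket{\chi^{(m)}_i}_b\)). Expanding the same state directly from the definition and (\ref{e3}) gives \(\ket{\Psi_m}_{sbR}=\sum_k\sqrt{\lambda^\beta_k}\,(U_{sb}\ket{\epsilon_m}_s\ket{\epsilon_k}_b)\otimes\ket{\epsilon_k}_R\). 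Matching the two expansions along the orthonormal family \(\{\ket{\epsilon_k}_R\}\) (with positive coefficients) yields \(U_{sb}\ket{\epsilon_m}_s\ket{\epsilon_k}_b=\ket{\epsilon_k}_s\otimes\ket{\chi^{(m)}_k}_b\), and letting \(m\) vary determines \(U_{sb}\) on the whole product energy basis. Now energy conservation forces \(\ket{\chi^{(m)}_k}_b\) to be an \(H_b\)-eigenstate of energy \((\epsilon_m+\epsilon_k)-\epsilon_k=\epsilon_m\), hence \(\ket{\chi^{(m)}_k}_b=\omega_{m,k}\ket{\epsilon_m}_b\) for a phase \(\omega_{m,k}\), so \(U_{sb}\ket{\epsilon_m}_s\ket{\epsilon_k}_b=\omega_{m,k}\ket{\epsilon_k}_s\ket{\epsilon_m}_b\): precisely the SWAP composed with a unitary diagonal in the energy product basis, i.e.\ SWAP up to a local energy-preserving redefinition of the two registers. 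Conversely, when \(H_s=H_b\) the SWAP manifestly commutes with \(H_s\otimes I+I\otimes H_b\) and satisfies \(\Tr_b[\mathrm{SWAP}(\rho_s\otimes\tau^b_\beta)\mathrm{SWAP}^\dagger]=\tau^s_\beta\) for every \(\rho_s\), so it does implement a thermal map — which closes both directions.

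The step I expect to be the main obstacle is justifying the applicability of Lemma \ref{l3}: one has to extract that \emph{both} \(s\) and \(R\) emerge thermal — the former from the defining property of a thermal map, the latter from the passivity of the reference in the dilation (\ref{e4}) — and then use non-degeneracy to identify their eigenbases with the energy eigenbasis; after that, the Schmidt-matching and the energy-balance step are essentially forced. A minor caveat I would state explicitly is that the residual freedom \(\{\omega_{m,k}\}\) is, strictly speaking, a phase unitary diagonal in the energy eigenbasis of \(\mathcal{H}_s\otimes\mathcal{H}_b\): it is energy-preserving, and it is genuinely a product of local unitaries exactly when those phases factorize.
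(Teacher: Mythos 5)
Your proposal is correct and follows essentially the same route as the paper's proof: form the tripartite pure dilation on an energy eigenstate input, observe that the $s$ and $R$ marginals are both thermal (hence iso-spectral with non-degenerate spectrum), invoke Lemma \ref{l3} to get the canonical form, match it against the direct expansion over $\{\ket{\epsilon_k}_R\}$, and let energy conservation plus non-degeneracy force $\ket{\chi^{(m)}_k}_b\propto\ket{\epsilon_m}_b$. Your energy-balance argument for pinning down $\ket{\chi^{(m)}_k}_b$ as an exact $H_b$-eigenstate is in fact slightly cleaner than the paper's orthonormal-sub-basis dichotomy, and your explicit converse and the caveat about the residual diagonal phase unitary are harmless additions.
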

  \begin{proof}
      A Thermal map,being a thermal operation, possesses in general a Stinespring representation of the form
      \begin{align}\label{a11}
          \nonumber\Lambda_{\beta}(\rho_s)=\Tr_b[U_{sb}&(\rho_s\otimes\tau_{\beta}^b)U_{sb}^{\dagger}]=\Tr_b[\sigma_{sb}^{\rho}]=\tau_{\beta}^s,\\\text{where, }[U_{sb},H_{tot}]&=0,~H_{tot}=H_s\otimes\mathbb{I}_b+\mathbb{I}_s\otimes H_b.
      \end{align}
      Now consider \(\rho_s=\ketbra{\psi}{\psi}_s\) a pure quantum battery and using the purification of the thermal bath \(\tau_{\beta}\) (as in Eq. (\ref{a3})), the action of \(U_{sb}\) produces a tripartite pure state \(\ket{\phi_{\psi}^{\beta}}_{sbR}\).

      The above thermal map definition for the QB, along with the non-signaling constraint implies
      \begin{align*}
          \sigma_s^{(\psi,\beta)}&=\Tr_{bR}[\ketbra{\phi_{\psi}^{\beta}}{\phi_{\psi}^{\beta}}_{sbR}]=\tau_{\beta}^s\\\text{and } \sigma_R^{\beta}&=\Tr_{sb}[\ketbra{\phi_{\psi}^{\beta}}{\phi_{\psi}^{\beta}}_{sbR}]=\tau_{\beta}^R.
      \end{align*}
      The condition \(H_s=H_b=\sum_kE_k\ketbra{E_k}{E_k}\) then implies, \[\tau_{\beta}^s=\tau_{\beta}^b=\tau_{\beta}^R=\frac1Z\sum_k\exp(-\beta E_k)\ketbra{E_k}{E_k}.\] Now, using Lemma \ref{l3}, we have the following canonical form 
      \begin{align}\label{a12}
         \ket{\phi_{\psi}^{\beta}}_{sbR}= \frac{1}{\sqrt{Z}}\sum_k\exp(-\frac{\beta E_k}{2})\ket{E_k}_s\otimes\ket{\xi_k^{\psi}}_b\otimes\ket{E_k}_R,
      \end{align}
   where \(\langle\xi_k^{\psi}|\xi_l^{\psi}\rangle=\delta_{k,l}\) whenever \(\ket{\xi_k}\nsim\ket{\xi_l}\) for every pure quantum battery \(\ket{\psi}_s\). 
   
   Now, consider \(\ket{\psi}_s=\ket{E_m}_s\), an arbitrary energy eigen state. Then using Eq. (\ref{a12}), we can conclude that the action of energy-preserving \(U_{sb}\) produces
   \begin{align}\label{a13}
U_{sb}\ket{E_m}_s\otimes\ket{E_k}_b=\ket{E_k}_s\otimes\ket{\xi_k^{E_m}}_b,~\forall \ket{E_k}_b,
   \end{align}
   where, \(\forall k,~ E(\ketbra{\xi_k^{E_m}}{\xi_k^{E_m}})=E_m\). Notably, Eq. (\ref{a12}) also implies that either, \(\{\ket{\xi_k^{E_m}}\}_k\) forms an orthonormal sub-basis or, \(\forall k, \ket{\xi_k^{E_m}}=e^{i\theta_k} \ket{E_m},\) up to energy preserving unitary. But the complete non-degeneracy of the Hamiltonian admits that the first one is impossible. Therefore, Eq. (\ref{a13}) must admit a particular form:
\[U_{sb}\ket{E_m}_s\otimes\ket{E_k}_b=e^{i\theta_k}\ket{E_k}_s\otimes\ket{E_m}_b,\forall\ket{E_k}_b,~\ket{E_m}_s.\]
   Using this in Eq. (\ref{a12}), we get
   \[\ket{\phi_{\psi}^{\beta}}_{sbR}= (\frac{1}{\sqrt{Z}}\sum_k\exp(-\frac{\beta E_k}{2}+i\theta_k)\ket{E_k}_s\otimes\ket{E_k}_R)\otimes\ket{\psi}_b.\]
   This proves the claim. 
   \end{proof}
  With Lemma \ref{l4}, it is now trivial to argue the first part of Theorem \ref{t3}. 
Note that after the action of the thermal map the initial state goes to
  \[\rho_s\otimes\ketbra{\phi^+_{\beta}}{\phi^+_{\beta}}_{bR} \mapsto \rho_b\otimes\ketbra{\phi^+_{\beta}}{\phi^+_{\beta}}_{sR}.\]
  So any measurement on the bath qudit reveals no information about the post measurement state of the QB. Therefore,
  \begin{align*}  
 W_{weak}(\rho_s,\Lambda_{\beta})&\leq E(\tau_{\beta}^s)-\frac 1{\beta} E_f(\ketbra{\phi^+_{\beta}}{\phi^+_{\beta}})\\&=E(\tau_{\beta}^s)-\frac 1{\beta}S(\tau_{\beta}^s)\\&=F_{\beta}(\tau_{\beta})=0 \text{ (by rescaling) } 
 \end{align*}
 Since the restored amount of charge for our analysis is conventionally non-negative, we have \(W_{weak}(\rho_s,\Lambda_{\beta})=0\) for every thermal map \(\Lambda_{\beta}\).
 
 On the other hand, for every pure input battery , the final state between QB and the reference \(R\) is pure entangled, which implies \(E_f(\sigma_{sR'})=0\). From Eq. (\ref{a8.5}), we can now immediately conclude
 \[W_{strong}(\rho_s,\Lambda_{\beta})=E(\tau_{\beta}).\]
 In fact, the optimal measurement to restore this amount of charge is independent of the rank-one POVM performed on the reference system \(R\).
 \subsubsection{Proof of part (ii)}
 When the thermal environment is of absolute zero temperature, that is, \(\beta\to \infty\), then using Eq. (\ref{e3}) the purified environment is simply \(\ket{\phi^+_{\infty}}_{bR}=\ket{0}_b\otimes\ket{0}_R\), by rescaling \(E_0=0\). Note that, the complete non-degeneracy of \(H_s=H_b\) indeed confirms that the purified environment must be of the product form. 
 
 Then for any unitary \(U_{sb}\) on the QB-bath qudit joint state, the reference system \(R\) remains completely uncorrelated due to the data processing inequality.

 Therefore, it is now trivial to argue that \(W_{weak}(\rho_s,\Lambda_{\infty})=W_{strong}(\rho_s,\Lambda_{\infty})\). In fact, from Eq. (\ref{a7}) one can estimate the amount of retrieved charge 
 \begin{align*}
 W_{weak}(\rho_s,\Lambda_{\infty})=W_{strong}(\rho_s,\Lambda_{\infty})=E(\sigma_s)-\frac1{\beta}E_f(\sigma_{s|R'}),%\\&=E(\sigma_s)-\frac 1{\beta} S(\sigma_s),
 \end{align*} 
 where \(\sigma_s=\Lambda_{\infty}(\rho_s)\). Hence,
 \[\Delta(\rho_s,\Lambda_{\infty})=0.\] 
 Moreover, when \(\rho_s\) is pure, we have \(E_f(\sigma_{s|R'})=0\) and hence the retrieved charge saturates the last three inequalities of Proposition \ref{p1}, independent of the rank one POVM performed on the bath qudit.

\subsection{Qubit Battery: Form of Thermal Operations}\label{q2b}
%\subsubsection{Form of Qubit Thermal Operation}
We will now consider a simple example where both the QB and the interacting thermal environment can be modeled as two-level quantum systems and governed by the same Hamiltonian. Without loss of  generality, here we assume \(H_s=H_b=\ketbra{1}{1}\). Also note that similar to Eq. (\ref{e3}), the bipartite purification of a thermal qubit \(\tau_{\beta}^b\) can be written as
\begin{equation}\label{e9}
    \ket{\phi^+_{\beta}}_{bR}=\frac 1{\sqrt{Z}}(\ket{0}_{b}\otimes\ket{0}_{R}+e^{-\frac{\beta}2}\ket{1}_b\otimes\ket{1}_{R}),
\end{equation}
where, \(Z=1+e^{-\beta}\) is the partition function. This, together with the Hamiltonian description, lead to the following Proposition.
\begin{proposition}\label{p2}
Any qubit thermal operation \(\Lambda_{\beta}\) admits the following isometric extension:
\begin{align}\label{e10}
    \mathcal{V}_{\Lambda_{\beta}}:&\begin{cases}\ket{0}_s\to\frac 1{\sqrt{Z}}(\ket{0_s0_b0_R}+e^{-\frac{\beta}2}\ket{\psi^+_{\alpha}}_{sb}\ket{1}_{R})=:\ket{\phi_0^{\beta}}\\
    \ket{1}_s\to\frac 1{\sqrt{Z}}(\ket{\psi^-_{\alpha}}_{sb}\ket{0}_{R}+e^{-\frac{\beta}2}\ket{1_s1_b1_R})=:\ket{\phi_1^{\beta}}
    \end{cases}
\\\nonumber\text{where, }&\ket{\psi^+_{\alpha}}=\alpha\ket{01}+\gamma\ket{10},\text{ and}
\\\nonumber&\ket{\psi^-_{\alpha}}=e^{i\phi}(\gamma^*\ket{01}-\alpha^*\ket{10})\\\nonumber&\text{ with }|\alpha|^2+|\gamma|^2=1\text{ and }\phi\in[0,2\pi).
\end{align}
\end{proposition}
\begin{proof}
The proof is intuitive. 

Using Eq. (\ref{e4}), we can decompose the battery-bath-reference joint unitary
\(U_{sbR}=U_{sb}\otimes\mathbb{I}_R\),
where, \(U_{sb}\) must preserve the total energy of system-bath joint state. 

With the assumption \(H_s=H_b=\ketbra{1}{1}\), we can immediately identify \(\ket{0}_s\otimes\ket{0}_b\) and \(\ket{1}_s\otimes\ket{1}_b\), respectively as the minimal and maximal energetic battery-bath joint states, which are also non-degenrate in nature. This implies,
\begin{align*}
    U_{sb}\ket{k}_s\otimes\ket{k}_b=\ket{k}_s\otimes\ket{k}_b,\quad \text{for }k\in\{0,1\}
\end{align*}
Also note that the subspace spanned by \(\{\ket{0}_s\otimes\ket{1}_b,\ket{1}_s\otimes\ket{0}_b\}\) is completely degenerate with respect to the total Hamiltonian \(H_{sb}=H_s\otimes \mathbb{I}+\mathbb{I}\otimes H_b\). Moreover, this is the only subspace orthogonal to both \(\{\ket{0}_s\otimes\ket{0}_b\) and \(\ket{1}_s\otimes\ket{1}_b\}\). Hence, the unitary action \(U_{sb}\) maps the system-bath joint states \(\{\ket{0}_s\otimes\ket{1}_b\) and \(\ket{1}_s\otimes\ket{0}_b\}\) to any arbitrary pair of orthogonal pure states in the same subspace. Precisely,
\begin{align*}
    U_{sb}\ket{0}_s\otimes\ket{1}_b&=\ket{\psi_{\alpha}^+}_{sb}=(\alpha \ket{01}+\gamma \ket{10})_{sb}\\\text{and } U_{sb}\ket{1}_s\otimes\ket{0}_b&=\ket{\psi_{\alpha}^-}_{sb}=e^{i\phi}(\gamma^* \ket{01}-\alpha^* \ket{10})_{sb},
\end{align*}
where, \(|\alpha|^2+|\gamma|^2=1\) and \(\phi\in[0,2\pi)\). 

Finally, using the bath-reference joint state as in Eq. (\ref{e9}), it is easy to see
\begin{align*}
U_{sb}\otimes\mathbb{I}_R(\ket{0}_s\otimes\ket{\phi_{\beta}^+}_{bR})=&(\frac 1{\sqrt{Z}}(\ket{000}+e^{-\frac{\beta}2}\ket{\psi_{\alpha}^+}\ket{1})_{sbR}\\
U_{sb}\otimes\mathbb{I}_R(\ket{1}_s\otimes\ket{\phi_{\beta}^+}_{bR})=&(\frac 1{\sqrt{Z}}(\ket{\psi_{\alpha}^-}\ket{0}+e^{-\frac{\beta}2}\ket{111}))_{sbR}.
\end{align*}
This identifies the form of isometry in Eq. (\ref{e10}).
\end{proof}
\subsection{Illustrative Examples}\label{exmpl}
In the following, we will discuss few of the illustrative examples with qubit batteries.
\subsubsection{Case. 1}
 Let us consider, \(\alpha=\gamma=\frac1{\sqrt{2}}\) with \(\phi=0\) for \(\ket{\psi^{\pm}_{\alpha}}\) in Eq. (\ref{e10}) for a \(\beta=1\) thermal operation. Then, for an initial QB \(\rho_s=p\ketbra{0}{0}+(1-p)\ketbra{1}{1}\), the system-bath-reference state takes the form
\begin{align}\label{e11}
   \nonumber \sigma_{sbR}&= p\ketbra{\phi_0^1}{\phi_0^1}+(1-p)\ketbra{\phi_1^1}{\phi_1^1}\\\nonumber\text{where, }\ket{\phi_0^1}&=\sqrt{\frac 1{1+e}}(\sqrt{e}\ket{000}_{sbR}+\ket{\psi^+}_{sb}\ket{1}_R)\\\text{and }\ket{\phi_1^1}&=\sqrt{\frac 1{1+e}}(\sqrt{e}\ket{\psi^-}_{sb}\ket{0}_R+\ket{111}_{sbR}).
\end{align}
With further numerical exercise one can show that for the system-reference two qubit state \(\sigma_{sR}=\Tr_b(\sigma_{sbR})\),
\begin{align*}
     E_f(\sigma_{sR})=H(\frac{1+\sqrt{1-C^2(\sigma_{s|R})}}2),\end{align*}
     where, \(H(x)=-x\log x-(1-x)\log (1-x)\) is the Shannon entropy, with base \(e\) and the concurrence
     \begin{align*}
    C(\sigma_{s|R})&=\frac{\sqrt{e}}{2(1+e)}(\sqrt{2+x_p+2\sqrt{2x_p}}\\&-\sqrt{2+x_p-2\sqrt{2x_p}}-2\sqrt{p(1-p)}),
\end{align*}
where, \(x_p=\sqrt{2+p-p^2}\). Accordingly, one can then estimate the bound on the weakly retrieved charges in the qubit battery. The strictly positive nature of \(C(\sigma_{s|R})\) then concludes that the for every \(\rho_s\) with \(p\in(0,1)\), no POVM on the bath qubit is able to retrieve the lost charges back optimally. However, for the zero energetic and the fully charged QB (\(\ket{0}\) and \(\ket{1}\) respectively), simple projective measurements on the bath qubit results in optimal retrieval of the lost charge (see Fig \ref{f2} (a)). 
\begin{figure}
    \centering
    \includegraphics[width=1.0\linewidth]{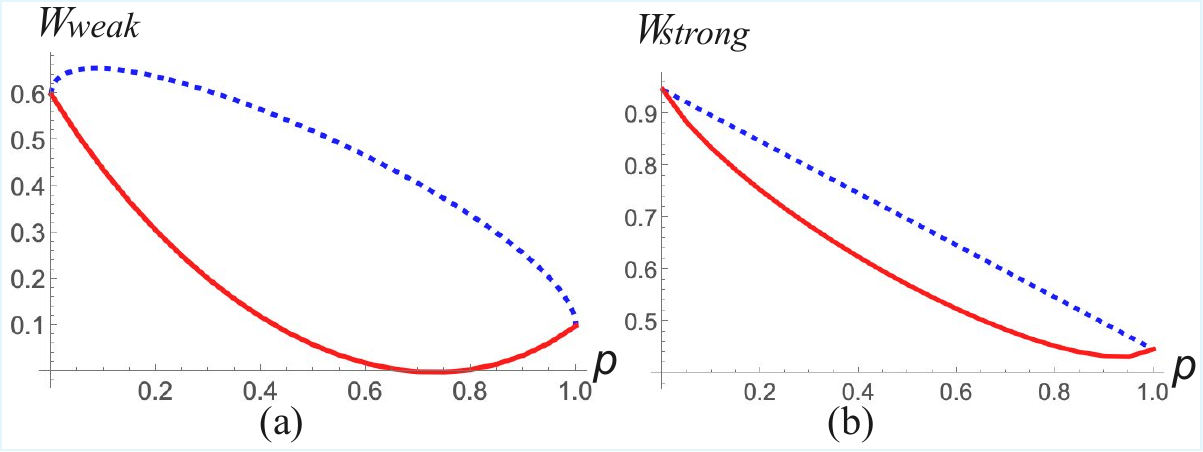}
    \caption{(Color online) \textit{Charge retrieval for QB diagonal in energy Eigen basis.} For the both cases, we have considered the initial QB as \(\rho_s=p\ketbra{0}{0}+(1-p)\ketbra{1}{1},~\beta=1\) and \(\phi=0,~\alpha=\gamma=\frac 1{\sqrt{2}}\).(a) Under weak assistance the \textcolor{blue}{blue} dotted curve denotes the upper bound of charge retrieval. The \textcolor{red}{red} solid curve denotes the amount of truly retrieved charge, optimized over all possible projective measurement performed on the bath qubit. (b) The same quantities are plotted considering the local projective measurements performed on the bath and the reference qubits.}
    \label{f2}
\end{figure}
Notably, both for \(W_{weak}\) and \(W_{strong}\) the origin of the difference between the optimal bounds and the respective quantities by allowing only projective measurements are two folds: While it depicts the necessity of many outcome extreme POVMs to increase the amount of retrieved charges, the optimal bounds are not always achievable. In fact, as depicted in the proof of Theorem \ref{t1} (Appendix \ref{pt1}) the exact amount of retrieved charges under the weak assistance can be quantified as,
\begin{align*}
    W_{weak}(\rho_s,\Lambda_{\beta})= E(\sigma_s)-\frac1{\beta}E_f(\sigma_{s|RR'}).
\end{align*}
Here, \(R'\) is an additional purifying system for the system-bath-reference state \(\sigma_{sbR}\), which is mixed in general. While, \(E_f(\sigma_{s|RR'})\) is general hard to compute, using the monogamy of squared concurrence \cite{coffman2000distributed}, we can write
\begin{align*}
    C(\sigma_{s|RR'})\geq \sqrt{C^2(\sigma_{s|R})+C^2(\sigma_{s|R'})}=:C^*(\sigma_{s|RR'})
\end{align*}
and accordingly we can derive a modified upper-bound as
\begin{align*}
    W_{weak}(\rho_s,\Lambda_{\beta=1})\leq E(\sigma_s)-E_f^*(\sigma_{s|RR'}),
\end{align*}
where \(E_f^*(\sigma_{s|RR'})=H(\frac{1+\sqrt{1-[C^*(\sigma_{s|RR'})]^2}}2)\), as mentioned earlier.

Similarly, for charge retrieval under the strong assistance the upper bound can be modified further by invoking the additional purifying reference \(R'\) (see the proof of Theorem \ref{t2} in Appendix \ref{pt2}), i.e., 
\begin{align*}
    W_{strong}(\rho_s,\Lambda_{\beta})\leq E(\sigma_s)-\frac 1{\beta}E_f(\sigma_{s|R'})\leq E(\sigma_s).
\end{align*}
However, limiting the assistance to be minimal, i.e., local POVMs on the bath and the reference system \(R\), forbids to prepare all possible decompositions of \(\sigma_{sR'}\) and hence to achieves the optimal bound. Interestingly, whenever the initial QB is a pure state the optimal value can be achieved just by performing the local projective measurements on the bath and the reference qubits (\(b\) and \(R\) respectively). 
\subsubsection{Case. 2}
Let us now consider another scenario with the same \(\beta=1, ~\alpha=\gamma=\frac1{\sqrt{2}}\), however with \(\phi=\pi\). Then for a qubit battery diagonal in the energy Eigen basis \(\rho_s=p\ketbra{0}{0}+(1-p)\ketbra{1}{1}\), the system-bath-reference qubit joint state takes the form
\begin{align}\label{a17}
    \nonumber\sigma_{sbR}&= p\ketbra{\phi_0^1}{\phi_0^1}+(1-p)\ketbra{\phi_1^1}{\phi_1^1}\\\nonumber\text{where, }\ket{\phi_0^1}&=\sqrt{\frac 1{1+e}}(\sqrt{e}\ket{000}_{sbR}+\ket{\psi^+}_{sb}\ket{1}_R)\\\text{and }\ket{\phi_1^1}&=\sqrt{\frac 1{1+e}}(-\sqrt{e}\ket{\psi^-}_{sb}\ket{0}_R+\ket{111}_{sbR}).
\end{align}
Using the same analysis as in \textit{Case. 1} above, we can estimate the amount of weakly and strongly assisted retrieved charge. Moreover, as discussed there, involving an additional system \(R'\),to purify the state \(\sigma_{sbR}\), significantly lowers the optimal bounds as proposed in Theorem \ref{t1} and \ref{t2}. The same is depicted in Fig. \ref{f2}.
\begin{figure}[htb]
    \centering
    \includegraphics[width=1.0\linewidth]{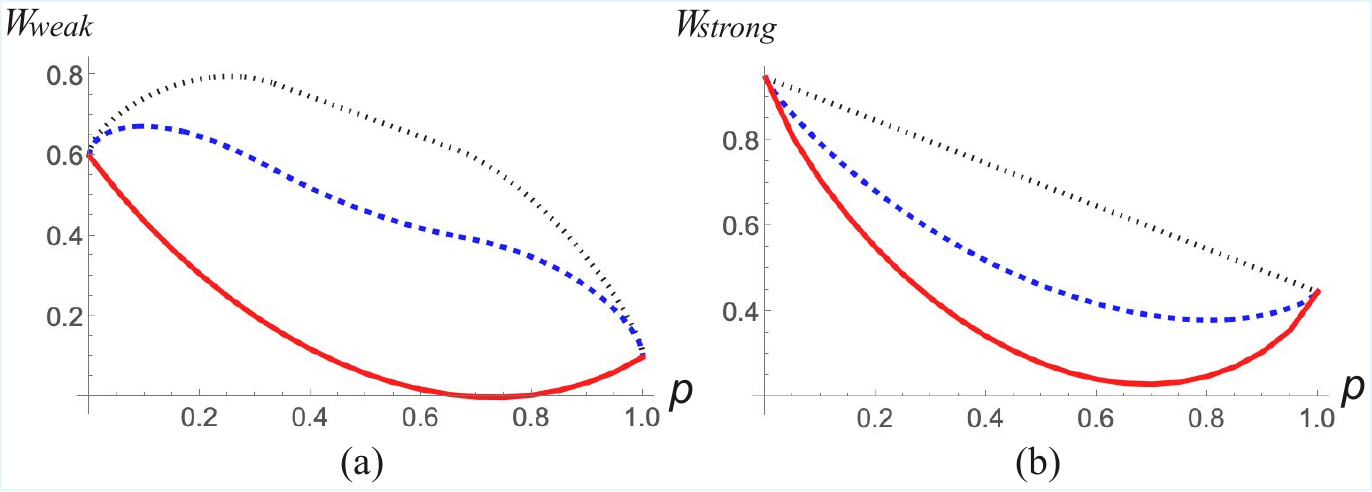}
    \caption{(Color online) \textit{Charge retrieval of QB diagonal in energy  eigenbasis.} The dotted black line denotes the obtained  upper bound from Theorem 1 and Theorem 2 respectively for the plots (a) and (b). The \textcolor{blue}{blue} dashed lines denote further modification on the optimal achievable values, involving an additional purifying system \(R'\). Finally, for (a), the \textcolor{red}{red} curve denotes the amount of retrieved charge by performing the projective measurement only on the bath qubit. For (b), the same \textcolor{red}{red} curve denotes the amount of retrieved charge on the QB by performing local projective measurements on the bath (\(b\)) and the reference qubit (\(R\)).}
    \label{f2}
\end{figure}
\end{document}